\documentclass[a4paper, 12 pt, twoside, reqno]{amsart}

\usepackage[english]{babel}
\usepackage[utf8]{inputenc}

\usepackage{setspace}
\usepackage[euler-digits,euler-hat-accent]{eulervm}

\usepackage{times}
\usepackage{amsmath}
\usepackage{amsfonts}
\usepackage{amssymb}
\usepackage{amsmath}
\usepackage{amsthm}
\usepackage{graphicx}
\usepackage{array}
\usepackage{color}
\usepackage{mathrsfs}
\usepackage{hyperref}
\usepackage{eucal}
\usepackage{esint} 
\usepackage{tikz}
\usepackage{upgreek}
\usepackage{enumitem}

\usepackage{tikz-cd}

\allowdisplaybreaks

\theoremstyle{plain}
\newtheorem{theorem}{Theorem}[section]
\newtheorem{corollary}[theorem]{Corollary}
\newtheorem{proposition}[theorem]{Proposition}
\newtheorem{lemma}[theorem]{Lemma}

\theoremstyle{definition}
\newtheorem{definition}[theorem]{Definition}

\theoremstyle{remark}
\newtheorem{remark}[theorem]{Remark}

\numberwithin{equation}{section}
\numberwithin{figure}{section}
\numberwithin{table}{section}

\newcommand{\R}{\mathbb{R}}
\newcommand{\N}{\mathbb{N}}
\newcommand{\C}{\mathbb{C}}                           

\newcommand{\Z}{\mathbb{Z}}

\newcommand{\s}[1]{\CMcal{#1}}
                  
\newcommand{\bb}[1]{\mathscr{#1}}
\newcommand{\rr}[1]{\mathfrak{#1}}
\newcommand{\n}[1]{\mathbb{#1}}

\newcommand{\expo}[1]{\,\mathrm{e}^{#1}\,}

\newcommand{\dd}{\,\mathrm{d}}
\newcommand{ \ii}{\,\mathrm{i}\,}

\newcommand{\virg}[1]{\lq\lq#1\rq\rq}                \newcommand{\ie}{\textsl{i.\,e.\,}}
\newcommand{\eg}{\textsl{e.\,g.\,}}
\newcommand{\cf}{\textsl{cf}.\,}
\newcommand{\etc}{\textsl{etc}.\,}

\newlength{\dhatheight}

\hypersetup{
pdftoolbar=true,        
pdfmenubar=true,        
pdffitwindow=true,     
pdfstartview=true,    
pdftitle={Topological Phases of the Weyl C*-algebra},    
pdfauthor={Giuseppe De Nittis, Santiago G. Rendel},     
breaklinks=true, 
colorlinks=true,       
linkcolor=purple,         
citecolor=teal, 
urlcolor=blue, 
bookmarksopen=true, 
filecolor=magenta,      
}

\begin{document}


\title[A scheme for topological phases of the Weyl \texorpdfstring{$C^*$}{C*}-algebra]{A scheme for topological phases of the Weyl \texorpdfstring{$C^*$}{C*}-algebra}


\author[G. De~Nittis]{Giuseppe De Nittis}

\address[G. De~Nittis]{Facultad de Matemáticas \& Instituto de Física,
  Pontificia Universidad Católica de Chile,
  Santiago, Chile.}
\email{gidenittis@uc.cl}

\author[S.~G. Rendel]{Santiago G. Rendel}

\address[S.~G. Rendel]{Department of Mathematics, KU Leuven, Leuven, Belgium.}
\email{santiago.g.rendel@kuleuven.be}

\vspace{2mm}

\date{\today}

\begin{abstract}
In this work, we introduce a classification scheme for topological phases of matter based on the topology of the space of pure states of a model $C^*$-algebra. Under it, topological phases are described by homotopy classes of sections of certain fiber bundles of (pure) states. Applying this classification procedure on states of the Weyl $C^*$-algebra that are invariant under translations by a lattice, we recover the $K$-theoretic classification of gapped spectral projectors for topological insulators of types A and AI, thus essentially generalizing this notion.

\medskip

\noindent
{\bf MSC 2020}:
Primary: 	81R15;
Secondary: 	 46L30, 81P16, 46L80.\\
\noindent
{\bf Keywords}:
{\it Topological phases, $C^*$-bundles, Weyl $C^*$-algebra, invariant states, $K$-theory.}
\end{abstract}

\maketitle

\tableofcontents

\section{Introduction}\label{sec:Intr0}

The exploration and classification of topological phases of matter have become a vibrant area of research within mathematical physics, particularly due to their deep interplay between geometry, topology, and concrete problems in condensed matter physics. The story begins with the discovery of topological features in the quantum Hall effect (QHE) in the 1982 work \cite{thouless-kohmoto-nightingale-nijs-82}. These topological properties corresponded to topologically protected phases, and are the defining property of the systems now known as \emph{topological insulators} (TIs). 
The first unified framework for classifying these topological phases was Kitaev’s visionary proposal in \cite{kitaev-09} using $K$-theory, now widely known as Kitaev’s periodic table. Since then, $K$-theory has become a cornerstone in the classification of TIs, with its techniques extensively developed and generalized \cite{prodan-schulz-baldes-book}.

\medskip

However, one major shortcoming of Kitaev’s periodic table is its confinement to non-interacting systems. This limitation becomes evident in light of rigorous advancements establishing the QHE in interacting fermionic systems \cite{jaksic-ogata-pillet-06,hastings-michalakis-15,giuliani-mastropietro-porta-17,monaco-teufel-19,bachmann-bols-roeck-fraas-20}. These developments highlight the need for a broader framework, capable of capturing topological phases beyond the non-interacting case. The challenge lies in the fact that $K$-theory is ultimately used as a classification scheme for \emph{gapped spectral projectors} of Hamiltonians; objects that may not be well-defined or suitable in the presence of interactions. In extended systems, where the spectral projectors picture becomes murkier, the concept of a state provides a more natural foundation. This shift of perspective from classifying projectors to classifying states demands the development of new mathematical structures and equivalence principles. While there has been recent progress in this direction \cite{bhmpqs-24,spiegel-pflaum-25,artymowicz-kapustin-24}, a fully-fledged theory is still under construction. Many key open questions remain, and in this work we will focus on advancing the answer to a few of them, such as how to define equivalence of states appropriately, what classes of states admit a topological classification, and whether traces of $K$-theory can be recovered in a state-based framework.

\medskip

Our notion of topological phases based on states can be roughly summarized as follows. First, let $\bb{A}$ be a $C^*$-algebra containing the relevant physical observables, and consider a compact, Hausdorff space $X$, which we interpret as a \emph{quantum parameter space}. In many application the space $X$ emerges as (a compactification of) the spectrum of a symmetry group $\n{G}$ of the system. In this case the points of $X$ are interpreted as physically relevant quantum numbers associated with the symmetry. A typical example is provided by the group of \emph{spatial translations} with \emph{momentum} as the associated quantum number.
Let $\s{P}_\bb{A}$ the space of pure states of $\bb{A}$ topologized with the weak-$\ast$ topology, and $\s{Q}_\bb{A}\subseteq\s{P}_\bb{A}$ a relevant subspace, called the \emph{sample (state) space}, or \emph{state type}, defined by certain properties like the invariance under the action of a symmetry group $\n{G}$, or given structural conditions (\eg regularity, normality, complete factorization, \etc). 
A continuous function $X\to \s{Q}_\bb{A}$ is interpreted as a \emph{configuration} (of sample states) of the system. 
In many relevant situations, the sample space $\s{Q}_\bb{A}$ can show the structure of a \emph{fiber bundle} over $X$. In this case, one says that a configuration $F:X\to\s{Q}_\bb{A}$ is \emph{localizable} if it corresponds to a  {continuous section} of $\s{Q}_\bb{A}\to X$.
Let ${\rm Sec}(X,\s{Q}_\bb{A})$ be the space of such sections.
The topological phases of the system are then defined as the homotopy classes 
\begin{equation}\label{eq:itro_1}
\Omega(\s{Q}_\bb{A})\;:=\;\big[{\rm Sec}(X,\s{Q}_\bb{A})\big]\;.
\end{equation}
The notion of homotopy used here is stronger than usual. Concretely, let $\s{F}\subseteq C(Y,Z)$ be some family of continuous functions $\s{F}\subseteq C(Y,Z)$ between topological spaces $Y,Z$. We say that two elements in $\s{F}$ are homotopic if there exists a (usual) homotopy $H:[0,1]\times Y\to Z$ joining them, such that $H(t,\cdot)\in \s{F}$ for all $t\in[0,1]$. We denote the equivalence classes of $\s{F}$ under this notion of homotopy by $\left[\s{F}\right]$. Therefore, Equation \ref{eq:itro_1} means that two sections correspond to the same topological phase of the system if and only if they can be deformed one into the other via a homotopy $H:[0,1]\times X\to \s{Q}_\bb{A}$ such that $H(t,\cdot)$ is itself a section of $\s{Q}_\bb{A}$ for each $t\in[0,1]$. Additionally, we use the standard notation $[Y,Z]$ to mean homotopy classes of continuous maps from $Y$ to $Z$, and note that in fact $[Y,Z] := \left[ C(Y,Z) \right]$.

\medskip

In \cite{denittis-25} it was shown that this notion of topological phases can recover many of the known results about the classification 
of topological phases of non-interacting systems. This work will expand the scope for the application of this classification strategy. The key idea of the proposed framework is to relate the classification of (physical) topological phases with homotopy classes of sections of states, a point of view closely related in spirit with \emph{Kitaev's conjecture}, see \cite{bhmpqs-24,kubota-25} and the references therein. This now begs the question: When exactly can one apply this classification procedure? At the very least, the sample space of interest must have a fiber bundle structure to be able to define sections. Also, since topological phases typically appear in systems with symmetries, one should also expect the sample space to be invariant under the relevant symmetry group as well. The upcoming work \cite{rendel-25} will deal with answering this question in great generality, while in this article, we aim to apply the classification scheme in a specific physically relevant case: the \emph{Weyl} $C^*$-\emph{algebra}. 

\medskip

As such, it is useful to outline the general strategy that will be used in the following as a recipe.
It consists of the following steps:
\begin{itemize}
\item[i)] Fix the (unital) $C^*$-algebra of observables $\bb{W}$, and the abelian (topological) group $\n{G}$ of symmetries to consider.\vspace{1mm}
\item[ii)] Define the relevant sample space $\s{Q}_\bb{W}^\n{G}\subseteq \s{P}_\bb{W}$ related to the given symmetries $\n{G}$ of $\bb{W}$.
\vspace{1mm}
\item[iii)] Prove that the $C^*$-subalgebra invariant under the symmetries $\bb{V}_\n{G}$ is a $C^*$-bundle with constant fiber $\bb{O}$ over a compact space $X$ (the quantum parameter space).
\vspace{1mm}
\item[iv)] Show that the space of pure states $\s{P}_{\bb{V}_\n{G}}$ of $\bb{V}_\n{G}$ has the structure of a fiber bundle over $X$ with constant fiber $\s{P}_\bb{O}$. 
\vspace{1mm}
\item[v)] Observing that the restriction of pure invariant states of $\bb{W}$ yields pure states of $\bb{V}_\n{G}$, infer that
the sample space $\s{Q}_\bb{W}^\n{G}$ is the total space of a fiber bundle over $X$.
\vspace{1mm}
\item[vi)] Apply the classification procedure as in \eqref{eq:itro_1}.
\end{itemize}

\begin{remark}[Beyond pure states]
As will be discussed in Section \ref{sec:degeneracy}, one may also study topological phases associated with sample spaces of states that are no longer pure. We will study two cases: When we allow a fixed finite degeneracy for the states, we again obtain interesting topological phases. On the other hand, if we allow any possibly infinite degeneracy, which should be interpreted as having systems with non-zero temperature, then the topological phases become trivial.
\hfill $\blacktriangleleft$
\end{remark} 

\begin{remark}[A structural implication]
In \cite{rendel-25}, it will be proven that, even in some more general settings, step (iv), and in turn step (v), always follow if one has proven step (iii), so these steps become \virg{automatic}. However, in this work we will not use this general result and prove directly that (iv) and (v) hold in the cases of interest. 
\hfill $\blacktriangleleft$
\end{remark}

\begin{remark}[Quantum parameter space and group action]
As mentioned before, $X$ can often take the form of a compactification of the dual group of $\n{G}$. However, not any compactification of the dual group can take this role in a given algebra: It is not true in general that for any compactification $Y$ of $\widehat{\n{G}}$, the fixed-point subalgebra can be given the structure of a $C(Y)$-algebra. On the other hand, if the action is inner and $\n{G}$ is discrete, then the map $g\mapsto u_g$, with $u_g\in \bb{U}(\bb{W})$ implementing the action by $g\in \n{G}$, extends to a structure homomorphism $C(\widehat{\n{G}})\simeq C^*(\n{G}) \hookrightarrow \bb{Z}(\bb{V}_\n{G})$ into the center of the invariant subalgebra.
This discussion will also be covered in more detail in \cite{rendel-25}.
\hfill $\blacktriangleleft$
\end{remark}

\medskip

Concretely, the goal of this paper is to apply this program on the three sample spaces of interest covered in our previous work \cite{denittis-rendel-25}. First of all we consider  pure states  of the Weyl algebra invariant under continuous translations $\s{P}^{\tau}_\bb{W}$, with a special emphasis in the subspace of semi-regular states $\s{P}^{\tau,\beta}_\bb{W}$ (plane wave states). The second sample space $\s{P}^{\Gamma,\beta}_\bb{W}$ is made by pure, semi regular states of the Weyl algebra
invariant under translations by the lattice $\Gamma$ (Bloch wave states), $\s{P}^{\Gamma,\beta}_\bb{W}$. Finally, we will  consider the space $\s{P}_\bb{W}^{\rm Z}$ of pure states invariant under both spatial translations by $\Gamma$ and momentum translations by its dual lattice $\Gamma'$ (Zak states). These families of states were originally described as sets in \cite{beaume-manuceau-pellet-sirugue-74}. All of these classes of states are trivial fiber bundles over the spectrum of their symmetry groups.
The most interesting among them are Bloch wave states, which as we prove in \cite[Theorem 2.5.1]{denittis-rendel-25}, take the form of a fiber bundle over the Brillouin zone $\n{B}_\Gamma \simeq \n{T}^d$ with typical fiber ${^{\rm w}}\bb{G}_1$, the Hilbert Grassmannian of rank-$1$ orthogonal projections on a separable Hilbert space equipped with the weak operator topology.

\medskip

In this spin-less case, we obtain no non-trivial phases associated to states that are invariant under all translations, but upon including spin into our $C^*$-algebra of observables (by tensoring with a \virg{local} $C^*$-algebra encoding spin), we observe the expected topological phases for a homogenous system with spin. On the other hand, we recover the usual topological classification for topological insulators of types A and AI when studying the configurations associated to Bloch wave states. The classification of topological insulators of type A (Theorems \ref{theo_main_top} and \ref{theo_main_top-2}) is obtained from classifying the physically relevant configurations of the sample space $\s{P}^{\Gamma,\beta}_\bb{W}$, corresponding to its continuous sections $\n{B}_\Gamma\to \s{P}^{\Gamma,\beta}_\bb{W}$. On the other hand, the classification for topological insulators of type AI (Theorem \ref{theo_main_top-3}) is obtained by restricting to sections defining time-reversal invariant states. Finally, we show that Zak states do not naturally exhibit non-trivial topological phases, but by reducing the quantum parameter space, some \virg{spurious} topological phases arise.

 \medskip
 \noindent
{\bf Structure of the paper.}
{\bf Section~\ref{sec:obs_alg_sts2}}
provides an exposition of the setting on which we study the equivalence of states, namely the Weyl $C^*$-algebra, its main subalgebra of observables and its $C^*$-bundle structure, and consequences of this structure on the relevant families of states. 
{\bf Section~\ref{sec:classif_sts_tras2}} deals with the topology of configurations of states that are invariant under all translations, showing that their associated topological phases are trivial in the spin-less case, and that the expected non-trivial phases arise under the addition of spin. 
{\bf Section~\ref{sec:classif_sts_latt2}} is concerned with the topological classification of configurations of semi-regular states invariant under discrete translations. Theorem \ref{theo_main_top} shows that our classification scheme recovers the $K$-theoretic classification of gapped spectral projectors for TIs of type A in the class of \emph{Bloch-wave states}. 
In {\bf Section \ref{app-zac2}} we briefly study Zak states as an additional class of interesting states, also showing their topological triviality.
Finally, in {\bf Section \ref{sec:degeneracy}}, we consider topological phases obtained by allowing sample spaces of non-pure states, complementing the results of Section~\ref{sec:classif_sts_latt2}. Theorem \ref{theo_main_top-2} again recovers the classification of gapped spectral projectors for TIs of type A, now in slightly greater generality, and \ref{theo_main_top-3} does the same for TIs of type AI. We then proceed to show that thermal states present no non-trivial topological phases, as is expected.

\medskip

 \noindent
{\bf Acknowledgements.} SR's research has been supported by the European Research Council under the European Union's Horizon Europe research and innovation programme (ERC grant AMEN-101124789).
GD's research is supported by the grant \emph{Fondecyt Regular - 1230032}. 
{This research was partially supported by the University of Warsaw Thematic Research Programme ``Quantum Symmetries".}

\section{Observable algebras and sample state spaces}
\label{sec:obs_alg_sts2}

In this section, we will present the abstract construction for CCR $C^*$-algebras, which are generalizations of the Weyl $C^*$-algebra. Our exposition will be mainly based on the papers \cite{manuceau-sirugue-testard-verbeure-73,beaume-manuceau-pellet-sirugue-74}. The Weyl $C^*$-algebra  represents the smallest $C^*$-algebra containing the \emph{canonical commutation relations} (CCR) with respect to the phase space $\R^d\times\R^d$ endowed  with its usual symplectic structure. The abstract construction sketched here  generalized the usual construction to generic phase spaces. This fact is relevant to construct relevant $C^*$-subalgebras of the Weyl $C^*$-algebra $\bb{W}$.

\medskip

\subsection{Abstract CCR \texorpdfstring{$C^*$}{TEXT}-algebras}
Following \cite{beaume-manuceau-pellet-sirugue-74}, we introduce here the $C^*$-algebra encoding the general form of the CCR. 
Let $(\n{A},+)$ be an abelian group with neutral element $0$. Let us denote by $\n{S}^1$ the group of complex numbers of unit modulus. A \emph{bicharacter} is a function $\varpi:\n{A}\times\n{A} \to \n{S}^1$ such that 
\begin{align*}
    \varpi (z_1,z_2) \;&=\; \overline{\varpi(z_2,z_1)} \;,\\
    \varpi(z_1,z_2+z_3) \;&=\; \varpi (z_1,z_2) \varpi (z_1,z_3) \;,
\end{align*}
for all $z_1,z_2,z_3\in \n{A}$.
We will further assume the normalization $\varpi(z,z) = 1$ for each $z\in\n{A}$. 
The subgroup
\[
\n{A}_0 \;:=\; \left\{ z\in \n{A} \;\left|\; \varpi(z,z')^2 = 1\; ,\;\;\forall\; z'\in \n{A} \right\}\right.
\]
will be called the \emph{degeneracy group} of $(\n{A},\varpi)$. The bicharacter $\varpi$ is said to be non-degenerate if $\n{A}_0 = \{0\}$.

\medskip

Let $\Delta(\n{A},\varpi)$ be the $*$-algebra generated by non-zero elements $w_z$ with $z\in \n{A}$ satisfying the product law
\begin{equation}\label{eq:abs_ccr_rels_joint}
w_z w_{z'} \;=\; \varpi(z,z')\; w_{z+z'} \;,
\end{equation}
and endowed  with the $*$-involution
\[
w_z^* = w_{-z} \;.
\]
for every $z,z'\in\n{A}$.
We then take $\overline{\Delta(\n{A},\varpi)}$ to be its completion with respect to the minimal regular norm, as in \cite[Section 3]{denittis-rendel-25}, \cite{manuceau-sirugue-testard-verbeure-73}, or \cite{beaume-manuceau-pellet-sirugue-74}. The resulting $C^*$-algebra is simple if and only if $\n{A}_0 = \{0\}$ \cite[Corollary 4.24]{manuceau-sirugue-testard-verbeure-73}, and separable if and only if $\n{A}$ is countable \cite[Corollary 4.25]{manuceau-sirugue-testard-verbeure-73}.

\medskip

We will be mainly interested in the case in which $\n{A} = \n{G}\times \n{G}'$ with $\n{G}$ and $\n{G}'$ both abelian groups such that there exists a pairing $\langle\cdot,\cdot\rangle : \n{G}\times\n{G}' \to \R$ satisfying: i) for each $g\in \n{G}$ the map $\expo{\ii\langle g,\cdot\rangle}: \n{G}'\to \n{S}^1$ is a character of $\n{G}'$; ii) for each $g'\in \n{G}'$, the map $\expo{\ii\langle\cdot,g'\rangle}: \n{G}\to \n{S}^1$ is a character of $\n{G}$. In such a case we assume the map $\varpi$ is defined by
\begin{equation}\label{eq:varpi_dec_def}
    \varpi\big((g_1,g'_1), (g_2,g'_2) \big) \;:=\; \expo{  \frac{\ii}{2} \big(\langle g_1,g'_2\rangle - \langle g_2,g'_1\rangle  \big) } \;.
\end{equation}
In this case, there is an alternative presentation for the $C^*$-algebra $\overline{\Delta(\n{G}\times \n{G}',\varpi)}$. In fact, the latter is generated by  unitary elements $u_g := w_{(g,0)}$ and $v_{g'}:=w_{(0,h)}$, where $g\in \n{G}$ and $g'\in \n{G}'$. The product laws are
\begin{equation}\label{eq:abs_ccr_rels_separate}
    u_g v_{g'} \;=\; \expo{\ii \langle g,g'\rangle }v_{g'} u_g \;,\quad u_{g} u_{h} \; =\; u_{g + h}\;,\quad     v_{g'} v_{h'}\; =\; v_{g' + h'}\;
\end{equation}
and the $\ast$-involution acts as $u_g ^*=u_{-g}$, $v_{g'}^*=v_{-g'}$
The inverse relation on the generators is given by $w_{(g,g')} = \expo{-\frac{\ii}{2} \langle g,g'\rangle} u_g v_{g'}$. 
A typical  example of this construction is provided by $\n{G}$  a real topological vector space, $\n{G}'$   its topological dual, and $\langle\cdot,\cdot\rangle$ the evaluation pairing.

\subsection{Weyl \texorpdfstring{$C^*$}{TEXT}-algebra and invariant subalgebras}\label{sec:field_obs_loc_algs}
The ($d$-dimensional) \emph{Weyl $C^*$-algebra} is the CCR $C^*$-algebra defined by the choices 
$\n{G}=\n{G}'=\R^d$ with $d\in \N$ and $\langle\alpha,\beta \rangle  := \alpha\cdot\beta$ the usual scalar product for every $\alpha,\beta\in\R^d$. After fixing $\varpi$  by \eqref{eq:varpi_dec_def}, we will denote by
\begin{equation}\label{eq_W_pres}
\bb{W}\; :=\; \overline{\Delta(\R^{d}\times\R^d,\varpi)}
\end{equation}
the related (Weyl) $C^*$-algebra. It is known that $\bb{W}$ is a simple, non-separable $C^*$-algebra,  as a consequence of the degeneracy group being trivial and $\R^d$ being uncountable.

\medskip

Given $\lambda\in\R^d$, we define the \emph{spatial translation by $\lambda$} as the $\ast$-automorphism $\tau_\lambda$ of $\bb{W}$ defined by $\tau_\lambda(a) := v_\lambda a v_\lambda^*$, for $a\in\bb{W}$. The assignation $\lambda\mapsto \tau_\lambda$ is strongly continuous, so spatial translations correspond to a strongly continuous action of $\R^d$ on $\bb{W}$. 

\medskip

The \emph{fully-invariant} $C^*$-subalgebra $\bb{V}$ is made up of all elements of $\bb{W}$ invariant under all spatial translations. In other words, $\bb{V}$ corresponds to the fixed-point subalgebra of this $\R^d$-action. One has that (see \cite{denittis-rendel-25}, section 3.2)
\[
\bb{V} \;:=\; \left\{ a\in \bb{W} \;|\; \tau_\lambda(a) = a \;,\; \forall\; \lambda\in\R^d \right\} 
\;=\;
 C^*\!\left(v_\beta\;|\; \beta\in\R^d\right)
\]
This is clearly an abelian $C^*$-algebra, and from the characterization above one can deduce that
\[
\bb{V} \;\simeq\; C^*(\R^d_{\rm d}) \;\simeq\; C(\rr{b}(\R^d)) \;,
\]
where $\R^d_{\rm d}$ denotes the group $\R^d$ with the discrete topology, and $\rr{b}(\R^d)$ the Bohr compactification of $\R^d$.
Viewing $\bb{W}$ according to the presentation \eqref{eq_W_pres}, one infers that
\[
\bb{V} \;=\; \overline{\Delta(\{0\}\times\R^d,\varpi_\tau)}\;.
\]
with $\varpi_\tau := \varpi|_{\{0\}\times\R^d}$. Note that now $\varpi_\tau$ is degenerate, with degeneracy group
\[
(\{0\}\times\R^d)_0 \;=\; \{0\}\times\R^d \;.
\]

\medskip

Let $\Gamma\subset\R^d$ be a maximal lattice (discrete subgroup of maximal rank), \ie $\Gamma\simeq\Z^d$.
The \emph{$\Gamma$-invariant} $C^*$-subalgebra $\bb{V}_\Gamma\subset \bb{W}$ is composed by the elements that are invariant under the action of the subgroup $\Gamma\subset \R^d$ by spatial translations. 
Then one has that
\[
\bb{V}_\Gamma \;:=\; \left\{ a\in \bb{W} \;|\; \tau_\gamma(a) = a \;,\; \forall\; \gamma\in\Gamma \right\} 
\;=\;
 C^*\!\left(u_{\gamma'} v_\beta\;|\; \gamma'\in \Gamma', \beta\in\R^d\right) 
\]
where $\Gamma'$ is the \emph{dual lattice} of $\Gamma$. The latter is defined as the set of $\gamma'\in\R^d$ such that $\gamma\cdot\gamma'\in 2\pi\Z$ for every $\gamma\in\Gamma$. 
It turns out that $\Gamma'\simeq\Z^d$ is again a maximal lattice.
Again, viewing $\bb{W}$ according to its symplectic presentation \eqref{eq_W_pres}, one gets
\[
\bb{V}_\Gamma \;=\; \overline{\Delta(\Gamma'\times\R^d,\varpi_\Gamma)}\;.
\]
with $\varpi_\Gamma := \varpi|_{\Gamma'\times\R^d}$. Once again $\varpi_\Gamma$ is degenerate  with degeneracy group 
\[
(\Gamma'\times\R^d)_0 \;=\; \{0\}\times\Gamma \;.
\]
The center of $\bb{V}_\Gamma$ is given by
\[
\bb{Z}(\bb{V}_\Gamma) \;=\; C^*(v_\gamma \;|\;\gamma\in\Gamma) \;=\;  \overline{\Delta(\{0\}\times\Gamma,1)} \;\simeq\; C^*(\Gamma) \;\simeq\; C(\n{B}_\Gamma) \;
\]
where $\n{B}_\Gamma:=\R^d/\Gamma'\simeq\n{T}^d$ is called the \emph{Brillouin torus}, and we canonically identify it with $\widehat{\Gamma}$, the Pontryagin dual of ${\Gamma}$.
The first equality is an application of \cite[Theorem 4.2]{manuceau-sirugue-testard-verbeure-73}. The second one is a direct check. The first isomorphism follows by a natural  identification of generators. The last one is the Gelfand isomorphism.

\medskip

Just as done with spatial translations, we can define \emph{momentum translations}: Given $\eta\in\R^d$, the corresponding momentum translation is given by $\theta_\eta(a) := u_\eta a u_\eta^*$. 

\medskip

We may also define the joint action of $\Gamma\times\Gamma'$ on $\bb{W}$ as $(\gamma,\gamma')\mapsto \zeta_{\gamma,\gamma'}:=\theta_{\gamma'}\circ\tau_\gamma$. This action is well-defined as a group homomorphism since $u_{\gamma'}$ commutes with $v_\gamma$ for all $(\gamma,\gamma')\in\Gamma\times\Gamma'$. 
The subalgebra of elements of $\bb{W}$ invariant under this action is called the \emph{Zak algebra}, denoted $\bb{Z}_\Gamma$. It is characterized as follows:
\[
\bb{Z}_\Gamma \;:=\; \left\{ a\in \bb{W} \;|\; \zeta_{\gamma,\gamma'}(a) = a \;,\; \forall\; (\gamma,\gamma')\in\Gamma\times\Gamma' \right\} 
\;=\;
 C^*\!\left(u_{\gamma'} v_\gamma\;|\; (\gamma,\gamma')\in\Gamma\times\Gamma'\right) \;.
\]
Again, this is an abelian $C^*$-subalgebra of $\bb{W}$. Its symplectic form is obtained by restricting $\varpi$ to $\Gamma'\times\Gamma$, which again coincides with the associated degeneracy group.

\medskip

Finally, let us introduce an
additional $C^*$-algebra, which will correspond to a \virg{local algebra} of the system in the case of $\Gamma$-translations, in a sense that will be clarified in the following. 
Given the \emph{unit cell}
$\n{T}_\Gamma:=\R^d/\Gamma$, which
can be again identified with a $d$-dimensional torus,
let us define the map $\langle\cdot,\cdot\rangle_{\rm loc}:\Gamma'\times\n{T}_\Gamma \to \R$ via
\[
\langle\gamma', [\beta]\rangle_{\rm loc} \;:=\; \gamma'\cdot y_\beta \;, \qquad \forall\; \gamma'\in \Gamma'\;,\;\;\ [\beta]\in \n{T}_\Gamma
\]
where $y_\beta\in\R^d$ is the representative of $[\beta]$ in the fundamental cell $Q_\Gamma$ of $\Gamma$, and $\gamma_\beta:=\beta-y_\beta\in\Gamma$ is the lattice remainder.
Note that 
\[
\expo{\ii \langle\gamma', [\beta]\rangle_{\rm loc}} \;:=\; \expo{\ii \gamma'\cdot y_\beta} \;=\; \expo{\ii (\gamma'\cdot \beta-\gamma'\cdot\gamma_\beta)} \;=\; \expo{\ii \gamma'\cdot \beta}  \;.
\]
So, fixing any of the two entries in $\expo{\ii \langle\cdot, \cdot\rangle_{\rm loc}}$, one obtains  a character of the respective group. Then, one may define $\varpi_{\rm loc}$ as done in \eqref{eq:varpi_dec_def}, and introduce
\begin{equation}\label{eq:A_Gamma_def}
    \bb{A}_\Gamma \;:=\; \overline{\Delta(\Gamma'\times\n{T}_\Gamma,\varpi_{\rm loc})}\;.
\end{equation}
The elements of $\bb{A}_\Gamma $ can be interpreted as $\Gamma$-invariant observables  restricted to the unit cell  $\n{T}_\Gamma$.

\section{Topological phases  of translationally invariant pure states}\label{sec:classif_sts_tras2}

In \cite[Section 4]{denittis-rendel-25} we studied the topology of translationally invariant pure states of $\bb{W}$. In this section, we will make use of those results to provide some information about the classification of the related topological phases.
We will follow the program described in Section \ref{sec:Intr0} as roadmap. The first step is to declare the symmetry group. In this case it will be $\R^d$ acting as the group of translations. Its action is still be denoted by $\lambda\mapsto\tau_\lambda$.

\subsection{The sample space of fully-invariant states}
The next step amounts to the selection of the relevant sample space of states. We will choose the set $\s{P}_\bb{W}^\tau$ of all pure, translation-invariant states of $\bb{W}$.  
In order to make the distinction with states that are translationally invariant under the subgroup of translations by a lattice $\Gamma\subset\R^d$, we will often refer to these states as \emph{fully-invariant} states.

\subsection{\texorpdfstring{$C^*$}{C*}-bundle structure of the fully-invariant subalgebra}\label{sec_V-fb}
We proceed to the next step. It is immediate that any abelian $C^*$-algebra is a trivial $C^*$-bundle over its spectrum. 
From Section \ref{sec:field_obs_loc_algs}
One has that 
\[
\bb{V} \;\simeq\; C(\rr{b}(\R^d)).
\]
Therefore, $\bb{V}$ is a trivial $C^*$-bundle over $\rr{b}(\R^d)$ with typical fiber $\C$.
Any pure state of $\bb{V}$ can be identified with the evaluation at a point of $\rr{b}(\R^d)$.
Therefore, the space of pure states $\s{P}_\bb{V}$ can be seen as the total space of a trivial fiber bundle $\widetilde{\pi}_{\tau}:\s{P}_\bb{V} \to \rr{b}(\R^d)$ where $\widetilde{\pi}_{\tau}$ is the natural homeomorphism induced by the Gelfand duality. The typical fiber of the bundle is a singleton. 


\begin{remark}[Interpreting the Bohr compactification]
Let us recall that $\R^d$ as an abelian group coincides with its Pontryagin dual (or spectrum). As such, the Bohr compactification of $\R^d$ appears as a possible compactification of the dual or spectrum of the symmetry group $\R^d$. The choice of compactification can be interpreted in terms of \virg{boundary conditions at infinity} imposed on the observables, seen formally as functions of this quantum number. 
Using the Bohr compactification amounts to taking observables of the preserved quantum number (the momentum) that \virg{oscillate to infinity}, in the sense of being almost periodic. 
In the case of the Weyl $C^*$-algebra, the elements invariant under translations are those generated by the elements $v_\beta$, which can be thought of as exponentials of the momentum. This fact is formalized by the $*$-isomorphism $\bb{V}\simeq {\rm AP}(\R^d)$ given by $v_\beta\mapsto \expo{\ii\beta\cdot(\cdot)}$ which identifies $\bb{V}$ with the almost periodic functions of the momentum variable. As a consequence not just any function of momentum is allowed in $\bb{V}$. For example, the only functions   $f\in{\rm AP}(\R^d)$ that admit a limit
\[
\lim_{|p|\to\infty}f(p)\;=\;L
\]
are the constants, \ie $f(p)=L$ for every $p\in\R^d$. 
Therefore, we cannot think of $\bb{V}$ as a $C^*$-bundle over $\R^d$, or over its one-point compactification $\n{S}^d$ in this manner. 
\hfill $\blacktriangleleft$
\end{remark}

\subsection{Translationally invariant states as a fiber bundle}
The identification $\s{P}_\bb{W}^\tau\simeq \s{P}_\bb{V}$ and the discussion in Section \ref{sec_V-fb}
show  that the sample space $\s{P}_\bb{W}^\tau$ is the total space of a trivial fiber bundle. Indeed, if $\iota_\tau$ denotes the homeomorphism $\s{P}_\bb{W}^\tau\rightarrow \s{P}_\bb{V}$ given by $\iota_\tau(\omega) = \omega|_\bb{V}$, then the bundle map is given by $\pi_\tau:=\widetilde{\pi}_{\tau}\circ\iota_\tau: \s{P}_\bb{W}^\tau \to \rr{b}(\R^d)$. This bundle map corresponds to the homeomorphism $\pi_{\tau}:\omega_\lambda\mapsto\lambda$ presented in \cite[Proposition 4.1]{denittis-rendel-25}.

\subsection{Topological phases associated to fully-invariant states}
We are moving towards the final steps.
Since the bundle map $\pi_\tau$ is an homeomorphism it follows that $\pi_\tau$ has a unique continuous section corresponding to its inverse. Namely,
\[
{\rm Sec}(\rr{b}(\R^d),\s{P}_\bb{V})\;=\;\{\pi_\tau^{-1}\}
\]
is a singleton. In view of this one ends with
\[
\Omega(\s{P}_\bb{W}^\tau) \;:=\; \big[{\rm Sec}(\rr{b}(\R^d),\s{P}_\bb{V})\big] \;=\;  \{\ast\}
\]
meaning that there is a unique (trivial)
 topological phase.

\subsection{Adding spin}
In view of the above, the fully-invariant states of $\bb{W}$ carry no topological information. However, this wouldn't be the case if one were to add \emph{spinorial} degrees of freedom.
Let $\bb{W}_{\rm spin}:=\bb{W}\otimes \bb{S}$ the tensor product of the Weyl algebra (which is nuclear) with a \emph{spin  algebra} $\bb{S}$. In concrete examples we can think of $\bb{S}$ as a matrix algebra or, more in general, as the algebra of the compact operators.  
In such a case the fully invariant subalgebra has the form $\bb{V}\otimes \bb{S} \simeq C(\rr{b}(\R^d))\otimes\bb{S}$ and, as a consequence of \cite[Proposition 2.9]{denittis-25}, 
one gets that $\s{P}_{\bb{W}_{\rm spin}}^\tau\simeq\s{P}_{\bb{V}\otimes\bb{S}}\simeq \rr{b}(\R^d)\times \s{P}_\bb{S}$. It turns out that $\s{P}_{\bb{W}_{\rm spin}}^\tau$ can be identified with the total space of the trivial fiber bundle  $\pi_{\tau}:\rr{b}(\R^d)\times \s{P}_\bb{S}\to \rr{b}(\R^d)$. The continuous sections of this bundle are in bijection with continuous functions $\rr{b}(\R^d) \to \s{P}_\bb{S}$ and one ends with 
\[
\Omega(\s{P}_{\bb{W}_{\rm spin}}^\tau)\;=\;[\rr{b}(\R^d),\s{P}_\bb{S}]\;\simeq\;[\rr{b}(\R^d),B\n{U}(1)]
\] 
as the set of topological phases. In the last bijection $B\n{U}(1)$ is the classification space for $\n{U}(1)$ principal bundles and the result is justified as in  \cite[Theorem 1.7]{denittis-25} assuming that 
$\bb{S}$ is (a sub algebra of) the algebra of the compact operators.

\begin{remark}[$K$-theory of the   Bohr compactification]
We got the result
 that $\Omega(\s{P}_{\bb{W}_{\rm spin}}^\tau)$ is classified by isomorphism classes of line bundles over $\rr{b}(\R^d)$. Classifying these maps would be no easy task, given the pathological properties of the Bohr compactification. A simpler, related, undertaking might be  to compute the (reduced) topological $K$-theory of $\rr{b}(\R^d)$, which could yield interesting information regarding the (stable) isomorphism classes of vector bundles over it. Since $\rr{b}(\R^d)$ has uncountably infinite path-connected components, the $K$-theory is likely to be quite big. 
 Moreover, since $\rr{b}(\R^d)$ is also connected, it is reasonable to suspect that its topology is simply the sum of its path-connected  components
To the best of our knowledge, we have not found any results in the literature on K-theory of the Bohr compactification to date. A related problem is the description of the $K_0$ group of the Weyl $C^*$-algebra $\bb{W}\simeq C(\rr{b}(\R^d))\rtimes\R^d_{\rm{d}}$.
\hfill $\blacktriangleleft$
\end{remark} 


\subsection{The regular case}

We would now like to restrict our attention to the $\beta$-regular translationally invariant states. In the literature, these are said to be parametrized by $\R^d$, being exactly the states of the form $\omega_{\rr{i}(p)}$, $p\in\R^d$. Here, $\rr{i}:\R^d\to \rr{b}(\R^d)$, is the Bohr compactification map, which is a continuous and injective group homomorphism whose inverse is not continuous. As such, we have an homeomorphism $\s{P}_\bb{W}^{\tau,\beta} \simeq \R^d_\rr{b}:=\rr{i}(\R^d)$ \cite[Proposition 4.1]{denittis-rendel-25}, corresponding to the restriction of $\s{P}_\bb{W}^{\tau} \simeq \rr{b}(\R^d)$. Thus, if we wanted to consider $\s{P}_\bb{W}^{\tau,\beta}$ as a fiber bundle of states as above, we would need to deal with the poor properties of $\R^d_\rr{b}$, such as having a strictly weaker topology than the relevant space $\R^d$ and being non-locally compact. Luckily, there's a natural way to deal with this issue in general.

\medskip

Suppose $\n{G}$ is a LCA group and $X$ is a compactification of the dual group $\widehat{\n{G}}$ in the sense that there is a continuous injective function $\rr{i}:\widehat{\n{G}}\to X$ with dense image. Let $\pi:E\to X$ be a fiber bundle over $X$. Then, we can consider the fiber bundle $\rr{i}^*(\pi):\rr{i}^*E\to \widehat{\n{G}}$ induced by $\rr{i}$. Here, $\rr{i}^*E = \{(\omega,p)\in E\times \widehat{\n{G}} \;|\; \pi(\omega) = p\}$, and $\rr{i}^*(\pi)(\omega,p) = \pi(\omega) = p$. If $\rr{i}$ is an homeomorphism onto its image, then this induced fiber bundle is isomorphic to the restricted bundle $\pi: \pi^{-1}(\widehat{\n{G}})\to \rr{i}(\widehat{\n{G}})\simeq\widehat{\n{G}}$, but in general, the base space of this restricted bundle will have a weaker topology. The induced bundle construction allows us to recover the topological information of $\widehat{\n{G}}$ that was \virg{lost} upon compactifying it.

\medskip

Applying this on our current setting, we get the following: the total space of the induced bundle is $\rr{i}^*(\s{P}_\bb{W}^{\tau}) = \{(\omega_p,p)  \;|\; p\in \R^d\} \subset \s{P}_\bb{W}^{\tau,\beta}\times\R^d$, which is in bijection with $\s{P}_\bb{W}^{\tau,\beta}$, and the bundle map is $\rr{i}^*(\pi_\tau)(\omega_p,p) = \pi_\tau(\omega_p) = p$. Note that $\rr{i}^*(\pi_\tau):\rr{i}^*(\s{P}_\bb{W}^{\tau}) \to \R^d$ is now an homeomorphism with inverse $\pi_\tau^{-1}(p) = (\omega_p,p)$. Under this framework, the topological phases remain trivial in the spin-less case:
\[
\Omega(\s{P}_\bb{W}^{\tau,\beta}) \;:=\; \big[{\rm Sec}(\R^d,\rr{i}^*(\s{P}_\bb{W}^{\tau}))\big] \;=\; \{[\pi_\tau^{-1}] \} \;=\; \{\ast\}\;.
\]

\medskip

Similarly to before, one can add spin by considering $\bb{W}_{\rm{spin}}:= \bb{W}\otimes \bb S$. 
Note that $\s{P}_\bb{W_{\rm spin}}^{\tau} \simeq \s{P}_{\bb{V}\otimes \bb{S}} \simeq \rr{b}(\R^d) \times \s{P}_\bb{S}$, and under this identification, $\rr{i}^*(\s{P}_\bb{W_{\rm spin}}^{\tau}) \simeq \R^d \times \s{P}_\bb{S}$, which is to say that $\pi_\tau$ and $\rr{i}^*(\pi_\tau)$ are again trivial.
Due to the triviality of $\rr{i}^*(\pi_\tau)$, one again has the homotopy-preserving bijection
\[
{\rm Sec}(\R^d,\rr{i}^*(\s{P}_\bb{W_{\rm spin}}^{\tau})) \;\simeq\; C (\R^d, \s{P}_\bb{S}) \;.
\]
Explicitly, this is given by the following: Due to the triviality of the bundle, sections $F\in{\rm Sec}(\R^d,\rr{i}^*(\s{P}_\bb{W_{\rm spin}}^{\tau}))$ can be conceived as functions of the form $F(p) = (p,f(p))$, with $f\in C(\R^d, \s{P}_\bb{S})$. Then, two sections $F$ and $ F'$ are homotopic if and only if their respective second coordinates $f$ and $f'$ are, which gives rise to the correspondence.

\medskip

\begin{remark}[Strong topological invariants]
In the standard description of strong topological invariants for topological insulators, one considers homotopy classes of Hamiltonians or states parametrized by $\R^d$, having a definite limit at infinity. To emulate this, we can consider the sections $F = (\cdot,f(\cdot))\in {\rm Sec}(\R^d,\rr{i}^*(\s{P}_\bb{W_{\rm spin}}^{\tau}))$, such that $\lim_{|p|\to\infty} f(p)$ exists. 
We denote the set of such sections by ${\rm Sec}_\infty(\R^d,\rr{i}^*(\s{P}_\bb{W_{\rm spin}}^{\tau}))$. Then, restricting the above correspondence, we have
\[
{\rm Sec}_\infty(\R^d,\rr{i}^*(\s{P}_\bb{W_{\rm spin}}^{\tau})) \;\simeq\; C_\infty (\R^d, \s{P}_\bb{S}) \;\simeq\; C(\n{S}^d,\s{P}_\bb{S}) \;.
\]
From this, the corresponding topological phases are
\[
\Omega_{\infty}(\s{P}_{\bb{W}_{\rm spin}}^\tau) \;:=\; \big[{\rm Sec}_\infty(\R^d,\rr{i}^*(\s{P}_\bb{W_{\rm spin}}^{\tau}))\big] \;\simeq\; [\n{S}^d,\s{P}_\bb{S}] \;.
\] 

\medskip

Thus, in the case $\bb{S} = \bb{K}$, where $\bb{K}$ denotes the compact operators on an infinite-dimensional, separable Hilbert space, one gets
\[
\Omega_{\infty}(\s{P}_{\bb{W}_{\rm spin}}^\tau) \;\simeq\;[\n{S}^d,\s{P}_\bb{S}] \;\simeq\;[\n{S}^d,B\n{U}(1)] \;\simeq\; \rm{Vec}^1_\C(\n{S}^d) \;,
\] 
where $\rm{Vec}^1_\C(\n{S}^d)$ denotes the isomorphism classes of complex line bundles over the $d$-dimensional sphere. Note that in low dimension $1\leq d\leq3$, this implies 
\[
\Omega_{\infty}(\s{P}_{\bb{W}_{\rm spin}}^\tau) \;\simeq\; \widetilde{K}_0(\n{S}^d) \;.
\]
This recovers the classification of the strong topological invariants for topological insulators of type A.
\hfill $\blacktriangleleft$
\end{remark}

\section{Topology of pure lattice-invariant states}\label{sec:classif_sts_latt2}
In this Section, we will   study topological phases associated to discrete translation symmetries  following step by step the recipe described in Section \ref{sec:Intr0}. 
The first step consists in fixing the symmetry group as a maximal lattice
 $\Gamma\subset\R^d$   as in Section \ref{sec:field_obs_loc_algs}. The action of $\Gamma$ on $\bb{W}$ by translations  will be denoted by  $\Gamma\ni\gamma\mapsto \tau_\gamma\in{\rm Aut}(\bb{W})$. Of course, since $\Gamma$ is abelian discrete, its spectrum (Pontryagin dual) $\n{B}_\Gamma$ is already compact. It will be the natural choice for our quantum parameter space.

\subsection{The sample space of Bloch-waves}

For the sample space of $\Gamma$-invariant state, we will employ the set of Bloch-wave states $\s{P}^{\Gamma,\beta}_\bb{W}$, which consists of the pure, $\Gamma$-invariant, $\beta$-regular states of $\bb{W}$, described in \cite[Proposition 3.26]{denittis-rendel-25}. In particular, it will become evident that the fiber bundle structure of these states described in \cite[Theorem 5.1]{denittis-rendel-25} fits nicely into our classification scheme.

\subsection{\texorpdfstring{$C^*$}{C*}-bundle structure of the lattice-invariant subalgebra}\label{sec:cs_bundle_obs}
Let $\bb{V}_\Gamma\subset \bb{W}$ be the $\Gamma$-invariant $C^*$-subalgebra described in Section \ref{sec:field_obs_loc_algs}. Our next goal is to show that $\bb{V}_\Gamma$ is a $C^*$-bundle over $\n{B}_\Gamma$ and characterize its fibers. We begin by showing that $\bb{V}_\Gamma$ is a $C(\n{B}_\Gamma)$-algebra. 
As a preliminary fact let us note that $\bb{V}_\Gamma$ is unitary and therefore it coincides with its multiplier algebra.
Moreover,  we have an inclusion $C^*(\Gamma)\hookrightarrow\bb{V}_\Gamma$ given on generators by $\delta_{\gamma} \mapsto v_\gamma$, for $\gamma\in \Gamma$. Moreover, since the elements $v_\gamma$ generate the center of $\bb{V}_\Gamma$, this map is a $*$-isomorphism $C^*(\Gamma) \simeq \bb{Z}(\bb{V}_\Gamma)$. Composing with the Gelfand isomorphism $C^*(\Gamma)\simeq C(\n{B}_\Gamma)$, this provides the structure (iso)morphism $\Xi:C(\n{B}_\Gamma) \rightarrow \bb{Z}(\bb{V}_\Gamma)$ mapping the phases $\expo{-\ii\gamma\cdot(\cdot)}$ to the elements $v_\gamma$. 
\medskip

As described in Section \ref{sec:sts_Cs_bundles}, for each $\kappa\in\n{B}_\Gamma$, we define the closed, two-sided ideal 
$$
\bb{I}_\kappa \;:=\; \Xi(C_0(\n{B}_\Gamma\setminus\{\kappa\})) \bb{V}_\Gamma\; =\; \left\{\Xi(g) a \in \bb{V}_\Gamma \;|\; g\in C_0(\n{B}_\Gamma\setminus\{\kappa\}), a\in \bb{V}_\Gamma \right\}\;.
$$

\begin{proposition}\label{prop:ideal_structure_Ik}
The ideal $\bb{I}_\kappa$ is generated by the elements
\[
(v_\gamma-\expo{-\ii \gamma\cdot \kappa} {\bf 1}) a \;, 
\]
with $\gamma\in \Gamma$ and $a\in\bb{V}_\Gamma$. Moreover, for each $\alpha\in\R^d$, the momentum translation $\theta_\alpha$ provides a $*$-isomorphism $\bb{I}_\kappa\to\bb{I}_{\kappa+\alpha}$, and $\theta_\alpha(\Xi(g)) = \Xi(g(\cdot-\alpha))$.
\end{proposition}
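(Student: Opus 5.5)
The plan has two parts: first identify generators of $\bb{I}_\kappa$ through the Gelfand picture of $C(\n{B}_\Gamma)$, then check directly how $\theta_\alpha$ acts on the center. For the first part, let $\bb{J}_\kappa$ be the closed two-sided ideal generated by the elements $(v_\gamma-\expo{-\ii\gamma\cdot\kappa}{\bf 1})a$ with $\gamma\in\Gamma$, $a\in\bb{V}_\Gamma$. Since $v_\gamma$ is central, the closed linear span of these elements is already a two-sided ideal.

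The inclusion $\bb{J}_\kappa\subseteq\bb{I}_\kappa$ is immediate. Under $\Xi$, $v_\gamma-\expo{-\ii\gamma\cdot\kappa}{\bf 1}$ is the image of $g_\gamma:=\expo{-\ii\gamma\cdot(\cdot)}-\expo{-\ii\gamma\cdot\kappa}$, and $g_\gamma$ vanishes at $\kappa$, so $g_\gamma\in C_0(\n{B}_\Gamma\setminus\{\kappa\})$.

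For $\bb{I}_\kappa\subseteq\bb{J}_\kappa$, let $\s{J}\subseteq C(\n{B}_\Gamma)$ be the closed ideal generated by the functions $g_\gamma$. Its zero set is exactly $\{\kappa\}$: if $\kappa'\neq\kappa$ in $\n{B}_\Gamma=\widehat{\Gamma}$, the characters differ at some $\gamma$, so $g_\gamma(\kappa')\neq0$. Closed ideals of $C(\n{B}_\Gamma)$ are determined by their zero sets, hence $\s{J}=C_0(\n{B}_\Gamma\setminus\{\kappa\})$. Since $\Xi$ is an isometric $*$-homomorphism, every $\Xi(g)$ with $g\in C_0(\n{B}_\Gamma\setminus\{\kappa\})$ is a norm limit of finite sums $\sum_j\Xi(h_j)(v_{\gamma_j}-\expo{-\ii\gamma_j\cdot\kappa}{\bf 1})$ with $h_j\in C(\n{B}_\Gamma)$. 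Multiplying by $a\in\bb{V}_\Gamma$ then places $\Xi(g)a$ in $\bb{J}_\kappa$. Here I am implicitly treating $\bb{I}_\kappa$ as the closed span of the products $\Xi(g)a$; by Cohen factorization it is in fact equal to the set of such products. The density/zero-set step is the only nontrivial point, and it is standard Gelfand theory (equivalently, Stone–Weierstrass applied to the ideal).

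For the second part, recall that $\theta_\alpha(a)=u_\alpha au_\alpha^*$ is an automorphism of $\bb{W}$. The product law \eqref{eq:abs_ccr_rels_separate} gives $u_\alpha v_\beta u_\alpha^*=\expo{\ii\alpha\cdot\beta}v_\beta$ and $u_\alpha u_{\gamma'}u_\alpha^*=u_{\gamma'}$. So $\theta_\alpha$ maps each generator $u_{\gamma'}v_\beta$ of $\bb{V}_\Gamma$ to a scalar multiple of itself, and it restricts to a $*$-automorphism of $\bb{V}_\Gamma$ (its inverse $\theta_{-\alpha}$ behaves the same way). In particular
\[
\theta_\alpha(v_\gamma)\;=\;\expo{\ii\gamma\cdot\alpha}v_\gamma\;=\;\Xi\big(\expo{-\ii\gamma\cdot(\cdot-\alpha)}\big)\;.
\]
Thus $\theta_\alpha\circ\Xi$ and $g\mapsto\Xi(g(\cdot-\alpha))$ are two $*$-homomorphisms that agree on the characters. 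Trigonometric polynomials are dense in $C(\n{B}_\Gamma)$, so by linearity and continuity they agree everywhere, giving $\theta_\alpha(\Xi(g))=\Xi(g(\cdot-\alpha))$. Note that translation by $\alpha\in\R^d$ is well defined on $\n{B}_\Gamma=\R^d/\Gamma'$. The sign convention should be double-checked against the identification $v_\gamma\leftrightarrow\expo{-\ii\gamma\cdot(\cdot)}$; the computation above is consistent with it.

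Finally, if $g$ vanishes at $\kappa$, then $g(\cdot-\alpha)$ vanishes at $\kappa+\alpha$. Hence
\[
\theta_\alpha(\Xi(g)a)\;=\;\Xi\big(g(\cdot-\alpha)\big)\,\theta_\alpha(a)\;\in\;\bb{I}_{\kappa+\alpha}\;,
\]
so $\theta_\alpha(\bb{I}_\kappa)\subseteq\bb{I}_{\kappa+\alpha}$. The same argument for $\theta_{-\alpha}$ gives the reverse inclusion, so $\theta_\alpha$ restricts to a $*$-isomorphism $\bb{I}_\kappa\to\bb{I}_{\kappa+\alpha}$. Alternatively, one can read this off from the first part, since $\theta_\alpha$ maps the generator $(v_\gamma-\expo{-\ii\gamma\cdot\kappa}{\bf 1})a$ to $\expo{\ii\gamma\cdot\alpha}\big(v_\gamma-\expo{-\ii\gamma\cdot(\kappa+\alpha)}{\bf 1}\big)\theta_\alpha(a)$.
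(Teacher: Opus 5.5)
Your proof is correct, and it departs from the paper's argument in two places.

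For the generating set, the paper writes $g\in C_0(\n{B}_\Gamma\setminus\{\kappa\})$ as a Fourier series $\sum_\gamma c_\gamma\expo{-\ii\gamma\cdot(\cdot)}$. It then subtracts $0=g(\kappa)$ termwise to get $\Xi(g)a=\sum_\gamma c_\gamma(v_\gamma-\expo{-\ii\gamma\cdot\kappa}{\bf 1})a$. This is explicit, but read literally it assumes that a merely continuous $g$ equals its convergent Fourier series, which fails in general. It has to be repaired by approximating $g$ uniformly by trigonometric polynomials $p_n$ and using $p_n-p_n(\kappa)$. Your route instead identifies the closed ideal generated by the $g_\gamma$'s through its zero set, using that characters separate points of $\widehat{\Gamma}$ and that closed ideals of $C(X)$ correspond to closed subsets. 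This avoids any convergence issue and is fully rigorous, at the cost of giving no explicit coefficients.

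For the action of $\theta_\alpha$, the paper uses what you list as the \virg{alternatively} argument. It computes $\theta_\alpha$ on the generators $(v_\gamma-\expo{-\ii\gamma\cdot\kappa}{\bf 1})a$, so $\theta_\alpha(\bb{I}_\kappa)=\bb{I}_{\kappa+\alpha}$ rests on the first part. The intertwining formula $\theta_\alpha(\Xi(g))=\Xi(g(\cdot-\alpha))$ is then asserted as a consequence of the generator computation. You prove the intertwining formula first, by agreement on characters plus density. You then read off $\theta_\alpha(\bb{I}_\kappa)\subseteq\bb{I}_{\kappa+\alpha}$ directly from the definition $\bb{I}_\kappa=\Xi(C_0(\n{B}_\Gamma\setminus\{\kappa\}))\bb{V}_\Gamma$, with $\theta_{-\alpha}$ giving the reverse inclusion. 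This makes the second claim independent of the first and makes the density step explicit.
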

\proof
First of all $i_{\gamma,a}^{(\kappa)}:=(v_\gamma-\expo{-\ii \gamma\cdot \kappa} {\bf 1}) a\in \bb{I}_\kappa$ by construction.
Let
\[
g(\kappa')\;:=\;\sum_{\gamma\in\Gamma}c_\gamma \expo{-\ii \gamma\cdot\kappa'}\;\in\;C_0(\n{B}_\Gamma\setminus\{\kappa\})\;.
\]
Then $g(\kappa)=0$ implies $\sum_{\gamma\in\Gamma}c_\gamma \expo{-\ii \gamma\cdot\kappa}=0$.
Therefore, from $g(\kappa')=g(\kappa')-0$, and using the latter expansion for $0$, one gets 
\[
g(\kappa')\;=\;\left(\sum_{\gamma\in\Gamma}c_\gamma \expo{-\ii \gamma\cdot\kappa'}\right)-\left(\sum_{\gamma\in\Gamma}c_\gamma \expo{-\ii \gamma\cdot\kappa}\right)\;=\;\sum_{\gamma\in\Gamma}c_\gamma \left(\expo{-\ii \gamma\cdot\kappa'} -  \expo{-\ii \gamma\cdot\kappa}\right)\;.
\]
With this, and using that $\Xi(\expo{-\ii\gamma\cdot(\cdot)}) = v_\gamma$, a generic element of $\bb{I}_\kappa$ has the form
\[
\Xi(g)a \;=\; \left(\sum_{\gamma\in\Gamma}c_\gamma \left(v_\gamma -  \expo{-\ii \gamma\cdot\kappa} {\bf 1} \right)\right) a \;=\; \sum_{\gamma\in\Gamma}c_\gamma i_{\gamma,a}^{(\kappa)}\;.
\]
This proves that the elements $i_{\gamma,a}^{(\kappa)}$ generate the ideal. Finally, let us observe that
\[
\theta_\alpha \left(v_\gamma -  \expo{-\ii \gamma\cdot\kappa} {\bf 1} \right)  \;=\; \expo{\ii \gamma\cdot\alpha}v_\gamma -  \expo{-\ii \gamma\cdot\kappa} {\bf 1} \;=\; \expo{\ii \gamma\cdot\alpha} \left(v_\gamma -  \expo{-\ii \gamma\cdot(\kappa+\alpha)} {\bf 1}\right)
\]
which implies that 
$$\theta_\alpha(i_{\gamma,a}^{(\kappa)})\;=\;\expo{\ii \gamma\cdot\alpha} i_{\gamma,\theta_\alpha(a)}^{(\kappa+\alpha)}
$$
Since $\theta_\alpha:\bb{V}_\Gamma \to \bb{V}_\Gamma$ is already a $*$-isomorphism 
one conclude that $\theta_\alpha:\bb{I}_\kappa\to \bb{I}_{\kappa+\alpha}$ is surjective. A similar calculation shows that $\theta_{-\alpha}$ is the inverse. Therefore $\theta_\alpha$
provides a $*$-isomorphism $\bb{I}_\kappa\to \bb{I}_{\kappa+\alpha}$. 
The formula $\theta_\alpha(\Xi(g)) = \Xi(g(\cdot-\alpha))$ follows from the fact that we proved it on generators.
\qed


\medskip

We also have the following result, which is an immediate consequence of \cite[Corollary 4.21]{manuceau-sirugue-testard-verbeure-73}.

\begin{proposition}\label{prop-max}
    The maximal ideals of $\bb{V}_\Gamma$ are exactly the ideals $\bb{I}_\kappa$, with $\kappa\in \n{B}_\Gamma$.
\end{proposition}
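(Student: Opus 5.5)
We prove the two inclusions separately, relying on \cite[Corollary 4.21]{manuceau-sirugue-testard-verbeure-73}. We read that result as follows: for a CCR algebra $\overline{\Delta(\n{A},\varpi)}$, the closed two-sided ideals are in bijection with the closed ideals of the center $\overline{\Delta(\n{A}_0,\varpi|_{\n{A}_0})}$ via $\bb{J}\mapsto \bb{J}\cap\bb{Z}$, with inverse $\bb{I}\mapsto \bb{I}\,\overline{\Delta(\n{A},\varpi)}$. In particular, maximal ideals of the whole algebra correspond to maximal ideals of its center.

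First, we apply this with $\n{A}=\Gamma'\times\R^d$ and $\n{A}_0=\{0\}\times\Gamma$. Here the center is $\bb{Z}(\bb{V}_\Gamma)=\Xi(C(\n{B}_\Gamma))\simeq C(\n{B}_\Gamma)$, as recalled in Section \ref{sec:field_obs_loc_algs}. The maximal ideals of $C(\n{B}_\Gamma)$ are exactly $C_0(\n{B}_\Gamma\setminus\{\kappa\})$, $\kappa\in\n{B}_\Gamma$, by Gelfand duality.

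Next, we transport these ideals. Under the bijection above, $\Xi(C_0(\n{B}_\Gamma\setminus\{\kappa\}))$ is sent to $\Xi(C_0(\n{B}_\Gamma\setminus\{\kappa\}))\bb{V}_\Gamma=\bb{I}_\kappa$. Since the bijection is inclusion-preserving, $\bb{I}_\kappa$ is maximal. Conversely, every maximal ideal $\bb{J}$ of $\bb{V}_\Gamma$ satisfies the following: $\bb{J}\cap\bb{Z}(\bb{V}_\Gamma)$ is a maximal ideal of the center, hence equal to $\Xi(C_0(\n{B}_\Gamma\setminus\{\kappa\}))$ for some $\kappa$, and therefore $\bb{J}=\bb{I}_\kappa$.

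There are two checks to make. The main one is that $\bb{I}_\kappa$ is a proper ideal, i.e.\ ${\bf 1}\notin\bb{I}_\kappa$, and that $\bb{I}_\kappa\cap\bb{Z}(\bb{V}_\Gamma)$ is exactly $\Xi(C_0(\n{B}_\Gamma\setminus\{\kappa\}))$ rather than the whole center. This is the part the cited corollary supplies. Alternatively, it can be verified directly by building a state on $\bb{V}_\Gamma$ that vanishes on $\bb{I}_\kappa$. Such a state would be, for instance, the $\Gamma$-invariant state with $\omega(v_\gamma)=\expo{-\ii\gamma\cdot\kappa}$ for $\gamma\in\Gamma$, $\omega(u_{\gamma'}v_\beta)=0$ otherwise, extended by the usual construction. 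Proposition \ref{prop:ideal_structure_Ik} shows that its kernel contains the generators of $\bb{I}_\kappa$. The other check is that the ideals $\bb{I}_\kappa$ are pairwise distinct. This follows because $v_\gamma-\expo{-\ii\gamma\cdot\kappa}{\bf 1}$ lies in $\bb{I}_{\kappa'}$ only if $\expo{-\ii\gamma\cdot\kappa}=\expo{-\ii\gamma\cdot\kappa'}$ for all $\gamma\in\Gamma$, i.e.\ $\kappa=\kappa'$ in $\n{B}_\Gamma$.
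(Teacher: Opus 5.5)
Your proposal is correct and takes essentially the same route as the paper, which gives no argument beyond declaring the statement an immediate consequence of \cite[Corollary 4.21]{manuceau-sirugue-testard-verbeure-73} applied to $\n{A}=\Gamma'\times\R^d$ with degeneracy group $\{0\}\times\Gamma$. Proposition \ref{prop:ideal_structure_Ik}, proved just before, exhibits $\bb{I}_\kappa$ as generated by the elements $(v_\gamma-\expo{-\ii\gamma\cdot\kappa}{\bf 1})a$, i.e.\ labeled by characters of $\Gamma$; you instead spell out the passage through the center $\Xi(C(\n{B}_\Gamma))$ and Gelfand duality, and your extra checks (properness via an explicit state, pairwise distinctness) are sound but not needed once the corollary is invoked.
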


\medskip

We are now in position to describe the fibers of $\bb{V}_\Gamma$. Let $\rr{h}_\Gamma:=L^2(\n{T}_\Gamma,\dd\nu)$ where $\n{T}_\Gamma:=\R^d/\Gamma$ and $\nu$ its normalized Haar measure. Consider the families of operators given by
\begin{equation}\label{eq_SS-YY2}
\begin{aligned}
(S_\beta f)(y) &\;:=\; f(y-\beta) \\
(F_{\gamma'} f)(y) &\;:=\; \expo{\ii \gamma'\cdot y}f(y)
\end{aligned}
\end{equation}
for every $\gamma'\in\Gamma'$ and $\beta\in\R^d$. These operators and their connection to the irreducible representations of the $C^*$-algebra $\bb{V}_\Gamma$ have been discussed in 
\cite[Section 3.3]{denittis-rendel-25}. 
It holds that the $C^*$-algebra 
\[
\bb{C}(\rr{h}_\Gamma)\;:=\;C^*\left(\left\{F_{\gamma'}S_\beta \;|\; \gamma'\in\Gamma',\beta\in\R^d\right\}\right)\;\subset\; \bb{B}(\rr{h}_\Gamma)
\]
generated by these elements is $*$-isomorphic with the abstract CCR $C^*$-algebra $\bb{A}_\Gamma$ defined in \eqref{eq:A_Gamma_def}.

\begin{lemma}\label{lemma-AG}
    The $C^*$-subalgebra $\bb{C}(\rr{h}_\Gamma)$ is $*$-isomorphic to $\bb{A}_\Gamma$.
\end{lemma}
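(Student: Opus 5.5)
Since $\bb{A}_\Gamma=\overline{\Delta(\Gamma'\times\n{T}_\Gamma,\varpi_{\rm loc})}$ is the completion of the twisted group algebra with respect to the minimal regular norm, the plan is to exhibit a faithful-enough representation of the abstract CCR relations on $\rr{h}_\Gamma$ and then invoke the uniqueness/simplicity theory of \cite{manuceau-sirugue-testard-verbeure-73}. Concretely, I first check that the map $w_{(\gamma',[\beta])}\mapsto W_{(\gamma',[\beta])}:=\expo{-\frac{\ii}{2}\langle\gamma',[\beta]\rangle_{\rm loc}}F_{\gamma'}S_{\beta}$ is well defined. The operator $S_\beta$ on $L^2(\n{T}_\Gamma)$ depends only on the class $[\beta]\in\n{T}_\Gamma$, because translating by $\gamma\in\Gamma$ acts trivially on functions on $\R^d/\Gamma$. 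Next I verify, by a direct computation on $\rr{h}_\Gamma$, the relations $F_{\gamma'}S_\beta=\expo{\ii\gamma'\cdot\beta}S_\beta F_{\gamma'}$ (more precisely $S_\beta F_{\gamma'}=\expo{-\ii\gamma'\cdot\beta}F_{\gamma'}S_\beta$), $F_{\gamma'_1}F_{\gamma'_2}=F_{\gamma'_1+\gamma'_2}$ and $S_{\beta_1}S_{\beta_2}=S_{\beta_1+\beta_2}$. By the computation preceding \eqref{eq:A_Gamma_def}, $\expo{\ii\gamma'\cdot\beta}=\expo{\ii\langle\gamma',[\beta]\rangle_{\rm loc}}$, so these are exactly relations \eqref{eq:abs_ccr_rels_separate} for the pairing $\langle\cdot,\cdot\rangle_{\rm loc}$, up to the sign convention, which I fix by using $F_{-\gamma'}$ or $S_{-\beta}$ if needed. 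The unitaries $W_z$ then satisfy \eqref{eq:abs_ccr_rels_joint} with $\varpi_{\rm loc}$, giving a $*$-representation $\rho$ of $\Delta(\Gamma'\times\n{T}_\Gamma,\varpi_{\rm loc})$ whose image generates $\bb{C}(\rr{h}_\Gamma)$.

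The second step is to show that $\rho$ extends isometrically to the minimal regular norm. The minimal regular norm is dominated by every regular $C^*$-norm, so $\rho$ extends to a surjection $\bb{A}_\Gamma\to\bb{C}(\rr{h}_\Gamma)$, and what remains is injectivity. I first compute the degeneracy group of $\varpi_{\rm loc}$. Since $\varpi_{\rm loc}(z,z')^2=\expo{\ii(\gamma'_1\cdot\beta_2-\gamma'_2\cdot\beta_1)}$, requiring this to equal $1$ for all $z'$ forces $[\beta_1]=0$ in $\n{T}_\Gamma$ (by duality of $\Gamma'$ and $\R^d/\Gamma$) and $\gamma'_1=0$ (since $\gamma'_1$ must pair trivially with all of $\R^d$). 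Hence $\varpi_{\rm loc}$ is non-degenerate, and by \cite[Corollary 4.24]{manuceau-sirugue-testard-verbeure-73} $\bb{A}_\Gamma$ is simple. A nonzero $*$-homomorphism out of a simple $C^*$-algebra is injective, and $\rho(\mathbf 1)=\mathbf 1\neq0$, so $\rho$ is a $*$-isomorphism onto its closed image $\bb{C}(\rr{h}_\Gamma)$.

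The main obstacle is ensuring that the extension to the completion is legitimate, i.e.\ that the norm induced by $\rho$ really is one of the regular norms with respect to which the minimal one is minimal; the conventions of \cite{manuceau-sirugue-testard-verbeure-73} require care here. If this is unclear, an alternative is to show that $\|\sum c_zW_z\|\ge$ the minimal regular norm directly. One can do this using the Fourier-type argument that the $W_z$ are linearly independent: the vector state $\langle\mathbf 1,\cdot\,\mathbf 1\rangle$ applied after conjugating by $F_{\gamma'}$ isolates coefficients, and the trace-like functional $a\mapsto\langle 1,a1\rangle$ vanishes on $W_z$ for $z\neq0$. Either route, combined with simplicity, yields the isomorphism.
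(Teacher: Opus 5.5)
Your overall architecture matches the paper's proof. You send $\tilde u_{\gamma'}\tilde v_{[\beta]}\mapsto F_{\gamma'}S_\beta$, which is well defined by $\Gamma$-periodicity of $\beta\mapsto S_\beta$, and check the relations. Injectivity then comes from simplicity of $\bb{A}_\Gamma$, and surjectivity from the closed image containing the generators. Your check of the relations \eqref{eq:abs_ccr_rels_separate} and your computation that the commutator bicharacter is non-degenerate are correct.

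\textbf{The gap.} It lies in the step you flag yourself, the extension to the completion, and you argue it in the wrong direction. If $\|\cdot\|_{\min}$ is dominated by every regular $C^*$-norm, you get $\|x\|_{\min}\leqslant\|\rho(x)\|$ on $\Delta$. That extends the identity to a surjection $\bb{C}(\rr{h}_\Gamma)\to\bb{A}_\Gamma$, not to a homomorphism out of $\bb{A}_\Gamma$. Simplicity of $\bb{A}_\Gamma$ then says nothing about the kernel of a map \emph{into} $\bb{A}_\Gamma$. Your fallback also yields a lower bound on $\|\rho(x)\|$, so neither route closes.

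What is needed is $\|\rho(x)\|\leqslant\|x\|_{\min}$, i.e.\ that $\bb{A}_\Gamma$ is universal for the relations. The paper tacitly uses exactly this when it calls the map \virg{clearly a representation of $\bb{A}_\Gamma$}. One correct way to get it:
\begin{enumerate}
\item Let $\|\cdot\|_{\max}$ be the universal $C^*$-norm on the twisted group algebra of the discrete group $\Gamma'\times\n{T}_\Gamma$; it is finite because the generators are unitaries.
\item Both $\|\rho(\cdot)\|$ and $\|\cdot\|_{\min}$ are dominated by it. Hence the identity on generators gives surjections of $C^*_{\max}$ onto $\bb{C}(\rr{h}_\Gamma)$ and onto $\bb{A}_\Gamma$.
\item The symmetrizer of the cocycle (your degeneracy group) is trivial, so $C^*_{\max}$ is simple by Slawny's theorem for discrete abelian groups. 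In particular all $C^*$-norms on $\Delta$ coincide, and both surjections are isomorphisms.
\end{enumerate}

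\textbf{Two smaller slips.}
\begin{enumerate}
\item In the fallback, $\langle{\bf 1},S_\beta{\bf 1}\rangle=1$ for every $\beta$. So the vector state at the constant function does not vanish on $W_{(0,[\beta])}$ with $[\beta]\neq0$. You would first have to average the conjugations by $F_{\gamma''}$ over growing boxes in $\Gamma'$; each such conjugation multiplies the $(\gamma',[\beta])$-coefficient by $\expo{\ii\gamma''\cdot\beta}$.
\item The symmetric Weyl form is not available on $\n{T}_\Gamma$, because $\expo{\frac{\ii}{2}\langle\gamma',[\beta]\rangle_{\rm loc}}$ is not a character in $[\beta]$. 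Your $W_z$ only satisfy $W_{z_1}W_{z_2}=\pm\varpi_{\rm loc}(z_1,z_2)W_{z_1+z_2}$. For $d=1$, $\Gamma=\Z$, $Q_\Gamma=[0,1)$, $z_1=(0,[0.6])$, $z_2=(2\pi,[0.6])$ the sign is $-1$. No rephasing can repair this, since $\varpi_{\rm loc}$ itself fails the cocycle identity. Work with the generators $\tilde u_{\gamma'},\tilde v_{[\beta]}$ and relations \eqref{eq:abs_ccr_rels_separate}, as the paper's proof does. Your non-degeneracy computation only involves $\varpi_{\rm loc}^2$, which is well defined, so it carries over unchanged.
\end{enumerate}
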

\proof
Denote by $\tilde{u}_{\gamma'}$ and $\tilde{v}_{[\beta]}$ the generators of $\bb{A}_\Gamma$, with relations given as in \eqref{eq:abs_ccr_rels_separate}. Consider the map $\tilde{u}_{\gamma'} \tilde{v}_{[\beta]} \mapsto F_{\gamma'}S_\beta$, which is well-defined since $\beta\mapsto S_\beta$ is $\Gamma$-periodic. This is clearly a representation of $\bb{A}_\Gamma$ in $\bb{B}(\rr{h}_\Gamma)$, which is injective since $\bb{A}_\Gamma$ is simple. This   implies that the map is a $*$-isomorphism into its image, which contains al the operators 
$F_{\gamma'}S_\beta$. The later generate $\bb{C}(\rr{h}_\Gamma)$, then by continuity 
the map is also 
surjective, and thus a $*$-isomorphism between $\bb{A}_\Gamma$ and $\bb{C}(\rr{h}_\Gamma)$.
\qed

\medskip

In view of the previous result, we will henceforth tacitly identify $\bb{A}_\Gamma$ and $\bb{C}(\rr{h}_\Gamma)$.
For the next result we need to consider the family of representations $\rho_\kappa:\bb{V}_\Gamma\to\bb{B}(\rr{h}_\Gamma)$ described in \cite[Section 3.3]{denittis-rendel-25},  given by
\begin{equation}\label{eq:rep_rho_k}
\begin{aligned}
{\rho}_\kappa (v_\beta)  &\;:=\; \expo{-\ii\kappa\cdot \beta} S_\beta \;,\qquad
{\rho}_\kappa(u_{\gamma'})  \;:=\; F_{\gamma'}\,,
\end{aligned}
\end{equation}
where  $\kappa\in\n{B}_\Gamma$ is identified with its representative in the fundamental cell  of the dual lattice $Q_{\Gamma'}$.

\begin{proposition}\label{prop:typ-fib}
    The fiber $C^*$-algebras $\bb{A}_\kappa:= \bb{V}_\Gamma/\bb{I}_\kappa$ are all $*$-isomorphic to $\bb{A}_\Gamma$. In particular, they are simple, nuclear, non-separable $C^*$-algebras.
\end{proposition}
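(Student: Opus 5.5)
The plan is to show that the representation $\rho_\kappa$ of \eqref{eq:rep_rho_k} has kernel exactly $\bb{I}_\kappa$ and image exactly $\bb{C}(\rr{h}_\Gamma)\simeq\bb{A}_\Gamma$ (Lemma \ref{lemma-AG}). It then induces a $*$-isomorphism $\bb{A}_\kappa=\bb{V}_\Gamma/\bb{I}_\kappa\to\bb{A}_\Gamma$.

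The first step is to check that $\rho_\kappa$ is a well-defined representation. On generators one verifies the relations \eqref{eq:abs_ccr_rels_separate} restricted to $\Gamma'\times\R^d$:
\[
F_{\gamma'}S_\beta\;=\;\expo{\ii\gamma'\cdot\beta}S_\beta F_{\gamma'}\;,
\]
the group laws, and the adjoint relations. Since $\bb{V}_\Gamma$ carries the minimal regular norm (the universal norm on $\Delta(\Gamma'\times\R^d,\varpi_\Gamma)$, as in \cite{manuceau-sirugue-testard-verbeure-73}), $\rho_\kappa$ extends to $\bb{V}_\Gamma$. The image contains all $\expo{-\ii\kappa\cdot\beta}F_{\gamma'}S_\beta$, hence all generators of $\bb{C}(\rr{h}_\Gamma)$. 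As images of $C^*$-algebras are closed, $\rho_\kappa(\bb{V}_\Gamma)=\bb{C}(\rr{h}_\Gamma)$.

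Next comes the kernel. For $\gamma\in\Gamma$ one has $S_\gamma={\bf 1}$ on $\rr{h}_\Gamma$, since functions on $\n{T}_\Gamma$ are $\Gamma$-periodic. Hence
\[
\rho_\kappa\big(v_\gamma-\expo{-\ii\gamma\cdot\kappa}{\bf 1}\big)\;=\;0\;.
\]
By Proposition \ref{prop:ideal_structure_Ik}, which says these elements (times arbitrary $a$) generate $\bb{I}_\kappa$, we get $\bb{I}_\kappa\subseteq\ker\rho_\kappa$. Now $\ker\rho_\kappa$ is a proper ideal, because $\rho_\kappa({\bf 1})={\bf 1}\neq0$. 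By Proposition \ref{prop-max}, $\bb{I}_\kappa$ is maximal, so $\ker\rho_\kappa=\bb{I}_\kappa$.

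Combining the two steps, $\bb{A}_\kappa\simeq\bb{C}(\rr{h}_\Gamma)\simeq\bb{A}_\Gamma$. As a cross-check one can also compare fibers directly: Proposition \ref{prop:ideal_structure_Ik} gives $\theta_\alpha(\bb{I}_\kappa)=\bb{I}_{\kappa+\alpha}$, so all fibers are mutually isomorphic.

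It remains to establish the listed properties of $\bb{A}_\Gamma$.
\begin{itemize}
\item \emph{Simplicity.} By \cite[Corollary 4.24]{manuceau-sirugue-testard-verbeure-73} one must check that the degeneracy group of $(\Gamma'\times\n{T}_\Gamma,\varpi_{\rm loc})$ is trivial. If $\expo{\ii\gamma'\cdot\beta}$ equals $1$ for all $\beta\in\R^d$, then $\gamma'=0$. If $\expo{\ii\gamma'\cdot y}=1$ for all $\gamma'\in\Gamma'$, then $y\in\Gamma$, i.e. $[y]=0$ in $\n{T}_\Gamma$.
\item \emph{Non-separability.} This follows from \cite[Corollary 4.25]{manuceau-sirugue-testard-verbeure-73}, since $\n{T}_\Gamma$ is uncountable.
\item \emph{Nuclearity.} Regard $\bb{A}_\Gamma$ as a twisted crossed product of the abelian algebra $C^*(\n{T}_{\Gamma,{\rm d}})$ by the amenable discrete group $\Gamma'$, i.e. $C^*(\tilde v_{[\beta]})\rtimes\Gamma'$ with the action implemented by $\tilde u_{\gamma'}$. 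Such crossed products of nuclear algebras by amenable groups are nuclear.
\end{itemize}

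The main obstacle is the nuclearity claim. The plan is to present $\bb{A}_\Gamma$ concretely as the crossed product $C(\rr{b}_{\rm d})\rtimes\Z^d$, with $\rr{b}_{\rm d}$ the dual of $\n{T}_{\Gamma,{\rm d}}$, and then quote amenability. A secondary point is that $\rho_\kappa$ must be bounded for the norm of $\bb{V}_\Gamma$. This relies on the universality of the minimal regular norm and is the detail to double-check against \cite[Section 3]{denittis-rendel-25}.
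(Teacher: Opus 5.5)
Your proposal is correct and follows the paper's proof. The paper also gets $\bb{I}_\kappa\subseteq\ker\rho_\kappa$ from $S_\gamma={\bf 1}$ on $\rr{h}_\Gamma$, obtains equality from maximality of $\bb{I}_\kappa$ (Proposition \ref{prop-max}) together with properness of the kernel, and concludes $\bb{A}_\kappa\simeq\rho_\kappa(\bb{V}_\Gamma)=\bb{C}(\rr{h}_\Gamma)\simeq\bb{A}_\Gamma$ via Lemma \ref{lemma-AG}. Your extra checks are sound and go beyond the paper, which takes the properties of $\bb{A}_\Gamma$ for granted: boundedness of $\rho_\kappa$, a trivial degeneracy group for simplicity, uncountability for non-separability, and nuclearity via amenability of $\Gamma'$.
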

\proof
Let us show that $\ker (\rho_\kappa) = \bb{I}_\kappa$. 
The fact that $\bb{I}_\kappa \subseteq \ker(\rho_\kappa)$ is immediate from Proposition \ref{prop:ideal_structure_Ik}. In fact, for any $\gamma \in \Gamma$ and $a\in \bb{V}_\Gamma$,
\begin{align*}
    \rho_\kappa \left( (v_\gamma-\expo{-\ii \gamma\cdot \kappa} {\bf 1}) a \right) \;&=\; (\expo{-\ii \gamma\cdot \kappa} S_\gamma-\expo{-\ii \gamma\cdot \kappa} {\bf 1}_{\rr{h}_\Gamma}) \rho_\kappa(a) \\
    &=\; (\expo{-\ii \gamma\cdot \kappa} -\expo{-\ii \gamma\cdot \kappa} ) \rho_\kappa(a) \;=\;0 \;,
\end{align*}
where we used that $S_\gamma = {\bf 1}_{\rr{h}_\Gamma}$ for any $\gamma\in \Gamma$. On the other hand, since $\rho_\kappa({\bf 1}) = {\bf 1}_{\rr{h}_\Gamma} \neq 0$, $\ker(\rho_\kappa)$ is a proper ideal. Then, given that $\bb{I}_\kappa$ is a maximal ideal in view of Proposition \ref{prop-max}
, and $\bb{I}_\kappa \subseteq \ker(\rho_\kappa)$, they must be equal. 
Therefore $\bb{A}_\kappa= \bb{V}_\Gamma/\ker(\rho_\kappa)$.
The first isomorphism theorem of quotient space implies that $\widetilde{\rho}_\kappa:\bb{A}_\kappa\to {\rho}_\kappa(\bb{V}_\Gamma)$ is a $\ast$-isomorphism.
To conclude the argument let us observe that ${\rho}_\kappa(\bb{V}_\Gamma)=\bb{C}(\rr{h}_\Gamma)$ and $\bb{C}(\rr{h}_\Gamma)\simeq \bb{A}_\Gamma$ in view of Lemma \ref{lemma-AG}.
\qed

\medskip

Following Section \ref{sec:sts_Cs_bundles}, let $\pi_\kappa:\bb{V}_\Gamma\to \bb{A}_\kappa$, with $\kappa\in\n{B}_\Gamma$ be the quotient map.

\begin{lemma}\label{lem:cont_bund_str}
The map $\kappa\mapsto \|\pi_\kappa(a)\|_{\bb{A}_\kappa}$ is continuous for every $a\in\bb{V}_\Gamma$.
\end{lemma}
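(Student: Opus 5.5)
By the proof of Proposition \ref{prop:typ-fib}, $\ker(\rho_\kappa)=\bb{I}_\kappa$ and $\widetilde{\rho}_\kappa:\bb{A}_\kappa\to\bb{C}(\rr{h}_\Gamma)$ is a $*$-isomorphism, hence isometric. So
\[
\|\pi_\kappa(a)\|_{\bb{A}_\kappa}\;=\;\|\rho_\kappa(a)\|_{\bb{B}(\rr{h}_\Gamma)}\;,
\]
and it suffices to show that $\kappa\mapsto\|\rho_\kappa(a)\|$ is continuous on $\n{B}_\Gamma$. I will do this in three steps: identify the norms of $\rho_\kappa$ and $\rho_{\kappa'}$ via a unitary conjugation, handle the remaining scalar phase through an explicit twist of $\rho_0$, and pass from polynomials to general elements by density.

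\textbf{Reduction to a phase twist.} I would use the momentum translations $\theta_\alpha$ of Proposition \ref{prop:ideal_structure_Ik}. For $\beta\in\R^d$ one has $\theta_\alpha(v_\beta)=\expo{\ii\alpha\cdot\beta}v_\beta$, and $\theta_\alpha(u_{\gamma'})=u_{\gamma'}$. Comparing with \eqref{eq:rep_rho_k}, this gives $\rho_\kappa\circ\theta_\alpha=\rho_{\kappa-\alpha}$ on generators, hence on all of $\bb{V}_\Gamma$. The map $\theta_\alpha$ is an automorphism, but it does not fix $a$, so this identity alone does not yield continuity.

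Instead, I will compare $\rho_\kappa$ with a unitarily conjugated version of $\rho_0$. On $\rr{h}_\Gamma$, multiplication by $\expo{\ii\gamma'\cdot y}$ with $\gamma'\in\Gamma'$ is a legitimate unitary, but multiplication by $\expo{\ii\kappa\cdot y}$ is not well defined for general $\kappa$, since $y$ lives on the torus. Therefore I will work directly with the generator relations. For a finite linear combination $a=\sum_j c_j u_{\gamma'_j}v_{\beta_j}$ one has
\[
\rho_\kappa(a)\;=\;\sum_j c_j\expo{-\ii\kappa\cdot\beta_j}F_{\gamma'_j}S_{\beta_j}\;.
\]
Let $a^{(0)}$ denote the polynomial with the same coefficients. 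For $\kappa'$ close to $\kappa$,
\[
\|\rho_\kappa(a)-\rho_{\kappa'}(a)\|\;\le\;\sum_j|c_j|\,\big|\expo{-\ii\kappa\cdot\beta_j}-\expo{-\ii\kappa'\cdot\beta_j}\big|\;\xrightarrow[\kappa'\to\kappa]{}\;0\;.
\]
This gives continuity for polynomials, even with $\kappa$ taken in $\R^d$. I must still check compatibility with the identification $\n{B}_\Gamma=\R^d/\Gamma'$. Shifting $\kappa$ by $\gamma'\in\Gamma'$ multiplies $S_\beta$ by $\expo{-\ii\gamma'\cdot\beta}$. This coincides with conjugation by the unitary $F_{\gamma'}$, since $F_{\gamma'}S_\beta F_{\gamma'}^*=\expo{\ii\gamma'\cdot\beta}S_\beta$ up to sign convention, and $F_{\gamma'}$ commutes with the $F$'s. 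Hence $\|\rho_\kappa(a)\|$ is $\Gamma'$-periodic and descends to a well-defined function on $\n{B}_\Gamma$, independent of the chosen representative in $Q_{\Gamma'}$.

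\textbf{Density argument.} Each $\rho_\kappa$ is a $*$-homomorphism and therefore contractive. Given $a\in\bb{V}_\Gamma$ and $\epsilon>0$, choose a polynomial $a_\epsilon$ with $\|a-a_\epsilon\|<\epsilon$; such polynomials are dense by the presentation of $\bb{V}_\Gamma$. Then $\kappa\mapsto\|\rho_\kappa(a)\|$ is a uniform limit of the continuous functions $\kappa\mapsto\|\rho_\kappa(a_\epsilon)\|$, and is therefore continuous.

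The main obstacle I expect is the bookkeeping of $\kappa$ modulo $\Gamma'$, that is, confirming well-definedness through the $F_{\gamma'}$ conjugation with the correct sign. The norm continuity itself becomes elementary once the fibers are realized concretely on the single Hilbert space $\rr{h}_\Gamma$.
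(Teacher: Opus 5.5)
Your proof is correct, but it is organized differently from the paper's. You first pass to a concrete realization of the fibre. Using $\ker(\rho_\kappa)=\bb{I}_\kappa$ from Proposition \ref{prop:typ-fib}, you replace the quotient norm by $\|\rho_\kappa(a)\|$ on the fixed space $\rr{h}_\Gamma$. You then get norm continuity of $\kappa\mapsto\rho_\kappa(a)$ on $\R^d$ for polynomials via the explicit phase estimate, extend by density, and descend to $\n{B}_\Gamma$ using $\rho_{\kappa+\gamma'}(\cdot)=F_{\gamma'}^*\rho_\kappa(\cdot)F_{\gamma'}$. This identity holds exactly, because $F_{\gamma'}S_\beta F_{\gamma'}^*=\expo{\ii\gamma'\cdot\beta}S_\beta$, so there is no sign left to check.

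The paper never identifies the fibres in this proof. It works with the abstract quotient norm and uses the covariance $\theta_\alpha(\bb{I}_\kappa)=\bb{I}_{\kappa+\alpha}$ from Proposition \ref{prop:ideal_structure_Ik} to get $\|\pi_\kappa(a)\|=\|\pi_{\kappa_0}(\theta_{\kappa_0-\kappa}(a))\|$. It then concludes from the strong continuity of $\alpha\mapsto\theta_\alpha$.

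The identity $\rho_\kappa\circ\theta_\alpha=\rho_{\kappa-\alpha}$ that you set aside is exactly this route in your language. It gives $\rho_\kappa=\rho_{\kappa_0}\circ\theta_{\kappa_0-\kappa}$, hence $\bigl|\|\rho_\kappa(a)\|-\|\rho_{\kappa_0}(a)\|\bigr|\le\|\theta_{\kappa_0-\kappa}(a)-a\|\to 0$. Your polynomial-plus-density estimate is just a disguised proof of that strong continuity.

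The two versions buy different things:
\begin{itemize}
\item The paper's version does not depend on Propositions \ref{prop-max} and \ref{prop:typ-fib}. Continuity follows from translation covariance of the ideals alone, so the argument works for any $C(X)$-algebra whose ideals $\bb{I}_x$ are moved transitively and continuously along the base by a strongly continuous automorphism group.
\item Yours yields a stronger, concrete statement: norm continuity of $p\mapsto\rho_p(a)\in\bb{B}(\rr{h}_\Gamma)$ together with its $\Gamma'$-twisted periodicity. This is essentially the local trivialization used later in Theorem \ref{prop:Vgammabundle}.
\end{itemize}

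As a minor cleanup, the announced ``explicit twist of $\rho_0$'' and the element $a^{(0)}$ are never used and can be deleted.
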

\proof

Since a generic element of $\bb{I}_\kappa$ has the form $\Xi(g) b$ for some $g\in C_0(\n{B}_\Gamma\setminus\{\kappa\})$ and $b\in\bb{V}_\Gamma$, one has that
\begin{align*}
\|\pi_\kappa(a)\|_{\bb{A}_\kappa} \;&=\; \inf \left\{ \|a-\Xi(g)b\| \;\big|\; {g\in C_0(\n{B}_\Gamma\setminus \{\kappa\}),\, b\in\bb{V}_\Gamma} \right\} \\
&=\; \inf \left\{ \| \theta_\alpha (a-\Xi(g)b)\| \;\big|\; {g\in C_0(\n{B}_\Gamma\setminus \{\kappa\}),\, b\in\bb{V}_\Gamma} \right\} \\
&=\;\inf \left\{ \|\theta_\alpha(a)-\Xi(g(\cdot-\alpha))\theta_\alpha(b)\| \;\big|\; {g\in C_0(\n{B}_\Gamma\setminus \{\kappa\}),\, b\in\bb{V}_\Gamma} \right\}\\
&=\;\inf \left\{ \|\theta_\alpha(a)-\Xi(g)b\| \;\big|\; {g\in C_0(\n{B}_\Gamma\setminus \{\kappa+\alpha\}),\, b\in\bb{V}_\Gamma} \right\}\\
&=\; \|\pi_{\kappa+\alpha}(\theta_\alpha(a))\|_{\bb{A}_{\kappa+\alpha}} \;.
\end{align*}
Then, substituting $\alpha = \kappa_0-\kappa$ in the above yields
\[
\lim_{\kappa\to\kappa_0} \|\pi_\kappa(a)\|_{\bb{A}_\kappa} \;=\;\lim_{\kappa\to\kappa_0} \|\pi_{\kappa_0}(\theta_{\kappa_0-\kappa}(a))\|_{\bb{A}_{\kappa_0}} \;=\; \|\pi_{\kappa_0}(a)\|_{\bb{A}_{\kappa_0}} \;,
\]
where the last equation is obtained by noting that
\begin{align*}
\left| \|\pi_{\kappa_0}(\theta_{\kappa_0-\kappa}(a))\|_{\bb{A}_{\kappa_0}} - \|\pi_{\kappa_0}(a)\|_{\bb{A}_{\kappa_0}} \right|  \;&\leq\; \|\pi_{\kappa_0}(\theta_{\kappa_0-\kappa}(a)-a)\|_{\bb{A}_{\kappa_0}} \\
&\leq\; \|\theta_{\kappa_0-\kappa}(a)-a\| \xrightarrow{\kappa\to\kappa_0}0\;.
\end{align*}
in view of the fact that $\alpha\mapsto \theta_\alpha$ is strongly continuous.
Hence the map $\kappa\mapsto \|\pi_\kappa(a)\|_{\bb{A}_\kappa}$ is continuous.
\qed

\medskip

The following result provides a complete description of  $\bb{V}_\Gamma$  as $C^*$-bundle.

\begin{theorem}\label{prop:Vgammabundle}
   $\bb{V}_\Gamma$ has the structure of a locally trivial $C^*$-bundle over $\n{B}_\Gamma$ with typical fiber $\bb{A}_\Gamma$.
   \end{theorem}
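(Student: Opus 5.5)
The plan is to combine the results already established in this section with the standard characterization of continuous $C^*$-bundles via $C(X)$-algebras, and then to produce local trivializations explicitly.

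\emph{Continuity.} We already know that $\Xi:C(\n{B}_\Gamma)\to\bb{Z}(\bb{V}_\Gamma)$ is a unital structure isomorphism, so $\bb{V}_\Gamma$ is a unital $C(\n{B}_\Gamma)$-algebra. Its fibers $\bb{A}_\kappa=\bb{V}_\Gamma/\bb{I}_\kappa$ are all isomorphic to $\bb{A}_\Gamma$ by Proposition \ref{prop:typ-fib}. Lemma \ref{lem:cont_bund_str} shows that $\kappa\mapsto\|\pi_\kappa(a)\|$ is continuous for every $a$. By the general correspondence recalled in Section \ref{sec:sts_Cs_bundles} (a $C(X)$-algebra with continuous norm functions is a continuous $C^*$-bundle over $X$), this makes $\bb{V}_\Gamma$ a continuous $C^*$-bundle over $\n{B}_\Gamma$ with all fibers isomorphic to $\bb{A}_\Gamma$. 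It remains to prove local triviality.

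\emph{Local trivializations.} I would build these from the momentum translations. By Proposition \ref{prop:ideal_structure_Ik}, $\theta_\alpha$ maps $\bb{I}_\kappa$ onto $\bb{I}_{\kappa+\alpha}$ and intertwines the structure map, $\theta_\alpha\circ\Xi(g)=\Xi(g(\cdot-\alpha))$. Hence $\theta_\alpha$ descends to $*$-isomorphisms $\widehat\theta_\alpha:\bb{A}_\kappa\to\bb{A}_{\kappa+\alpha}$ with $\widehat\theta_\alpha\circ\pi_\kappa=\pi_{\kappa+\alpha}\circ\theta_\alpha$. Fix $\kappa_0$ and an open neighbourhood $U\ni\kappa_0$ that lifts homeomorphically to a ball $\tilde U\subset\R^d$ around a representative of $\kappa_0$. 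For $\kappa\in U$, let $\alpha(\kappa)=\tilde\kappa-\tilde\kappa_0\in\R^d$; this depends continuously on $\kappa$ and vanishes at $\kappa_0$. Define
\[
\Phi_U:\ \bb{V}_\Gamma|_{\overline U}\longrightarrow C(\overline U,\bb{A}_{\kappa_0}),\qquad \Phi_U(a)(\kappa):=\widehat\theta_{-\alpha(\kappa)}\big(\pi_\kappa(a)\big)=\pi_{\kappa_0}\big(\theta_{-\alpha(\kappa)}(a)\big).
\]
Each $\Phi_U(a)$ is norm continuous in $\kappa$: strong continuity of $\alpha\mapsto\theta_\alpha$ gives $\|\theta_{-\alpha(\kappa)}(a)-\theta_{-\alpha(\kappa')}(a)\|\to0$, and $\pi_{\kappa_0}$ is contractive. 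Fiberwise, $\Phi_U$ is a $*$-isomorphism, since each $\widehat\theta$ is one. Moreover $\Phi_U$ is $C(\overline U)$-linear, because $\theta$ shifts $\Xi(g)$ exactly by $\alpha(\kappa)$, which cancels in the fiber at $\kappa$.

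\emph{Main obstacle.} The key step is surjectivity of $\Phi_U$ onto all of $C(\overline U,\bb{A}_{\kappa_0})$, where $\bb{V}_\Gamma|_{\overline U}$ denotes the quotient by $\Xi(C_0(\n{B}_\Gamma\setminus\overline U))\bb{V}_\Gamma$. I would handle it as follows. The image is a $C(\overline U)$-submodule and a $C^*$-subalgebra that is fiberwise surjective, since $\pi_{\kappa_0}\circ\theta_{-\alpha}$ is onto. It is also closed, because $\Phi_U$ is isometric: its norm is the supremum of the fiber norms. A standard Stone--Weierstrass/partition-of-unity density argument for $C(X)$-modules (e.g.\ Dixmier's lemma on sections) then shows the image is dense, hence everything. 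Injectivity follows from the norm formula $\|a\|=\sup_\kappa\|\pi_\kappa(a)\|$ of continuous $C^*$-bundles.

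Since $\n{B}_\Gamma$ is compact, finitely many such $U$ cover it, and this gives local triviality with typical fiber $\bb{A}_{\kappa_0}\simeq\bb{A}_\Gamma$. Note, however, that $\kappa\mapsto\alpha(\kappa)$ cannot be chosen globally and continuously on the torus, so the argument gives local, not global, triviality.
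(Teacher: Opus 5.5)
Your proof is correct, and it differs from the paper's mainly in how surjectivity of the local chart is obtained.

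\textbf{The charts coincide.} Since $\theta_{-\kappa'}(u_{\gamma'}v_\beta)=\expo{-\ii\kappa'\cdot\beta}u_{\gamma'}v_\beta$, the paper's $\rho^0_{\kappa'}$ on the small cube $K_0$ is exactly $\rho_0\circ\theta_{-\kappa'}$. This is your $\Phi_U$ at $\kappa_0=0$, followed by the identification $\bb{A}_0\simeq\bb{A}_\Gamma$ from Proposition \ref{prop:typ-fib}. The paper then transports this single chart with $\theta_\kappa$, whereas you write it down at every base point directly.

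\textbf{Surjectivity is where you genuinely diverge.} The paper proves it by hand. It checks that the functions $G_{\gamma',\gamma,\beta}$ generate $C(K_0,\bb{A}_\Gamma)$ and exhibits explicit preimages $\Xi(h)v_\gamma u_{\gamma'}v_y$. Here $h$ is a Tietze-extended phase that compensates the factor $\expo{-\ii\kappa'\cdot y}$, which is not $\Gamma'$-periodic. You replace this with the abstract fact that a closed $C(\overline U)$-submodule of $C(\overline U,B)$ which is surjective in every fiber is everything, proved by a partition of unity.

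\textbf{What each route buys.} Yours is shorter and more robust. It uses only the $C(X)$-structure, the covariance $\theta_\alpha\circ\Xi(g)=\Xi(g(\cdot-\alpha))$, and strong continuity of $\theta$, and it needs no generators. It also makes Lemma \ref{lem:cont_bund_str} redundant, since $\|\pi_\kappa(a)\|=\|\Phi_U(a)(\kappa)\|$ is continuous on each chart. The paper's version, in exchange, gives the trivialization concretely in the representation on $\rr{h}_\Gamma$.

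\textbf{Two small points to tighten.}
\begin{enumerate}
\item Choose $U$ so that its closure $\overline U$, and not only $U$, lifts injectively to $\R^d$; otherwise $\alpha$ is not defined and continuous on $\overline U$.
\item For injectivity and isometry of $\Phi_U$ you need the restricted norm formula $\|a+\bb{I}_{\overline U}\|=\sup_{\kappa\in\overline U}\|\pi_\kappa(a)\|$, equivalently $\bigcap_{\kappa\in\overline U}\bb{I}_\kappa=\bb{I}_{\overline U}$. The global formula you cite is not quite enough. The restricted one is a standard fact for $C(X)$-algebras, and the paper also uses it without proof.
\end{enumerate}
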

\proof
Proposition \ref{prop:typ-fib} and Lemma \ref{lem:cont_bund_str} imply that $\bb{V}_\Gamma$ has the structure of a $C^*$-bundle over $\n{B}_\Gamma$ with typical fiber $\bb{A}_\Gamma$.
Let us now prove its local triviality. We will show next that there exists a compact neighborhood $K_0$ of $0 \in \n{B}_\Gamma$ such that 
\begin{equation}\label{eq:iso_001}
\bb{V}_\Gamma|_{K_0}\; \simeq\; C(K_0,\bb{A}_\Gamma)\;,
\end{equation}
 where $\bb{V}_\Gamma|_{K_0}: = \bb{V}_\Gamma / \bb{I}_{K_0}$ and 
$$
\bb{I}_{K_0}\;: =\: \Xi(C_0(\n{B}_\Gamma\setminus K_0)) \bb{V}_\Gamma \;=\: \bigcap_{\kappa'\in K_0} \bb{I}_{\kappa'}\;.
$$  
Let assume for the moment the validity of \eqref{eq:iso_001}. One may define $K_\kappa:= K_0+\kappa$ to cover any other point $\kappa\in\n{B}_\Gamma$. 
By Proposition \ref{prop:ideal_structure_Ik}, the $*$-automorphism $\theta_\kappa$ of $\bb{V}_\Gamma$ sends $\bb{I}_{K_0}$ to $\bb{I}_{K_\kappa}$ isomorphically, inducing the isomorphism
\[
\bb{V}_\Gamma|_{K_\kappa}\; \simeq\; \bb{V}_\Gamma|_{K_0}\; \simeq\; C(K_0,\bb{A}_\Gamma) \;\simeq\; C(K_\kappa,\bb{A}_\Gamma) \;,
\]
which implies the local triviality of $\bb{V}_\Gamma$.

Fix $0<\epsilon<1/4$ and define the set 
$$
Q_0 \;:=\;  \left\{ y_1\rr{f}^1+\ldots+y_d\rr{f}^d \in\R^d\;|\; y_1,\ldots,y_d\in(-\epsilon,\epsilon)\right\} 
$$ 
where $\rr{f}^1,\dots,\rr{f}^d$ are the generators of $\Gamma'$. 
In other words $Q_0\simeq (-\epsilon,\epsilon)^d$ is 
the open rescaled cell  centered at $0$ with   sides of length $2\epsilon$. 
Let $O_{0}\subset \n{B}_\Gamma$ be  the set of points $\kappa'\in\n{B}_\Gamma$ that are the images of points in $Q_0$ under the quotient $Q_0\subset\R^d\to\R^d/\Gamma'\simeq \n{B}_\Gamma$. Clearly $O_{0}$ is open, $K_{0}:=\overline{O_{0}}$ is compact 
and $K_{0} \subsetneq \n{B}_\Gamma$. Most importantly, $\overline{Q_0}$ does not get to span the width of a unit cell in any direction, so the boundary conditions for the quotient are not relevant and the quotient map $\overline{Q_0}\to K_{0}$ is injective. Hence, we have homeomorphisms $O_{0}\simeq Q_0 \simeq (-\epsilon,\epsilon)^d$ 
and $K_{0} \simeq \overline{Q_0} \simeq [-\epsilon,\epsilon]^d$. For $\kappa'\in O_{0}$, define $\rho_{\kappa'}^0: \bb{V}_\Gamma \to \bb{A}_\Gamma$ by 
\[
\rho_{\kappa'}^0(u_{\gamma'} v_\beta) \;:=\; \expo{-\ii \kappa'\cdot \beta} F_{\gamma'} S_\beta \;, \qquad  \forall \gamma'\in \Gamma',\; \beta\in\R^d\;,
\]
where $\kappa'$ is interpreted as its representative in $Q_0$. Note that this contrasts with the definition of the representations $\rho_\kappa$ in \eqref{eq:rep_rho_k} where in the same exponential we took $\kappa$ to be its representative in the unit cell $Q_{\Gamma'}$. 
The main difference is that the map 
 $Q_{\Gamma'}\ni\kappa \mapsto \rho_\kappa(u_{\gamma'}v_\beta)$ is not continuous at $0$ while $K_{0}\ni\kappa'\mapsto\rho_{\kappa'}^0(u_{\gamma'} v_\beta)$ is continuous at all points of $K_{0}$. Clearly, $\rho_{\kappa'}^0$ remains a surjective $*$-homomorphism with kernel $\bb{I}_{\kappa'}$.
Now, let us define the $*$-homomorphism ${\Phi}: \bb{V}_\Gamma \to C(K_0,\bb{A}_\Gamma)$ by
\[
\left[ {\Phi}(u_{\gamma'} v_\beta)\right](\kappa') \;:=\; \rho_{\kappa'}^0(u_{\gamma'} v_\beta) \;=\; \expo{-\ii \kappa'\cdot \beta} F_{\gamma'} S_\beta \;.
\]
With the same argument as in the proof of Proposition \ref{prop:typ-fib} one can show that
\[
\ker({\Phi}) \;=\; \bigcap_{\kappa'\in K_0} \bb{I}_{\kappa'} \;=\; \bb{I}_{K_0}\;.
\]
Therefore, invoking the universal property of the quotient, one can define an injective $*$-homomorphism $\Psi: \bb{V}_\Gamma |_{K_0} \to C(K_0,\bb{A}_\Gamma)$ by the prescription $\Psi(a+\bb{I}_{K_0}): = {\Phi}(a)$ for every $a\in\bb{V}_\Gamma$. All that is needed for $\Psi$ to be a $*$-isomorphism is surjectivity. Let $G_{\gamma',\gamma,\beta}\in C(\n{B}_\Gamma,\bb{A}_\Gamma)$ be the map $G_{\gamma',\gamma,\beta}(\kappa) := \expo{-\ii \kappa\cdot\gamma} F_{\gamma'} S_\beta$. Note that both $C(\n{B}_\Gamma,\bb{A}_\Gamma)$, and  its subalgebra $C(K_0,\bb{A}_\Gamma)$, are generated by the functions  $G_{\gamma',\gamma,\beta}$. Thus, it suffices to show that there exists an element $a \in \bb{V}_\Gamma$ such that ${\Psi}(a) = G_{\gamma',\gamma,\beta}$. First note that, for any $\gamma\in \Gamma$, $\gamma'\in\Gamma'$ and $y\in Q_\Gamma$,
\[
\left[{\Psi}(v_\gamma u_{\gamma'} v_y)\right](\kappa') \;=\; \expo{-\ii \kappa\cdot\gamma} \expo{-\ii \kappa'\cdot y} F_{\gamma'} S_\beta \;=\; \expo{-\ii \kappa'\cdot y} G_{\gamma'\gamma,\beta}(\kappa') \;.
\]
where $\beta=y+\gamma$.
Let $h: \n{B}_\Gamma \to \C$ be a continuous function such that $h(\kappa') = \expo{\ii \kappa'\cdot y}$ for all $\kappa'\in K_0$. Note that we can construct such a function as  follows. Let $\rr{d} := \rr{f}^1+\dots+\rr{f}^d$ be the \virg{diagonal} vector (the sum of the basis vectors of $\Gamma'$). Since $Q_0$ fits fully within the translated unit cell $Q_{\Gamma'}-\rr{d}/2$ {(note it is now centered at $0$), one can use Tietze's extension theorem to define a function $\widetilde{h}$ taking the values $\widetilde{h}(p) = \expo{\ii p\cdot y}$ for $p\in K_0$ and $\widetilde{h}(p) = 0$ for $p\in \partial (Q_0-\rr{d}/2)$, and then extend it $\Gamma'$-periodically}. Then $h$ is just the map obtained by factoring through the quotient. With this, for any $\kappa'\in K_0$,
\[
\left[{\Psi}( \Xi(h) v_\gamma u_{\gamma'} v_y)\right](\kappa') \;=\; \expo{-\ii \kappa'\cdot y} h(\kappa') g(\kappa') \;=\; G_{\gamma'\gamma,\beta}(\kappa')\;.
\]
Consequently, $\Phi(\Xi(h) v_\gamma u_{\gamma'} v_y + \bb{I}_{K_0}) = G_{\gamma'\gamma,\beta}$. Since the collection of $G_{\gamma'\gamma,\beta}$ provides a family of generators of $C(K_0,\bb{A}_\Gamma)$, and $\Phi$ is an isometry, one concludes that $\Phi:\bb{V}_\Gamma |_{K_0} \to C(K_0,\bb{A}_\Gamma)$ is indeed surjective and thus a $*$-isomorphism.
\qed

\subsection{Bundle structure of the pure states of the \texorpdfstring{$\Gamma$}{Γ}-invariant algebra}
As a preparation for the description of the topology of Bloch-wave states, let us describe the bundle structure of the pure states of $\bb{V}_\Gamma$.
The next result  generalizes the description of the pure states of $\bb{V}_\Gamma$ presented in \cite[Proposition 3.26]{denittis-rendel-25}.

\begin{proposition}\label{prop:pure_st_gamma}
    Every element of $\s{P}_{\bb{V}_\Gamma}$ is of the form
    \begin{equation}\label{eq_st_pur_gamm_gen}
    \omega_{(\kappa,\nu)}(u_\alpha v_\beta) \;:=\;\expo{-\ii \kappa\cdot \beta } \;\nu(F_{\alpha}S_\beta)\;
    \end{equation}
    where $(\kappa,\nu)\in Q_{\Gamma'} \times \s{P}_{\bb{A}_\Gamma}$. This correspondence is bijective.
\end{proposition}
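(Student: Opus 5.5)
The approach is to use the $C(\n{B}_\Gamma)$-algebra structure of $\bb{V}_\Gamma$ established above: every pure state factors through a fiber, and the fibers are identified with $\bb{A}_\Gamma$ through the representations $\rho_\kappa$.

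\textbf{Step 1: each pure state lives on a single fiber.} Let $\omega\in\s{P}_{\bb{V}_\Gamma}$ and consider its GNS representation $(\pi_\omega,\rr{h}_\omega,\xi_\omega)$. It is irreducible, so by Schur's lemma the center $\bb{Z}(\bb{V}_\Gamma)=\Xi(C(\n{B}_\Gamma))$ is mapped to scalars. Hence $\omega\circ\Xi$ is a character of $C(\n{B}_\Gamma)$, that is, evaluation at a unique point $\kappa\in\n{B}_\Gamma$, which we take as its representative in $Q_{\Gamma'}$. In particular $\pi_\omega(v_\gamma)=\expo{-\ii\gamma\cdot\kappa}{\bf 1}$ for every $\gamma\in\Gamma$. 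By Proposition \ref{prop:ideal_structure_Ik}, $\pi_\omega$ therefore annihilates the generators $(v_\gamma-\expo{-\ii\gamma\cdot\kappa}{\bf 1})a$ of $\bb{I}_\kappa$. Consequently $\omega$ vanishes on $\bb{I}_\kappa$ and factors through the quotient $\bb{A}_\kappa=\bb{V}_\Gamma/\bb{I}_\kappa$.

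\textbf{Step 2: transport the state to $\bb{A}_\Gamma$.} By Proposition \ref{prop:typ-fib}, $\ker\rho_\kappa=\bb{I}_\kappa$ and $\widetilde\rho_\kappa:\bb{A}_\kappa\to\bb{A}_\Gamma$ is a $*$-isomorphism. Define $\nu:=\omega\circ\widetilde\rho_\kappa^{-1}$. States of a quotient that are pure correspond exactly to pure states of the original algebra vanishing on the ideal, since the GNS representations coincide and irreducibility is preserved. So $\nu\in\s{P}_{\bb{A}_\Gamma}$. Evaluating on generators with \eqref{eq:rep_rho_k} and the commutation relations gives
\[
\omega(u_\alpha v_\beta)\;=\;\nu\big(\rho_\kappa(u_\alpha v_\beta)\big)\;=\;\expo{-\ii\kappa\cdot\beta}\,\nu(F_\alpha S_\beta)\;,
\]
which is exactly \eqref{eq_st_pur_gamm_gen}.

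\textbf{Step 3: the converse.} Given $(\kappa,\nu)$, set $\omega:=\nu\circ\rho_\kappa$. Since $\rho_\kappa$ is surjective onto $\bb{A}_\Gamma$ by Proposition \ref{prop:typ-fib}, the GNS representation of $\omega$ is that of $\nu$ composed with $\rho_\kappa$, and it has the same commutant, hence is irreducible. Thus $\omega$ is pure and of the stated form.

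\textbf{Step 4: bijectivity.} This is the main point requiring care, because the formula involves the representative of $\kappa$ in $Q_{\Gamma'}$ and not just its class. For injectivity, evaluate on $v_\gamma$ with $\gamma\in\Gamma$. Since $S_\gamma={\bf 1}$, we get $\omega_{(\kappa,\nu)}(v_\gamma)=\expo{-\ii\kappa\cdot\gamma}$, which determines $\kappa\in\n{B}_\Gamma$, and hence its unique representative in $Q_{\Gamma'}$. With $\kappa$ fixed, the values $\nu(F_\alpha S_\beta)=\expo{\ii\kappa\cdot\beta}\omega(u_\alpha v_\beta)$ are determined. The elements $F_\alpha S_\beta$ span a dense subalgebra of $\bb{A}_\Gamma$, so $\nu$ is determined by continuity. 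Surjectivity is Steps 1 and 2.

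One must check that the fundamental-cell convention matches that of $\rho_\kappa$ in \eqref{eq:rep_rho_k}, so that no extra phase $\expo{-\ii\gamma'\cdot\beta}$ with $\gamma'\in\Gamma'$ appears. Changing the representative of $\kappa$ changes $\rho_\kappa$ by the automorphism $S_\beta\mapsto\expo{-\ii\gamma'\cdot\beta}S_\beta$ of $\bb{A}_\Gamma$. This affects the parametrization but not the bijectivity once the representative is fixed in $Q_{\Gamma'}$.
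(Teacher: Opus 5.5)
Your proof is correct and follows essentially the paper's route: you factor a pure state through $\bb{I}_\kappa=\ker\rho_\kappa$ to obtain $\nu$, and conversely you get purity of $\nu\circ\rho_\kappa$ from irreducibility of $\pi_\nu\circ\rho_\kappa$. You also supply two details the paper leaves implicit: the Schur-lemma argument that a pure state has a well-defined quasi-momentum $\kappa$, and the explicit injectivity check via $\omega_{(\kappa,\nu)}(v_\gamma)=\expo{-\ii\kappa\cdot\gamma}$ together with density of the span of the $F_\alpha S_\beta$.
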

\proof
Suppose $\omega\in \s{P}_{\bb{V}_\Gamma}$ has quasi-momentum $\kappa$. Then, $\omega(v_\gamma a) = \expo{-\ii\kappa\cdot \gamma} \omega(a)$, and it follows that $\ker(\rho_\kappa) = \bb{I}_\kappa \subseteq \ker(\omega)$. From the universal property of the quotient, it follows that there exists a state $\nu \in \s{S}_{\bb{A}_\Gamma}$ such that $\omega = \nu \circ \rho_\kappa$, with $\rho_\kappa$ as in \eqref{eq:rep_rho_k}.
This implies the formula \eqref{eq_st_pur_gamm_gen}, and clearly the purity of $\omega$ implies the purity of $\nu$, so $\nu \in \s{P}_{\bb{A}_\Gamma}$. 
Let us now show that $\omega_{(\kappa,\nu)}$ is pure for any $(\kappa,\nu)\in Q_{\Gamma'} \times \s{P}_{\bb{A}_\Gamma}$. Let $(\pi_\nu,\s{H}_\nu, \psi_\nu)$ be the GNS triplet of $\nu\in \s{P}_{\bb{A}_\Gamma}$. Then $\pi_\nu$ must be irreducible, and for any $a\in \bb{V}_\Gamma$,
\[
\omega(a) \;=\; \nu\circ\rho_\kappa (a) \;=\; \langle \psi_\nu , \pi_\nu\circ \rho_\kappa(a) \psi_\nu\rangle_{\s{H}_\nu} \;.
\]
So $(\pi_\nu\circ \rho_\kappa,\s{H}_\nu, \psi_\nu)$ is a GNS triplet for $\omega$, and $\pi_\nu\circ \rho_\kappa$ is irreducible by \cite[Theorem 10.4.3]{dixmier-77}, implying that $\omega$ is pure.
\qed

\medskip

\begin{remark}[$\Gamma$-invariant pure states of $\bb{W}$]
  It is worth noting that the previous result shows that any $\Gamma$-invariant pure state of $\bb{W}$, $\omega\in \s{P}_\bb{W}^{\Gamma}$, has the form $\omega = \omega_{(\kappa,\nu)} := \nu\circ\rho_\kappa \circ \langle\cdot\rangle_\Gamma$, for some $(\kappa,\nu)\in Q_{\Gamma'} \times \s{P}_{\bb{A}_\Gamma}$, with $\langle\cdot\rangle_\Gamma$ the conditional expectation $\bb{W}\to \bb{V}_\Gamma$ described in \cite[Equation 3.8]{denittis-rendel-25}. This is seen in the generators as
\[
\omega(u_\alpha v_\beta) \;=\; \omega_{(\kappa,\nu)}(u_\alpha v_\beta) \;:=\;  \chi_{\Gamma'}(\alpha)\; \expo{-\ii \kappa\cdot \beta } \; \nu(F_{\alpha} S_\beta)\;.
\]
However, some states of the form $\omega_{(\kappa,\nu)}$ with $(\kappa,\nu)\in Q_{\Gamma'} \times \s{P}_{\bb{A}_\Gamma}$ may fail to be pure, as remarked in the proof of \cite[Proposition 3.22]{denittis-rendel-25}.
\hfill$\blacktriangleleft$
\end{remark}

\medskip

Note that the definition of the representations $\rho_\kappa$ in \eqref{eq:rep_rho_k} can be extended from $\kappa\in\n{B}_\Gamma$ to $p\in\R^d$ in a natural way. This yields a family  of representations $\rho_p$ of $\bb{V}_\Gamma$, parametrized  by $p\in\R^d$, where $\rho_p$ and $\rho_{p'}$ are unitarily equivalent if and only if $p-p'\in \Gamma'$. Then, given a pure state $\eta\in \s{P}_{\bb{A}_\Gamma}$ and a point $p\in\R^d$, one can define the state $\omega_{(p,\eta)} := \eta\circ \rho_p \in\s{P}_{\bb{V}_\Gamma}$. Let $\Gamma'$ act on $\s{P}_{\bb{A}_\Gamma}$ by $(\gamma'\cdot \eta) (F_{\zeta'} S_\beta) := \expo{\ii\gamma'\cdot\beta} \eta(F_{\zeta'} S_\beta)$. This action is well-defined since the resulting state is still $\Gamma$-periodic in $\beta$. Then, just as in the case for Bloch-wave states, one has the covariance property $\omega_{(p+\gamma',\gamma'\cdot\eta)} = \omega_{(p,\eta)}$. By equipping $\R^d\times \s{P}_{\bb{A}_\Gamma}$ with the $\Gamma'$-action $\gamma'\cdot (p,\eta) = (p+\gamma',\gamma'\cdot\eta)$, one can define 
\[
\rr{S}_1 (\bb{E}_\Gamma) \;:=\; \big(\R^d\times \s{P}_{\bb{A}_\Gamma}\big)/{\Gamma'} \;.
\] 
By construction, $\rr{S}_1 (\bb{E}_\Gamma)$ is a locally trivial fiber bundle over $\n{B}_\Gamma$ with typical fiber $\s{P}_{\bb{A}_\Gamma}$, the bundle map given by $\Gamma'\cdot(p,\eta) \mapsto \kappa_p \in \n{B}_\Gamma$. 

\begin{proposition}\label{prop:homeo_pure_bundle}
    The prescription \eqref{eq_st_pur_gamm_gen} provides an homeomorphism
    \[
    \s{P}_{\bb{V}_\Gamma} \;\simeq\; \rr{S}_1(\bb{E}_\Gamma) \;.
    \]
    In particular, the map $\rr{j}(\omega_{(\kappa,\eta)}) := \omega_{(\kappa,\eta)}\circ \Xi = \delta_\kappa$, endows $\s{P}_{\bb{V}_\Gamma}$ with the structure of a fiber bundle over $\n{B}_\Gamma$ with typical fiber $\s{P}_{\bb{A}_\Gamma}$.
\end{proposition}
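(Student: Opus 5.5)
We will define $\Phi:\R^d\times\s{P}_{\bb{A}_\Gamma}\to\s{P}_{\bb{V}_\Gamma}$ by $\Phi(p,\eta):=\eta\circ\rho_p=\omega_{(p,\eta)}$, where $\R^d$ carries its Euclidean topology and $\s{P}_{\bb{A}_\Gamma}$ the weak-$\ast$ topology. The plan is to show that $\Phi$ is continuous, $\Gamma'$-invariant, and that the induced map $\widetilde{\Phi}:\rr{S}_1(\bb{E}_\Gamma)\to\s{P}_{\bb{V}_\Gamma}$ is a bijection. We will then deduce that it is a homeomorphism, and finally identify the composition with $\rr{j}$ as the bundle projection.

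\emph{Continuity.} States are bounded by norm $1$, and the span of the monomials $u_\alpha v_\beta$ is dense in $\bb{V}_\Gamma$. So it suffices to check continuity of $(p,\eta)\mapsto \expo{-\ii p\cdot\beta}\eta(F_\alpha S_\beta)$ for each fixed $\alpha\in\Gamma'$, $\beta\in\R^d$, and this is a product of two continuous functions. \emph{Invariance.} The covariance $\omega_{(p+\gamma',\gamma'\cdot\eta)}=\omega_{(p,\eta)}$ is a one-line computation on generators, because the phase $\expo{-\ii\gamma'\cdot\beta}$ coming from $\rho_{p+\gamma'}$ cancels the phase $\expo{\ii\gamma'\cdot\beta}$ coming from $\gamma'\cdot\eta$. \emph{Bijectivity.} Surjectivity is Proposition \ref{prop:pure_st_gamma}: every pure state is $\omega_{(\kappa,\nu)}$ with $\kappa\in Q_{\Gamma'}$. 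For injectivity, suppose $\omega_{(p,\eta)}=\omega_{(p',\eta')}$. Evaluating on the central elements $v_\gamma$, and using $S_\gamma={\bf 1}$, gives $\expo{-\ii p\cdot\gamma}=\expo{-\ii p'\cdot\gamma}$ for all $\gamma\in\Gamma$, hence $p'-p=:\gamma'\in\Gamma'$. Applying the covariance then reduces the problem to $\omega_{(p,\eta)}=\omega_{(p,\eta'')}$. Since $\rho_p$ is onto $\bb{A}_\Gamma$, this forces $\eta=\eta''$. The map $\widetilde{\Phi}$ is continuous by the universal property of the quotient topology.

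\emph{Inverse continuity (main obstacle).} Neither space is compact: $\s{P}_{\bb{A}_\Gamma}$ is not weak-$\ast$ closed in general. So we cannot invoke a compact-Hausdorff argument, and must build the inverse locally. We will use the local trivialization from the proof of Theorem \ref{prop:Vgammabundle}. Over the neighborhood $O_{\kappa_0}$ of a point $\kappa_0$, choose the continuous local representatives $\kappa'\mapsto p(\kappa')\in\R^d$, i.e.\ the representatives in $Q_0+\kappa_0$. A state $\omega$ with quasi-momentum in $O_{\kappa_0}$ is recovered as $\eta=\omega\circ\rho_{p}^{-1}$. Concretely, $\eta(F_\alpha S_\beta)=\expo{\ii p\cdot\beta}\omega(u_\alpha v_\beta)$, which is continuous in $\omega$ as long as $p=p(\rr{j}(\omega))$ depends continuously on $\omega$. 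That in turn reduces to the continuity of $\rr{j}$. This holds because $\rr{j}(\omega)$ is the restriction of $\omega$ to the center $\Xi(C(\n{B}_\Gamma))$, and restriction is weak-$\ast$ continuous. Pure states restrict to characters of the center: $\omega(v_\gamma)=\expo{-\ii\kappa\cdot\gamma}$, i.e.\ $\rr{j}(\omega)=\delta_\kappa$. Hence $\omega\mapsto(p(\rr{j}(\omega)),\omega\circ\rho_{p}^{-1})$ is a continuous local inverse on $\rr{j}^{-1}(O_{\kappa_0})$, and these local inverses glue modulo $\Gamma'$. The weak-$\ast$ continuity of $\eta$ is needed only on the total set $\{F_\alpha S_\beta\}$ together with the norm bound, which again suffices.

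\emph{Bundle projection.} Under $\widetilde{\Phi}$, the map $\rr{j}$ corresponds to $\Gamma'\cdot(p,\eta)\mapsto\kappa_p$, which is the bundle map of $\rr{S}_1(\bb{E}_\Gamma)$. The fiber bundle structure with typical fiber $\s{P}_{\bb{A}_\Gamma}$ therefore transfers to $\s{P}_{\bb{V}_\Gamma}$. The local trivializations over $O_{\kappa_0}$ are exactly the maps $\omega\mapsto(\rr{j}(\omega),\omega\circ\rho_{p}^{-1})$ constructed above.
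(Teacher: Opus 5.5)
Your proposal is correct. It differs from the paper only in how you obtain continuity of the inverse.

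The paper never constructs the inverse. It enlarges the fiber from $\s{P}_{\bb{A}_\Gamma}$ to the full state space $\s{S}_{\bb{A}_\Gamma}$. The same generator computation shows that $\widehat{\Phi}:\rr{S}_\s{S}(\bb{E}_\Gamma)\to\s{F}_{\bb{V}_\Gamma}$ is a continuous bijection. The domain is compact, being the image of $\overline{Q_{\Gamma'}}\times\s{S}_{\bb{A}_\Gamma}$, and the target is Hausdorff. So $\widehat{\Phi}$ is a homeomorphism, and the map you want is its restriction. Your remark that a compact--Hausdorff argument is unavailable therefore holds only for the spaces as given. The paper sidesteps exactly the failure of weak-$\ast$ closedness of $\s{P}_{\bb{A}_\Gamma}$ by passing to all states. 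The payoff is that Proposition \ref{prop:homeo_mixed_bundle} comes for free, and that is what is used later for the degenerate and thermal configurations.

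Your route builds the inverse explicitly. It uses two ingredients: the weak-$\ast$ continuity of the quasi-momentum map $\rr{j}$ (restriction to the center), and continuous local lifts $O_{\kappa_0}\to\R^d$. What this buys you:
\begin{itemize}
\item It needs no compactness.
\item It works verbatim for any $\Gamma'$-invariant subset of $\s{S}_{\bb{A}_\Gamma}$, including all states, since $\eta\circ\rho_p$ still restricts to $\delta_{\kappa_p}$ on the center.
\item It directly exhibits the local trivializations $\omega\mapsto(\rr{j}(\omega),\omega\circ\rho_p^{-1})$. The paper instead delegates the fiber-bundle statement to the a priori local triviality of $\rr{S}_1(\bb{E}_\Gamma)$.
\end{itemize}

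One simplification: your appeal to the local trivialization of Theorem \ref{prop:Vgammabundle} is not actually needed. Only the neighborhoods $O_{\kappa_0}$ and the covering map $\R^d\to\n{B}_\Gamma$ enter your argument.
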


\proof 
Take a convergent net $(p_i,\eta_i) \to (p,\eta)$ in $\R^d\times \s{P}_{\bb{A}_\Gamma}$. This implies that $p_i\to p$ in the usual topology of $\R^d$ and $\eta_i\to\eta$ in the weak-$\ast$ topology. Thus
\[
\omega_{(p_i,\eta_i)} (u_{\gamma'} v_\beta) \;=\; \expo{-\ii p_i\cdot \beta} \eta_i(F_{\gamma'} S_\beta) \;\to\; \expo{-\ii p\cdot \beta} \eta(F_{\gamma'} S_\beta) \;=\; \omega_{(p,\eta)}(u_{\gamma'} v_\beta)\;.
\]
By a density argument this is enough to conclude that  $(p,\eta)\mapsto \omega_{(p,\eta)}$ is continuous. Therefore, the map $\Phi:\rr{S}_1(\bb{E}_\Gamma)\to\s{P}_{\bb{V}_\Gamma}$ given by $\Phi([p,\eta]) = \omega_{(p,\eta)}$, with $[p,\eta]$ the $\Gamma'$-orbit of $(p,\eta)$, is also continuous. Moreover, $\Phi$ is bijective in view of Proposition \ref{prop:pure_st_gamma}. To prove that $\Phi$ is indeed an homeomorphism, we will extend it to a continuous bijection from a compact space to a bigger subset of states. Define $\s{F}_{\bb{V}_\Gamma}$ as the set of states of $\bb{V}_\Gamma$ of the form $\omega_{(\eta,\kappa)}:=\eta\circ \rho_\kappa$, for some $\eta\in\s{S}_{\bb{A}_\Gamma}$ and $\kappa\in\n{B}_\Gamma$. Let 
\[
\rr{S}_\s{S} (\bb{E}_\Gamma) \;:=\; \big(\R^d\times \s{S}_{\bb{A}_\Gamma}\big)/{\Gamma'} \;,
\]
defined in the same way as $\rr{S}_1 (\bb{E}_\Gamma)$, just replacing $\s{P}_{\bb{A}_\Gamma}$ by $\s{S}_{\bb{A}_\Gamma}$. This is a fiber bundle with typical fiber $\s{S}_{\bb{A}_\Gamma}$. By the same argument as above, the map $\widehat{\Phi}:\rr{S}_\s{S}(\bb{E}_\Gamma)\to\s{F}_{\bb{V}_\Gamma}$ given by $\widehat{\Phi}([p,\eta]) := \omega_{(p,\eta)}$ is continuous, and it is bijective by construction. Moreover, $\rr{S}_\s{S} (\bb{E}_\Gamma)$ is compact, since it is the image of the compact space $\overline{Q_{\Gamma'}}\times \s{S}_{\bb{A}_\Gamma}$ under the quotient map. We conclude that the mentioned map is an homeomorphism, and consequently its bijective restriction $\Phi:\rr{S}_1(\bb{E}_\Gamma)\to\s{P}_{\bb{V}_\Gamma}$ is an homeomorphism as well.
\qed

\subsection{Bloch-wave states as a fiber bundle}
The next step in our recipe is to demonstrate that that Bloch-wave states are a fiber bundle over $\n{B}_\Gamma$. While this is already contained in our previous result \cite[Theorem 5.1]{denittis-rendel-25}, here we also briefly deduce it from the results in the previous sections.

\medskip

Let $\iota_\Gamma:\s{P}_{\bb{W}}^{\Gamma}\to \s{P}_{\bb{V}_\Gamma}$ be the restriction map: $\iota_\Gamma(\omega) = \omega|_{\bb{V}_\Gamma}$. This is a topological embedding by \cite[Proposition 3.22]{denittis-rendel-25}. Therefore, the map $\pi_\Gamma:=\rr{j}\circ \iota_\Gamma:\s{P}_\bb{W}^\Gamma\to \n{B}_\Gamma$ given by $\omega_{(\kappa,\nu)}\mapsto \kappa$ endows $\s{P}_\bb{W}^\Gamma$ with the structure of a fiber bundle over $\n{B}_\Gamma$ with typical fiber $\s{P}_{\bb{A}_\Gamma}$, and $\s{P}_\bb{W}^{\Gamma,\beta}$ becomes a fiber subbundle. Indeed, from \cite[Proposition 3.22]{denittis-rendel-25} one sees that it corresponds to the fiber subbundle whose fibers are the \emph{normal} pure states of $\bb{A}_\Gamma$ (seen in its concrete representation). Also note that Lemma \ref{lemma:homeo_trace_class_sts} implies that the space of normal pure states of $\bb{A}_\Gamma$ is homeomorphic to ${^{\rm w}}\bb{G}_1(\rr{h}_\Gamma)$, the set of one-dimensional projections on $\rr{h}_\Gamma$ equipped with the weak operator topology. Therefore $\pi_\Gamma$ does indeed induce the trivial fiber bundle structure $\s{P}_\bb{W}^{\Gamma,\beta}\simeq \n{B}_\Gamma\times {^{\rm w}}\bb{G}_1(\rr{h}_\Gamma)$ described in \cite[Theorem 5.1]{denittis-rendel-25}.

\subsection{Topological phases of Bloch-wave states}

Now we are in a good position to describe the topological phases of the Bloch-wave states, which is the last  step in our scheme. The first ingredient for our description is encoded in the following observation.

\begin{lemma}\label{Lemma-sup_top}
    The following bijections of sets holds true
\[
{\Omega}(\s{P}_\bb{W}^{\Gamma,\beta}) \;\simeq\; \left[\n{B}_\Gamma,{^{\rm w}}\bb{G}_1(\rr{h}_\Gamma)\right]\;.
\]
\end{lemma}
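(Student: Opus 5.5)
The plan is to use the trivialization $\s{P}_\bb{W}^{\Gamma,\beta}\simeq \n{B}_\Gamma\times {^{\rm w}}\bb{G}_1(\rr{h}_\Gamma)$ established just above (from \cite[Theorem 5.1]{denittis-rendel-25}, Proposition \ref{prop:homeo_pure_bundle} and Lemma \ref{lemma:homeo_trace_class_sts}). Let $T:\s{P}_\bb{W}^{\Gamma,\beta}\to \n{B}_\Gamma\times {^{\rm w}}\bb{G}_1(\rr{h}_\Gamma)$ denote this homeomorphism. It intertwines the bundle projection $\pi_\Gamma$ with the projection ${\rm pr}_1$ onto the first factor, so that ${\rm pr}_1\circ T=\pi_\Gamma$. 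The whole argument then reduces to the standard fact that, for a trivial bundle, sections are the same as maps into the fiber, and that this identification respects the strengthened notion of homotopy used in \eqref{eq:itro_1}. It is the same argument already sketched for the spin case in Section \ref{sec:classif_sts_tras2}.

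Concretely, I would define
\[
\Theta:{\rm Sec}(\n{B}_\Gamma,\s{P}_\bb{W}^{\Gamma,\beta})\to C(\n{B}_\Gamma,{^{\rm w}}\bb{G}_1(\rr{h}_\Gamma))\,,\qquad \Theta(F):={\rm pr}_2\circ T\circ F\,.
\]
The map $\Theta(F)$ is continuous because it is a composition of continuous maps. Since $F$ is a section, $\pi_\Gamma\circ F={\rm id}$, and hence $T\circ F(\kappa)=(\kappa,\Theta(F)(\kappa))$. The inverse is
\[
f\mapsto T^{-1}\circ({\rm id},f)\,.
\]
This is continuous, since $({\rm id},f)$ is continuous into the product and $T^{-1}$ is continuous. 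It is a section, because ${\rm pr}_1\circ({\rm id},f)={\rm id}$ and ${\rm pr}_1\circ T=\pi_\Gamma$. Checking that the two maps are mutually inverse is immediate.

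Next I would show that $\Theta$ matches homotopies on both sides. Take a homotopy $H:[0,1]\times\n{B}_\Gamma\to\s{P}_\bb{W}^{\Gamma,\beta}$ through sections. Then ${\rm pr}_2\circ T\circ H$ is a jointly continuous homotopy between $\Theta(H(0,\cdot))$ and $\Theta(H(1,\cdot))$. Conversely, take an ordinary homotopy $h:[0,1]\times \n{B}_\Gamma\to{^{\rm w}}\bb{G}_1(\rr{h}_\Gamma)$ and set $H(t,\kappa):=T^{-1}(\kappa,h(t,\kappa))$. This $H$ is jointly continuous, and each $H(t,\cdot)$ is a section. Passing to equivalence classes gives the bijection
\[
\Omega(\s{P}_\bb{W}^{\Gamma,\beta})=\big[{\rm Sec}(\n{B}_\Gamma,\s{P}_\bb{W}^{\Gamma,\beta})\big]\simeq \big[C(\n{B}_\Gamma,{^{\rm w}}\bb{G}_1(\rr{h}_\Gamma))\big]=\left[\n{B}_\Gamma,{^{\rm w}}\bb{G}_1(\rr{h}_\Gamma)\right]\,.
\]

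The only step requiring real care is the first one: confirming that the identification of $\s{P}_\bb{W}^{\Gamma,\beta}$ with $\n{B}_\Gamma\times{^{\rm w}}\bb{G}_1(\rr{h}_\Gamma)$ is a genuine homeomorphism of total spaces compatible with $\pi_\Gamma$, and not merely a fiberwise bijection. Two points need checking. First, restricting the homeomorphism of Proposition \ref{prop:homeo_pure_bundle} to the normal-state subbundle must remain a homeomorphism onto its image. Second, the bundle $\rr{S}_1(\bb{E}_\Gamma)$, a quotient by the $\Gamma'$-action, must be globally trivialized on the normal part, with fibers carrying the weak operator topology via Lemma \ref{lemma:homeo_trace_class_sts}. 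For this I would rely directly on \cite[Theorem 5.1]{denittis-rendel-25}, as the paper does. If that result were not available in that form, I would construct the trivialization by hand: conjugate by the multiplication unitary $\expo{\ii p\cdot y}$ on $\rr{h}_\Gamma$, which implements the $\Gamma'$-action $\eta\mapsto\gamma'\cdot\eta$ on normal states.
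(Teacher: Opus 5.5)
Your proposal is correct and follows the paper's proof. Both arguments use the global trivialization $\s{P}_\bb{W}^{\Gamma,\beta}\simeq \n{B}_\Gamma\times {^{\rm w}}\bb{G}_1(\rr{h}_\Gamma)$ to identify sections with maps $\kappa\mapsto(\kappa,f(\kappa))$, and then observe that homotopies through sections correspond exactly to ordinary homotopies of the second components; you write out explicitly the inverse $f\mapsto T^{-1}\circ({\rm id},f)$ and both directions of the homotopy correspondence, which the paper states in one sentence.
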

\proof 
Note that, since $\s{P}_\bb{W}^{\Gamma,\beta}\simeq \n{B}_\Gamma\times {^{\rm w}}\bb{G}_1(\rr{h}_\Gamma)$ is a trivial fiber bundle, its sections are described (up to a homeomorphism) by functions $\kappa\mapsto (\kappa,f(\kappa))$, where $f\in C(\n{B}_\Gamma,{^{\rm w}}\bb{G}_1(\rr{h}_\Gamma))$ is any continuous function. Then, two sections $F(\kappa) = (\kappa,f(\kappa))$ and $G(\kappa) = (\kappa,g(\kappa))$ are homotopic if and only if $f$ is homotopic to $g$. Therefore 
\[
{\Omega}(\s{P}_\bb{W}^{\Gamma,\beta}) \;:=\; \left[{\rm Sec}(\n{B}_\Gamma,\s{P}_\bb{W}^{\Gamma,\beta})\right] \;\simeq\; \left[\n{B}_\Gamma,{^{\rm w}}\bb{G}_1(\rr{h}_\Gamma)\right]\;.
\]
as claimed.
\qed

\medskip

The next result provides the computation of ${\Omega}(\s{P}_\bb{W}^{\Gamma,\beta})$. Let us recall that $H^n(X,\Z)$ denotes the $n$-th cohomology group of the space $X$ with integer coefficients.
\begin{theorem}\label{theo_main_top}
The following bijections of sets hold true:
\[
{\Omega}(\s{P}_\bb{W}^{\Gamma,\beta}) 
{\;\simeq\; {\rm Vec}_\C^1(\n{B}_\Gamma)}
\;\simeq\;H^2(\n{B}_\Gamma,\Z)\;\simeq\;
\left\{
\begin{aligned}
&0&&\text{if}\;\; d=1\\
&\Z^{\oplus{\binom{d}{2}}}&&\text{if}\;\; d\geqslant 2\;.
\end{aligned}
\right.
\]
\end{theorem}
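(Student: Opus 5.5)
Starting from Lemma \ref{Lemma-sup_top}, the task reduces to computing $[\n{B}_\Gamma,{^{\rm w}}\bb{G}_1(\rr{h}_\Gamma)]$. We plan to replace the weak operator topology on the rank-one Grassmannian by the norm topology. On rank-one projections these topologies agree. If $P_i\to P$ weakly with $P=|\psi\rangle\langle\psi|$, then $\langle\psi,P_i\psi\rangle\to 1$. Writing $P_i=|\psi_i\rangle\langle\psi_i|$, this gives $|\langle\psi,\psi_i\rangle|\to 1$. Choosing phases so that $\langle\psi,\psi_i\rangle\ge 0$, we get $\|\psi_i-\psi\|^2=2-2|\langle\psi,\psi_i\rangle|\to0$, and hence $\|P_i-P\|\le 2\|\psi_i-\psi\|\to 0$. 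So ${^{\rm w}}\bb{G}_1(\rr{h}_\Gamma)$ is homeomorphic to $\bb{G}_1(\rr{h}_\Gamma)\simeq \C P^\infty$ with its norm topology. The same identification is presumably implicit in Lemma \ref{lemma:homeo_trace_class_sts}, which relates normal pure states to this Grassmannian.

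Next we use that the norm-topology Grassmannian of rank-one projections on a separable infinite-dimensional Hilbert space is a model of $B\n{U}(1)$. Concretely, the unit sphere of $\rr{h}_\Gamma$ is contractible (Kuiper, or more elementarily Bessaga/Dugundji for infinite-dimensional spheres). The map $S(\rr{h}_\Gamma)\to \bb{G}_1(\rr{h}_\Gamma)$, $\psi\mapsto|\psi\rangle\langle\psi|$, is a locally trivial principal $\n{U}(1)$-bundle, with local sections $P\mapsto P\phi/\|P\phi\|$ on $\{P : P\phi\neq0\}$. Hence it is a universal bundle, and $\bb{G}_1(\rr{h}_\Gamma)\simeq B\n{U}(1)\simeq K(\Z,2)$. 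This is the same identification invoked via \cite[Theorem 1.7]{denittis-25} earlier in the paper, so we may cite it.

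Since $\n{B}_\Gamma\simeq\n{T}^d$ is a compact CW complex (in particular paracompact), the classification theorem for principal bundles then gives
\[
[\n{B}_\Gamma,B\n{U}(1)]\simeq {\rm Vec}^1_\C(\n{B}_\Gamma),
\]
by pulling back the tautological line bundle $f\mapsto f^*L$, where $L_P={\rm Ran}P$. Representability of cohomology by the Eilenberg--MacLane space $K(\Z,2)$ gives $[\n{B}_\Gamma,K(\Z,2)]\simeq H^2(\n{B}_\Gamma,\Z)$; equivalently, the first Chern class is a bijection on line bundles. Finally, the Künneth formula (or the exterior algebra structure $H^*(\n{T}^d,\Z)\simeq\Lambda^*\Z^d$) yields $H^2(\n{T}^d,\Z)\simeq\Z^{\binom d2}$, which is $0$ for $d=1$.

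The main obstacle is the first step, equating the weak and norm topologies; everything after it is standard homotopy theory. That step does not need a continuous choice of phases, only pointwise nets. We will therefore also confirm that homotopies in Lemma \ref{Lemma-sup_top} are taken in the weak topology, so that the homeomorphism transports homotopy classes verbatim.
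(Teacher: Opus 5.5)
Your proof is correct. Its skeleton matches the paper's: Lemma \ref{Lemma-sup_top}, then identify ${^{\rm w}}\bb{G}_1(\rr{h}_\Gamma)$ as a classifying space for $\n{U}(1)$, then compute $H^2(\n{T}^d,\Z)$. The difference is in the middle step, which the paper settles in one line by citing \cite[Corollary 1.2]{denittis-gomi-rendel-25} and which you instead prove.

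Your key observation is that the weak operator topology and the norm topology coincide on rank-one projections. This is right, and your net argument, with phases chosen index by index, is sound. It is essentially what the trace-norm argument in the proof of Lemma \ref{lemma:homeo_trace_class_sts} gives when restricted to rank-one density matrices.

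That observation reduces everything to a classical fact. The norm-topology projective space of $\rr{h}_\Gamma$ is the base of a locally trivial principal $\n{U}(1)$-bundle with contractible total space. Since the base is metrizable, the bundle is numerable, and Dold's theorem makes it universal.

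The two routes buy different things. Yours is self-contained and shows that the weak topology is harmless on this set. The citation buys uniformity, since the same reference is reused for ${^{\rm w}}\bb{G}_N(\rr{h}_\Gamma)$ in Proposition \ref{theo_main_top-2}. Your argument extends there too, via trace-norm convergence of rank-$N$ projections and the contractible space of orthonormal $N$-frames.

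Two small points of precision:
\begin{itemize}
\item Read $\bb{G}_1(\rr{h}_\Gamma)\simeq\C P^\infty$ as a homotopy equivalence, not a homeomorphism. The norm topology is metrizable, while the CW colimit is not.
\item Passing to $H^2$ via representability by $K(\Z,2)$ needs the Grassmannian to have CW homotopy type. This holds, since it is a metrizable Hilbert manifold. Your alternative via the first Chern class of line bundles over the finite CW complex $\n{T}^d$ sidesteps the issue.
\end{itemize}
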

\proof
The first and second bijections follow from Lemma \ref{Lemma-sup_top} above and \cite[Corollary 1.2]{denittis-gomi-rendel-25}.
The last bijection is a standard result in cohomology which follows by simply observing that 
$\n{B}_\Gamma\simeq\n{T}^d$. 
\qed

\begin{remark}[Relation with vector bundles and $K$-theory]\label{rk_k-the_top}
The bijections in Theorem \ref{theo_main_top} are natural in the following sense. As discussed in the proof of Lemma \ref{Lemma-sup_top}, sections of $\pi_\Gamma:\s{P}_\bb{W}^{\Gamma,\beta}\to\n{B}_\Gamma$ are in bijection with continuous functions $f:\n{B}_\Gamma\to {^{\rm w}}\bb{G}_1(\rr{h}_\Gamma)$, and this bijection respects homotopy classes. Then, \cite[Theorem 1.1]{denittis-gomi-rendel-25} shows that ${^{\rm w}}\bb{G}_1(\rr{h}_\Gamma)$ is a model for the classifying space of the 1-dimensional unitary group $\n{U}(1)$, so a function $f\in C(\n{B}_\Gamma,{^{\rm w}}\bb{G}_1(\rr{h}_\Gamma))$ gives us a line bundle $f^*(\rr{p})$, corresponding to the bundle induced under $f$ of the universal bundle $\rr{p}:{\rm E}\n{U}(1)\to {^{\rm w}}\bb{G}_1(\rr{h}_\Gamma)$ of $\n{U}(1)$. 
Finally, as it is well-known, the map $f\mapsto f^*(\rr{p})$ sends homotopy equivalent functions into isomorphic line-bundles, and establishes a bijection between both, so it defines a correspondence $[f] \longleftrightarrow [f^*(\rr{p})]$, which is the bijection $[\n{B}_\Gamma,{^{\rm w}}\bb{G}_1(\rr{h}_\Gamma)]\simeq {\rm Vec}_\C^1(\n{B}_\Gamma)$.
In view of this,
the isomorphism between 
${\Omega}(\s{P}_\bb{W}^{\Gamma,\beta})$ and $H^2(\n{B}_\Gamma,\Z)$ described in
Theorem \ref{theo_main_top} can be thought of in terms of \emph{Chern classes} as in the standard theory of classification of line bundles (\cf \cite[Appendix A]{denittis-gomi-rendel-25}).
Moreover, if $1\leqslant d\leqslant 3$  it is also true that 
\[
{\Omega}(\s{P}_\bb{W}^{\Gamma,\beta})\;\simeq\;\widetilde{K}^0(\n{B}_\Gamma)
\]
where $\widetilde{K}^0$ denotes the reduced $K$-theory. This fact follows by observing that in low dimension ($1\leqslant d\leqslant 3$) every vector bundle of rank $k$ is isomorphic to the sum of a (possible) non-trivial line bundle and a trivial vector bundle of rank $k-1$. Said differently, in low dimension every vector bundle is \emph{stably equivalent} to a line bundle.
\hfill $\blacktriangleleft$
\end{remark}

\section{Topology of pure Zak states}\label{app-zac2}

Based on the results of \cite[Section 6]{denittis-rendel-25},
we will briefly discuss here some topological consequences  arising 
by selecting as sample space de set of pure Zak states $\bb{P}_\bb{W}^{\rm Z}$.

\subsection{\texorpdfstring{$C^*$}{C*}-bundle structure of the Zak subalgebra}
The Zak subalgebra is again abelian, making this case similar to that of Section \ref{sec:classif_sts_tras2}. We have that $\bb{Z}_\Gamma \simeq C(\n{B}_\Gamma\times \n{T}_\Gamma)$, so again $\bb{Z}_\Gamma$ is a trivial $C^*$-bundle over $\n{B}_\Gamma\times \n{T}_\Gamma$ with typical fiber $\C$. Its space of pure states is a trivial fiber bundle $\pi_{\rm Z}:\s{P}_{\bb{Z}_\Gamma} \to \rr{b}(\R^d)$ with a singleton as its typical fiber, with the bundle map corresponding to the homeomorphism in \cite[Proposition 6.4]{denittis-rendel-25}.

\subsection{Topological phases associated to Zak states}

As in the case of translationally invariant states, $\pi_{\rm Z}$ has a unique continuous section, corresponding to its inverse.  In turn Zak states have no non-trivial associated topological phases, \ie 
\[
\Omega(\s{P}_\bb{W}^{\rm Z}) \;:=\; \left[{\rm Sec}(\n{B}_\Gamma\times \n{T}_\Gamma,\s{P}_\bb{W}^{\rm Z})\right] \;=\;  \left[\{\pi_\tau^{-1}\}\right]\;.
\]

\subsection{\virg{Symmetry breaking} and ficticious phases}

Let us consider configurations taking values in $\s{P}_\bb{W}^{\rm Z}$. Given a quantum parameter space $X$, this is the same as studying maps $X\to \n{B}_\Gamma\times \n{T}_\Gamma$, in view of the homeomorphism $\s{P}_\bb{W}^{\rm Z}\simeq \n{B}_\Gamma\times \n{T}_\Gamma$. We can consider two relevant cases. First of all one can chose as the quantum parameter space $X$  the Pontryagin  of the full symmetry group $\Gamma\times\Gamma'$ of joint space and momentum lattice translations that is  $\n{B}_\Gamma\times \n{T}_\Gamma$. Another possibility is to choose for $X$ the Pontryagin dual $\n{B}_\Gamma$ of the symmetry subgroup $\Gamma$ of just the spatial lattice translations. Since we already know the first case is trivial, we will focus on the second one. This case does not have an immediate physical interpretation, but 
we will see that it has some intriguing mathematical content. 

\medskip

Since $\bb{Z}_\Gamma \simeq C(\n{B}_\Gamma\times\n{T}_\Gamma) \simeq C(\n{B}_\Gamma, C(\n{T}_\Gamma))$, we have $\bb{Z}_\Gamma$ is a trivial $C^*$-bundle over $\n{B}_\Gamma$ with typical fiber $C(\n{T}_\Gamma)$. As such, its set of pure states has the structure of a trivial fiber bundle over $\n{B}_\Gamma$ with typical fiber $\n{T}_\Gamma$. The bundle map $\pi_{\rm Z}: \s{P}_\bb{W}^{\rm Z} \to \n{B}_\Gamma$ is given by $\omega_{(\kappa,\nu)} \mapsto \kappa$. Indeed, the homeomorphism $\s{P}_\bb{W}^{\rm Z}\simeq \n{B}_\Gamma\times \n{T}_\Gamma$ restricts to a homeomorphism $\pi_{\rm Z}^{-1}(\{\kappa\}) := \{\omega_{(\kappa,\nu)} \;|\; \nu\in\n{T}_\Gamma \} \simeq \n{T}_\Gamma$. Let $\rm{Sec}(\n{B}_\Gamma,\s{P}_\bb{W}^{\rm Z})$ be the space of \emph{continuous sections} of this bundle, \ie, the space of continuous functions $F:\n{B}_\Gamma\to \s{P}_\bb{W}^{\rm Z}$ such that $(\pi_{\rm Z} \circ F)(\kappa) = \kappa$, for every $\kappa\in\n{B}_\Gamma$. This corresponds to the space of functions of the form $\kappa\mapsto \omega_{(\kappa,f(\kappa))}$, where $f$ is a continuous map $\n{B}_\Gamma\to\n{T}_\Gamma$.

\begin{proposition}
    The following bijections of sets hold
    \[
    \left[\rm{Sec}(\n{B}_\Gamma,\s{P}_\bb{W}^{\rm Z})\right] \;\simeq\; M_{d\times d}(\Z) \;\simeq\; \Z^{d^2} \;.
    \]
\end{proposition}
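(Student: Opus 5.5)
The plan is to reduce the classification of sections to homotopy classes of maps between tori, and then use that tori are Eilenberg--MacLane spaces. By the discussion preceding the statement, the bundle $\pi_{\rm Z}:\s{P}_\bb{W}^{\rm Z}\to\n{B}_\Gamma$ is trivial with fiber $\n{T}_\Gamma$. Every section has the form $\kappa\mapsto\omega_{(\kappa,f(\kappa))}$ with $f\in C(\n{B}_\Gamma,\n{T}_\Gamma)$. Exactly as in the proof of Lemma \ref{Lemma-sup_top}, two sections $F$ and $G$, with second components $f$ and $g$, are homotopic through sections if and only if $f$ and $g$ are homotopic. Indeed, a homotopy through sections is a map $H(t,\kappa)=\omega_{(\kappa,h(t,\kappa))}$, and $h$ is continuous because the trivialization $\s{P}_\bb{W}^{\rm Z}\simeq\n{B}_\Gamma\times\n{T}_\Gamma$ is a homeomorphism. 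This gives the first reduction:
\[
\left[\rm{Sec}(\n{B}_\Gamma,\s{P}_\bb{W}^{\rm Z})\right]\;\simeq\;[\n{B}_\Gamma,\n{T}_\Gamma]\;\simeq\;[\n{T}^d,\n{T}^d]\;.
\]

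Next I will compute $[\n{T}^d,\n{T}^d]$. Since $\n{T}^d=(\n{S}^1)^d$ and $[X,Y_1\times Y_2]\simeq[X,Y_1]\times[X,Y_2]$, it suffices to know $[\n{T}^d,\n{S}^1]$. Because $\n{S}^1$ is a $K(\Z,1)$, we have $[\n{T}^d,\n{S}^1]\simeq H^1(\n{T}^d,\Z)\simeq\Z^d$; note these are free (unbased) homotopy classes, which is fine since $\n{S}^1$ is a topological group. Concretely, the bijection sends $f$ to its winding numbers along the $d$ generating loops of $\n{T}^d$, namely $f_*:\pi_1(\n{T}^d)=\Z^d\to\pi_1(\n{S}^1)=\Z$. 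Combining the $d$ components, each $f:\n{T}^d\to\n{T}^d$ corresponds to the induced map $f_*:\Z^d\to\Z^d$ on $\pi_1$ (equivalently on $H_1$), that is, an integer matrix $M_f\in M_{d\times d}(\Z)$. Since $\pi_1$ is abelian, basepoint conjugation plays no role.

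To make the bijection explicit, I will check the following.
\begin{itemize}
\item Surjectivity: the linear map $x\mapsto Mx$ on $\R^d$ preserves $\Z^d$, so it descends to a map $\n{T}^d\to\n{T}^d$ realizing $M$ (after fixing bases $\rr{f}^i$ of $\Gamma'$ and of $\Gamma$).
\item Injectivity: lift $f$ to $\tilde f:\R^d\to\R^d$ with $\tilde f(x+n)=\tilde f(x)+M_f n$. Then $\tilde f - M_f x$ is $\Z^d$-periodic, and the straight-line homotopy $t(\tilde f-M_fx)+M_fx$ descends to the quotient, giving $f\simeq$ the linear map.
\end{itemize}
The last bijection $M_{d\times d}(\Z)\simeq\Z^{d^2}$ is just reading off matrix entries.

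The main obstacle is the first step: justifying that homotopy through sections reduces to ordinary homotopy of the fiber components. This needs the homeomorphism $\s{P}_\bb{W}^{\rm Z}\simeq\n{B}_\Gamma\times\n{T}_\Gamma$ to be fiber-preserving, i.e.\ compatible with $\pi_{\rm Z}$. That compatibility is stated in the preceding discussion, so I expect the reduction to go through. The lifting argument is standard covering space theory.
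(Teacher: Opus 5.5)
Your proposal is correct and follows the paper's proof. Both reduce, via the fiber-preserving trivialization $\s{P}_\bb{W}^{\rm Z}\simeq\n{B}_\Gamma\times\n{T}_\Gamma$, to $[\n{T}^d,\n{T}^d]$, and then classify these homotopy classes by integer matrices. The paper simply cites that last fact as classification by ``generalized degrees'', while you supply the standard proof (the $K(\Z,1)$ argument, plus lifting to $\R^d$ and a straight-line homotopy to the linear map).
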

\proof
The set of homotopy classes of configurations in $\rm{Sec}(\n{B}_\Gamma,\s{P}_\bb{W}^{\rm Z})$ is in bijection with the homotopy classes of maps $\n{B}_\Gamma\to\n{T}_\Gamma$, by the correspondence mentioned above. Hence, we have bijections $\left[\rm{Sec}(\n{B}_\Gamma,\s{P}_\bb{W}^{\rm Z})\right] \simeq \left[\n{B}_\Gamma,\n{T}_\Gamma\right] \simeq \left[\n{T}^d,\n{T}^d\right]$. On the other hand, the homotopy classes of maps $\n{T}^d \to \n{T}^d$ are classified by $d\times d$ matrices with coefficients in $\Z$ (generalized degrees), so $\left[\n{T}^d,\n{T}^d\right]\simeq M_{d\times d}(\Z)$, as claimed.
\qed

\section{Beyond the pure state condition}\label{sec:degeneracy}

We can extend our methods to deal with fiber bundles of states whose fibers are not necessarily pure. This amounts, essentially, to allowing degeneracy.

\subsection{Configurations with finite degeneracy}
First let us discuss the case of finite degeneracy. Let $\bb{A}$ be any $C^*$-algebra. 
Given a state $\omega\in\s{S}_\bb{A}$, we define its Caratheodory number as
\[
\s{C}(\omega) \;:=\; \inf \left\{ k\in\N \;\left|\; \exists \omega_1,\dots,\omega_k\in\s{P}_\bb{A} \text{ such that } \omega = \sum_{j=1}^k t_j \omega_j \right.\right\} \;.
\]
This corresponds to a finite number whenever the state lies in the convex hull of the pure states (without taking closure), and is infinite otherwise. Thus, it distinguishes when a state can be decomposed into a convex combination of finitely many pure states, and if it can, yields the minimum number of pure states required to achieve such a decomposition. 
Now, for any $N\in\N$, define
\begin{align*}
    \s{D}[N]_\bb{A} \;:=\; \left\{ \omega\in\s{S}_\bb{A} \;|\; \s{C}(\omega) = N \right\} \;.
\end{align*}
That is, the elements of $\s{D}[N]_\bb{A}$ are the states $\omega$ of $\bb{A}$ such that there exist $t_1,\dots,t_N\in(0,1)$ and mutually distinct $\omega_1,\dots,\omega_N \in \s{P}_\bb{A}$ satisfying $\omega = t_1 \omega_1 +\dots + \omega_N t_N$, but cannot be decomposed in the same way into convex combinations of less than $N$ pure states. 
Note that $\s{D}[1]_\bb{A} = \s{P}_\bb{A}$.

\medskip

Back to our particular setting, we also define:
\begin{align*}
    \s{D}[N]_\bb{W}^\Gamma &\;:=\; \s{D}[N]_\bb{W} \cap \s{S}_\bb{W}^\Gamma \;,  \\
    \s{D}[N]_\bb{W}^{\Gamma,\beta} &\;:=\; \s{D}[N]_\bb{W} \cap \s{S}_\bb{W}^{\Gamma,\beta} \;.
\end{align*}
By restricting the homeomorphism in Proposition \ref{prop:homeo_mixed_bundle}, we obtain that $\s{D}[N]_\bb{W}^\Gamma$ is a locally trivial fiber bundle over $\n{B}_\Gamma$ with typical fiber $\s{D}[N]_{\bb{A}_\Gamma}$. Likewise, by Corollary \ref{cor:homeo_mixed_bundle_reg}, we obtain the following:
\begin{lemma}
    $\s{D}[N]_\bb{W}^{\Gamma,\beta}$ is a trivial fiber bundle over $\n{B}_\Gamma$, and its fibers are homotopy equivalent to ${^{\rm w}}\bb{G}_N(\rr{h}_\Gamma)$, the space of $N$-dimensional projections on $\rr{h}_\Gamma$ equipped with the weak operator topology.
\end{lemma}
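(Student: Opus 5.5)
The plan is to mimic the pure case of Section~\ref{sec:classif_sts_latt2}, replacing $\s{P}$ by $\s{D}[N]$ throughout. We take as given Corollary~\ref{cor:homeo_mixed_bundle_reg}. In the same way as $\s{P}_\bb{W}^{\Gamma,\beta}\simeq \n{B}_\Gamma\times {^{\rm w}}\bb{G}_1(\rr{h}_\Gamma)$, we assume it identifies the $\Gamma$-invariant $\beta$-regular states with the trivial bundle $\n{B}_\Gamma\times \s{S}^{\rm n}_{\bb{A}_\Gamma}$ via $\omega_{(\kappa,\nu)}\mapsto(\kappa,\nu)$. Here $\s{S}^{\rm n}_{\bb{A}_\Gamma}$ denotes the normal states of $\bb{A}_\Gamma\simeq\bb{C}(\rr{h}_\Gamma)$. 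By Lemma~\ref{lemma:homeo_trace_class_sts}, $\s{S}^{\rm n}_{\bb{A}_\Gamma}$ is homeomorphic to the density matrices on $\rr{h}_\Gamma$ with the weak topology.

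\textbf{Step 1: triviality.} We show that the homeomorphism preserves the Carathéodory number fiberwise. The map $\nu\mapsto\nu\circ\rho_\kappa\circ\langle\cdot\rangle_\Gamma$ is affine. It also sends pure normal states to pure $\Gamma$-invariant states, since the fiber over $\kappa$ of $\s{P}_\bb{W}^{\Gamma,\beta}$ consists exactly of the normal pure states. Hence $\s{C}(\omega_{(\kappa,\nu)})\le \s{C}(\nu)$.

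For the converse inequality, suppose $\omega_{(\kappa,\nu)}=\sum_j t_j\omega_j$ with $\omega_j$ pure states of $\bb{W}$. Evaluating on $v_\gamma a$ and using $|\omega_j(v_\gamma)|\le1$ together with $\omega(v_\gamma)=\expo{-\ii\gamma\cdot\kappa}$ forces each $\omega_j$ to have quasi-momentum $\kappa$. We still need each $\omega_j$ to be $\Gamma$-invariant and normal, so that it lies in the same fiber. For $\Gamma$-invariance, $\omega_j\le t_j^{-1}\omega$ is dominated by a $\Gamma$-invariant state, and we would argue through the conditional expectation that $\omega_j\circ\langle\cdot\rangle_\Gamma$ also decomposes $\omega$. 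We then replace $\omega_j$ by its restriction argument on $\bb{V}_\Gamma$. Normality follows because states dominated by normal states are normal. This yields $\s{D}[N]_\bb{W}^{\Gamma,\beta}\simeq \n{B}_\Gamma\times \s{D}[N]^{\rm n}_{\bb{A}_\Gamma}$, with $\s{D}[N]^{\rm n}$ the normal states of Carathéodory number $N$.

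\textbf{Step 2: identify the fiber.} A normal state is $\nu(A)={\rm Tr}(\varrho A)$, and it is a convex combination of $N$ pure normal states precisely when ${\rm rank}\,\varrho=N$. One direction uses the spectral decomposition; the other uses that a sum of $k$ rank-one projections has rank at most $k$. So the fiber is $\s{R}_N:=\{\varrho\ge0,\ {\rm Tr}\varrho=1,\ {\rm rank}\,\varrho=N\}$ with the weak topology. The map $r:\s{R}_N\to{^{\rm w}}\bb{G}_N(\rr{h}_\Gamma)$, $\varrho\mapsto{\rm supp}(\varrho)$, and the section $s(P)=P/N$ give $r\circ s={\rm id}$. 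The homotopy $H_t(\varrho)=(1-t)\varrho+tP_\varrho/N$ stays inside $\s{R}_N$, because its range and rank are preserved.

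\textbf{Main obstacle.} The hard part is continuity of $r$, and hence of $H$, in the weak operator topology. Weak convergence of density matrices of fixed rank $N$ does not in general control their supports, since eigenvalues may degenerate toward $0$ or mass may escape. I would first try to show that on the trace-one, rank-$N$ set the weak and trace-norm topologies coincide, so that Riesz projections give continuity. The key input is that weak convergence of normal states of trace one to a normal state is automatically trace-norm convergence, by a Dell'Antonio-type lemma. Even then, eigenvalues of the sequence could tend to $0$ while the limit keeps rank $N$; this cannot happen, because trace-norm limits satisfy lower semicontinuity of rank, and equality of ranks pins the spectral gap away from $0$ locally. This gives the homotopy equivalence $\s{R}_N\simeq{^{\rm w}}\bb{G}_N(\rr{h}_\Gamma)$ and completes the proof.
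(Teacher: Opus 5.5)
Your proposal is correct and is essentially the paper's proof. The paper likewise reads off the fiber as the rank-$N$ density matrices from Corollary~\ref{cor:homeo_mixed_bundle_reg}. It uses exactly your maps $T\mapsto P_T$, $P\mapsto \frac{1}{N}P$ and the homotopy $H(t,T)=\frac{t}{N}P_T+(1-t)T$, and it merely asserts that continuity ``can be readily checked using functional calculus''; your trace-norm plus Riesz-projection argument makes that precise.

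Your Step~1 supplies Carath\'eodory-number bookkeeping that the paper leaves implicit, and it needs two small corrections. First, $\Gamma$-invariance of each $\omega_j$ follows directly from $|\omega_j(v_\gamma)|=1$, because $\tau_\gamma={\rm Ad}(v_\gamma)$. The detour through $\langle\cdot\rangle_\Gamma$ is unnecessary and would not work as stated, since it need not preserve purity. Second, the global trivialization is not literally $\omega_{(\kappa,\nu)}\mapsto(\kappa,\nu)$, which jumps across $\partial Q_{\Gamma'}$. The correct trivialization differs from it by fiberwise unitary conjugations, which preserve rank, so your argument is unaffected.
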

\proof
By Corollary \ref{cor:homeo_mixed_bundle_reg}, the typical fiber of $\s{D}[N]_\bb{W}^{\Gamma,\beta}$ is homeomorphic to the set of operators $T\in \bb{L}^1_{+,1}(\rr{h}_\Gamma)$ of rank exactly $N$, which we denote by ${^{\rm w}}\bb{L}^1_{+,1,N}(\rr{h}_\Gamma)$. Given such an operator $T$, let $P_T$ denote the projection onto its support (\ie the orthogonal complement of its kernel). Then, the maps $h_1:{^{\rm w}}\bb{L}^1_{+,1,N}(\rr{h}_\Gamma)\to{^{\rm w}}\bb{G}_N(\rr{h}_\Gamma)$, $h_2:{^{\rm w}}\bb{G}_N(\rr{h}_\Gamma)\to{^{\rm w}}\bb{L}^1_{+,1,N}(\rr{h}_\Gamma)$ given by $h_1(T) = P_T$, and $h_2(P) = \frac{1}{N}P$ define an homotopy equivalence ${^{\rm w}}\bb{G}_N(\rr{h}_\Gamma)\approx {^{\rm w}}\bb{L}^1_{+,1,N}(\rr{h}_\Gamma)$. Indeed, $h_1\circ h_2$ is the identity on ${^{\rm w}}\bb{G}_N(\rr{h}_\Gamma)$, while $h_2\circ h_1$ is homotopic to the identity on ${^{\rm w}}\bb{L}^1_{+,1,N}(\rr{h}_\Gamma)$ via the homotopy $H(t,T) = \frac{t}{N} P_T + (1-t) T$. This can be readily checked using functional calculus.
\qed

\medskip

In the $\beta$-regular case, we can use the same arguments as in the proof of Theorem \ref{theo_main_top}, in conjunction with the above result, to provide a description of the topological phases ${\Omega}(\s{D}[N]_\bb{W}^{\Gamma,\beta}) := \big[{\rm Sec}(\n{B}_\Gamma,\s{D}[N]_\bb{W}^{\Gamma,\beta})\big]$ associated to configurations with finite degeneracy.
\begin{proposition}\label{theo_main_top-2}
The following bijection of sets holds true:
\[
{\Omega}(\s{D}[N]_\bb{W}^{\Gamma,\beta})\;\simeq\;{\rm Vec}_\C^N(\n{B}_\Gamma)\;.
\]
\end{proposition}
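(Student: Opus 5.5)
The proof follows the same three steps as Lemma \ref{Lemma-sup_top} and Theorem \ref{theo_main_top}: trivialize the bundle, reduce sections to maps into the fiber, and identify the homotopy type of the fiber with a classifying space.

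\textbf{Step 1 (sections as maps into the fiber).} By the preceding Lemma, $\s{D}[N]_\bb{W}^{\Gamma,\beta}$ is a trivial fiber bundle over $\n{B}_\Gamma$, with typical fiber homeomorphic to ${^{\rm w}}\bb{L}^1_{+,1,N}(\rr{h}_\Gamma)$ by Corollary \ref{cor:homeo_mixed_bundle_reg}. Fixing a trivialization $\s{D}[N]_\bb{W}^{\Gamma,\beta}\simeq \n{B}_\Gamma\times {^{\rm w}}\bb{L}^1_{+,1,N}(\rr{h}_\Gamma)$, every section has the form $\kappa\mapsto(\kappa,f(\kappa))$ with $f$ continuous. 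Conversely, every such $f$ defines a section. A homotopy of sections, in the strong sense of the Introduction, has the form $(t,\kappa)\mapsto(\kappa,H(t,\kappa))$ with $H$ an ordinary homotopy of maps. As in the proof of Lemma \ref{Lemma-sup_top}, this gives
\[
{\Omega}(\s{D}[N]_\bb{W}^{\Gamma,\beta})\;\simeq\;\left[\n{B}_\Gamma,{^{\rm w}}\bb{L}^1_{+,1,N}(\rr{h}_\Gamma)\right]\;.
\]

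\textbf{Step 2 (pass to the Grassmannian).} Homotopy classes of maps from a fixed space into a target are invariant under homotopy equivalence of the target. Composing with the maps $h_1,h_2$ from the previous Lemma gives mutually inverse bijections, because $h_1\circ h_2={\rm id}$ and $h_2\circ h_1\simeq{\rm id}$. Hence
\[
\left[\n{B}_\Gamma,{^{\rm w}}\bb{L}^1_{+,1,N}(\rr{h}_\Gamma)\right]\;\simeq\;\left[\n{B}_\Gamma,{^{\rm w}}\bb{G}_N(\rr{h}_\Gamma)\right]\;.
\]
Two points need checking here. First, $h_1(T)=P_T$ must be continuous in the weak operator topology on rank-exactly-$N$ operators. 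Second, the homotopy $H$ must stay inside the rank-$N$ set. The second point holds because $tP_T/N+(1-t)T$ has support exactly $P_T$ for every $t$. The continuity is part of the Lemma and will be used as given.

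\textbf{Step 3 (classifying space).} The remaining claim is that ${^{\rm w}}\bb{G}_N(\rr{h}_\Gamma)$ is a model of the classifying space $B\n{U}(N)$. Then $[\n{B}_\Gamma,B\n{U}(N)]\simeq{\rm Vec}^N_\C(\n{B}_\Gamma)$ by the classical classification of rank-$N$ vector bundles over the compact space $\n{B}_\Gamma\simeq\n{T}^d$, which is a CW complex, so the classification applies. The bijection is realized by $f\mapsto f^*(\rr{p}_N)$, where $\rr{p}_N$ is the tautological rank-$N$ bundle, as in Remark \ref{rk_k-the_top}.

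This step is the main obstacle. For $N=1$ the paper cites \cite[Theorem 1.1, Corollary 1.2]{denittis-gomi-rendel-25}, and I would invoke the rank-$N$ version of those results. If only rank one is available, I would adapt their argument in three parts:
\begin{itemize}
\item[(a)] On ${^{\rm w}}\bb{G}_N$ the weak and strong operator topologies coincide, and in fact agree with the norm-trace topology, since $\|P-Q\|_2^2=2N-2\,{\rm Tr}(PQ)$ and ${\rm Tr}(PQ)$ is weakly continuous on rank-$N$ projections via a local frame.
\item[(b)] The frame bundle $V_N(\rr{h}_\Gamma)\to{^{\rm w}}\bb{G}_N(\rr{h}_\Gamma)$ is a locally trivial principal $\n{U}(N)$-bundle. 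The Stiefel manifold $V_N(\rr{h}_\Gamma)$ of isometries $\C^N\to\rr{h}_\Gamma$ is contractible by Kuiper-type arguments, for example via a shift isometry.
\item[(c)] A contractible total space then makes ${^{\rm w}}\bb{G}_N(\rr{h}_\Gamma)$ a model of $B\n{U}(N)$.
\end{itemize}
Combining Steps 1–3 yields the claimed bijection.
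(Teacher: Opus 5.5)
Your proposal is correct and follows essentially the same route as the paper. The paper proves this by repeating the argument of Lemma~\ref{Lemma-sup_top} and Theorem~\ref{theo_main_top}: trivial bundle, sections as maps into the fiber, the homotopy equivalence ${^{\rm w}}\bb{L}^1_{+,1,N}(\rr{h}_\Gamma)\simeq{^{\rm w}}\bb{G}_N(\rr{h}_\Gamma)$ from the preceding lemma, and ${^{\rm w}}\bb{G}_N(\rr{h}_\Gamma)$ as a classifying space for $\n{U}(N)$ via the cited De~Nittis--Gomi--Rendel result, whose fallback sketch in your Step~3 (weak and norm topologies agreeing on rank-$N$ projections, contractible Stiefel manifold) is also sound.
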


\begin{remark}[Classification in low dimension]\label{rk_k-the_top-2}
Let $N\in\N$ be a generic integer. When $1\leqslant d\leqslant 3$, a standard result in the theory of vector bundles provides 
${\rm Vec}_\C^1(\n{B}_\Gamma)\simeq{\rm Vec}_\C^N(\n{B}_\Gamma)$. As a consequence, a comparison with Remark \ref{rk_k-the_top} provides
\[
{\Omega}(\s{D}[N]_\bb{W}^{\Gamma,\beta})\;\simeq\;{\Omega}(\s{P}_\bb{W}^{\Gamma,\beta})\;,\qquad \text{if}\;\; 1\leqslant d\leqslant 3\;.
\]
For $d=4$ one knows that ${\rm Vec}_\C^N(\n{B}_\Gamma)$ is described by the second and fourth cohomology groups. Therefore
\[
{\Omega}(\s{D}[N]_\bb{W}^{\Gamma,\beta})\;\simeq\;H^2(\n{B}_\Gamma,\Z)\oplus H^4(\n{B}_\Gamma,\Z)\;\simeq\;\Z^{\oplus 7}
\;,\qquad \text{if}\;\;  d=4\;.
\]
In both cases one gets

\[
{\Omega}(\s{D}[N]_\bb{W}^{\Gamma,\beta})\;\simeq\;\widetilde{K}^0(\n{B}_\Gamma)\;,\qquad \text{if}\;\; 1\leqslant d\leqslant 4\;,
\]
showing the generality of $K$-theory in the classification of the topological phases encoded in $\Gamma$-invariant, $\beta$-regular states with finite degeneracy.
\hfill $\blacktriangleleft$
\end{remark}

\subsection{Infinite degeneracy and thermal configurations}

Just as we have considered sections of $\s{P}_\bb{W}^{\Gamma,\beta}$ and $\s{D}[N]_\bb{W}^{\Gamma,\beta}$, we may consider sections of the bigger fiber bundle $\s{F}_\bb{W}^{\Gamma,\beta}$ conformed by states of the form $\omega=\widetilde{\omega}\circ \langle\cdot\rangle_\Gamma$, with $\widetilde{\omega}\in \s{F}_{\bb{V}_\Gamma}^\beta$ (\cf Corollary \ref{cor:homeo_mixed_bundle_reg}), and $\langle\cdot\rangle_\Gamma$ the conditional expectation $\bb{W}\to \bb{V}_\Gamma$ described in \cite[Equation 3.8]{denittis-rendel-25}. The configurations associated to these states are called thermal configurations.

\begin{proposition}\label{theo_main_top-4}
Thermal configurations exhibit no non-trivial associated topological phases:
\[
{\Omega}(\s{F}_\bb{W}^{\Gamma,\beta})\;=\;\{\ast\}\;.
\]
\end{proposition}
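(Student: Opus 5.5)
Our plan is to mirror the proof of Lemma \ref{Lemma-sup_top} and reduce everything to the contractibility of the typical fiber. By Corollary \ref{cor:homeo_mixed_bundle_reg}, $\s{F}_\bb{W}^{\Gamma,\beta}$ is a trivial fiber bundle over $\n{B}_\Gamma$. Its typical fiber is homeomorphic to the space ${^{\rm w}}\bb{L}^1_{+,1}(\rr{h}_\Gamma)$ of density matrices (positive trace-class operators of unit trace) on $\rr{h}_\Gamma$, carrying the topology inherited from the weak-$\ast$ topology on normal states, which is the weak operator topology on this set. As in Lemma \ref{Lemma-sup_top}, sections are then exactly the maps $\kappa\mapsto(\kappa,f(\kappa))$ with $f\in C(\n{B}_\Gamma,{^{\rm w}}\bb{L}^1_{+,1}(\rr{h}_\Gamma))$. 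Two sections are homotopic through sections if and only if the corresponding $f$'s are homotopic. This gives
\[
{\Omega}(\s{F}_\bb{W}^{\Gamma,\beta})\;\simeq\;\left[\n{B}_\Gamma,{^{\rm w}}\bb{L}^1_{+,1}(\rr{h}_\Gamma)\right]\;.
\]

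The key step is to show that ${^{\rm w}}\bb{L}^1_{+,1}(\rr{h}_\Gamma)$ is contractible. This set is convex. Fix any density matrix $T_0$, for instance a rank-one projection, and define
\[
H(t,T)\;:=\;(1-t)\,T+t\,T_0\;.
\]
We must check that $H$ is jointly continuous on $[0,1]\times{^{\rm w}}\bb{L}^1_{+,1}(\rr{h}_\Gamma)$. If $(t_i,T_i)\to(t,T)$, then for all $\xi,\eta\in\rr{h}_\Gamma$ the matrix element $\langle\xi,H(t_i,T_i)\eta\rangle$ is a product of convergent scalars plus a constant, since $|\langle\xi,T_i\eta\rangle|\leq\|\xi\|\|\eta\|$ is uniformly bounded. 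Hence $H(t_i,T_i)\to H(t,T)$ in the weak operator topology. Convex combinations of density matrices are again density matrices, so $H$ takes values in the fiber.

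If we preferred to work directly with states, the same argument applies to the fiber $\s{F}^\beta_{\bb{V}_\Gamma}$ over a fixed $\kappa$. That fiber is a convex set of states, and $(t,\omega)\mapsto(1-t)\omega+t\omega_0$ is weak-$\ast$ continuous because evaluations are bounded by $\|a\|$.

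With contractibility in hand, any $f\in C(\n{B}_\Gamma,{^{\rm w}}\bb{L}^1_{+,1}(\rr{h}_\Gamma))$ is homotopic to the constant map $T_0$ via $H(t,f(\cdot))$, so there is exactly one homotopy class. We then conclude ${\Omega}(\s{F}_\bb{W}^{\Gamma,\beta})=\{\ast\}$. Non-emptiness is clear: the constant section $\kappa\mapsto(\kappa,T_0)$ exists.

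The main obstacle is bookkeeping rather than mathematics. We must verify that the trivialization in Corollary \ref{cor:homeo_mixed_bundle_reg} identifies the fibers with the \emph{full} convex set of density matrices, with no rank restriction as in $\s{D}[N]$. We must also check that this identification is homeomorphic in the weak topology, so that the straight-line homotopy, performed fiberwise in the trivialization, is an honest homotopy through sections. If the trivialization were only available in terms of $\eta\circ\rho_\kappa$, we would instead perform the convex homotopy on the $\eta$-component with the $\kappa$-component fixed. This is again continuous by the computation in Proposition \ref{prop:homeo_pure_bundle}.
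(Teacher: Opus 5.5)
\textbf{The gap.} Your opening claim, that Corollary \ref{cor:homeo_mixed_bundle_reg} makes $\s{F}_\bb{W}^{\Gamma,\beta}$ a \emph{trivial} bundle, is not what the corollary says. It identifies $\s{F}^\beta_{\bb{V}_\Gamma}$ (hence $\s{F}_\bb{W}^{\Gamma,\beta}$, via restriction) with the twisted quotient $\rr{Tr}(\bb{E}_\Gamma)=(\R^d\times{^{\rm w}}\bb{L}^1_{+,1}(\rr{h}_\Gamma))/\Gamma'$. Its sections are maps $T:\R^d\to{^{\rm w}}\bb{L}^1_{+,1}(\rr{h}_\Gamma)$ with $T(p+\gamma')=\lambda_{-\gamma'}(T(p))$.

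No constant such map exists. In the Fourier basis of $\rr{h}_\Gamma$ each $F_{\gamma'}$ acts as a shift, so a $\lambda$-invariant density matrix would have constant diagonal and could not have trace $1$. Consequently your ``constant section $\kappa\mapsto(\kappa,T_0)$'', the identification of sections with $C(\n{B}_\Gamma,{^{\rm w}}\bb{L}^1_{+,1}(\rr{h}_\Gamma))$, and the homotopy $H(t,f(\cdot))$ all presuppose a global trivialization that you never construct.

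This is not bookkeeping; it is exactly what the paper's proof supplies. The paper views $\rr{Tr}(\bb{E}_\Gamma)$ as the bundle associated, via $U\cdot T=UTU^*$, to the principal ${^{\rm w}}\bb{U}(\rr{h}_\Gamma)$-bundle $(\R^d\times{^{\rm w}}\bb{U}(\rr{h}_\Gamma))/\Gamma'$ with action $\gamma'\cdot(x,U)=(x+\gamma',UF_{-\gamma'})$. That principal bundle is trivial because ${^{\rm w}}\bb{U}(\rr{h}_\Gamma)$ is contractible. Only after this step does your contractibility argument for ${^{\rm w}}\bb{L}^1_{+,1}(\rr{h}_\Gamma)$ finish the proof.

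Your fallbacks do not close the gap as written. Contracting a single fiber to a state $\omega_0$ is not a homotopy through sections. A convex homotopy of the $\eta$-component toward a fixed $\eta_0$ fails for the same reason as above: $\kappa\mapsto\eta_0\circ\rho_\kappa$ jumps across $\partial Q_{\Gamma'}$ unless $\eta_0\circ\lambda_{\gamma'}=\eta_0$ for all $\gamma'$.

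\textbf{A repair using your idea.} Convexity does give a valid proof that bypasses triviality, provided you homotope between two sections rather than toward a fixed point. The fiber over $\kappa$, namely $\{\eta\circ\rho_\kappa\circ\langle\cdot\rangle_\Gamma\,|\,\eta\in\s{S}^\beta_{\bb{A}_\Gamma}\}$, is convex. For sections $F_0,F_1$, form $H(t,\kappa):=(1-t)F_0(\kappa)+tF_1(\kappa)$ directly in the state space of $\bb{W}$. It stays in the fiber over $\kappa$, and it is weak-$\ast$ continuous because each evaluation $H(t,\kappa)(a)$ is continuous in $(t,\kappa)$. This is compatible with the twist because $\lambda_{\gamma'}$ acts affinely.

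You still need ${\rm Sec}(\n{B}_\Gamma,\s{F}_\bb{W}^{\Gamma,\beta})\neq\emptyset$. One way is to glue local sections $\kappa\mapsto\eta_0\circ\rho_{p(\kappa)}\circ\langle\cdot\rangle_\Gamma$ with a partition of unity, again using fiberwise convexity; here $p(\kappa)$ is a local continuous lift of $\kappa$ to $\R^d$ and $\eta_0$ is normal.

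This route is more elementary and needs nothing about the unitary group, but it depends on convex fibers. The paper's trivialization of $\rr{Tr}(\bb{E}_\Gamma)$ additionally restricts to the rank-$N$ subbundles $\s{D}[N]_\bb{W}^{\Gamma,\beta}$, where convexity is unavailable.
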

\proof
Recall that from Corollary \ref{cor:homeo_mixed_bundle_reg}, one has an homeomorphism $\s{F}_{\bb{V}_\Gamma}^\beta \simeq \rr{Tr}(\bb{E}_\Gamma)$. 
On the other hand, by \cite[Proposition 3.22]{denittis-rendel-25} it holds that $\s{F}_\bb{W}^{\Gamma,\beta}\simeq \s{F}_{\bb{V}_\Gamma}^\beta$, so we have a bijection
\[
{\Omega}(\s{F}_\bb{W}^{\Gamma,\beta}) \;\simeq\; \big[{\rm Sec}(\n{B}_\Gamma,\s{F}_{\bb{V}_\Gamma}^\beta)\big] \;\simeq\; \big[{\rm Sec}(\n{B}_\Gamma,\rr{Tr}(\bb{E}_\Gamma))\big]\;.
\]
The last correspondence is given by Corollary \ref{cor:homeo_mixed_bundle_reg}. 

Equip $\R^d \times {^{\rm w}}\bb{U}(\rr{h}_\Gamma)$ with the $\Gamma'$-action given by $\gamma'\cdot(x,U) = (x+\gamma', UF_{-\gamma'})$, and let $\rr{U}(\bb{E}_\Gamma):=(\R^d \times {^{\rm w}}\bb{U}\left(\rr{h}_\Gamma)\right)/\Gamma'$. 
The first factor projection turns $\rr{U}(\bb{E}_\Gamma)$ into a principal ${^{\rm w}}\bb{U}(\rr{h}_\Gamma)$-bundle. Moreover, note that $\rr{Tr}(\bb{E}_\Gamma)$ is the ${^{\rm w}}\bb{L}^1_{+,1}(\rr{h}_\Gamma)$-bundle associated to the principal ${^{\rm w}}\bb{U}(\rr{h}_\Gamma)$-bundle $\rr{U}(\bb{E}_\Gamma)$, where $U\in{^{\rm w}}\bb{U}(\rr{h}_\Gamma)$ acts on $T\in\bb{L}^1_{+,1}(\rr{h}_\Gamma)$ by conjugation: $U\cdot T = UTU^*$. Now, since ${^{\rm w}}\bb{U}(\rr{h}_\Gamma)$ is contractible, we deduce that $\rr{U}(\bb{E}_\Gamma)$, and consequently $\rr{Tr}(\bb{E}_\Gamma)$, is trivial. This induces a correspondence
\[
{\Omega}(\s{F}_\bb{W}^{\Gamma,\beta})\;\simeq\;\big[{\rm Sec}(\n{B}_\Gamma,\rr{Tr}(\bb{E}_\Gamma))\big]  \;\simeq\; \big[\n{B}_\Gamma,{^{\rm w}}\bb{L}^1_{+,1}(\rr{h}_\Gamma)\big] \;=\;\{*\}\;,
\]
Where the last equality is due to the contractibility of ${^{\rm w}}\bb{L}^1_{+,1}(\rr{h}_\Gamma)$.
\qed

\subsection{The role of time-reversal invariance}

Identify $\bb{A}_\Gamma$ with its concrete form $\bb{C}(\rr{h}_\Gamma)\subset \bb{B}(\rr{h}_\Gamma)$ described in Section \ref{sec:classif_sts_latt2}. Let $C:\rr{h}_\Gamma\to\rr{h}_\Gamma$ be the anti-unitary involution of complex conjugation, defined by $Cf = \overline{f}$ for $f\in \rr{h}_\Gamma = L^2(\n{T}_\Gamma,\dd\nu)$. Given a section $F\in {\rm Sec}(\n{B}_\Gamma, \s{D}[N]_\bb{W}^{\Gamma,\beta})$, represented locally on $Q_{\Gamma'}$ as $F(\kappa) = (\kappa , P(\kappa))$ (via the fiber bundle isomorphism $\s{D}[N]_\bb{W}^{\Gamma,\beta} \simeq \n{B}_\Gamma \times {^{\rm w}}\bb{G}_N(\rr{h}_\Gamma))$, we will use the expression $\overline{F}$ to denote the section given locally by $\overline{F(\kappa)}:=(\kappa, C P(-\kappa) C)$. 
\begin{definition}
 Given a configuration $F\in {\rm Sec}(\n{B}_\Gamma, \s{D}[N]_\bb{W}^{\Gamma,\beta})$, the operation $F\mapsto \overline{F}$ is called \emph{time-reversal}, and $F$ is said to be \emph{time-reversal invariant} if $F = \overline{F}$.
\end{definition}

Let us denote with ${\rm Sec}^{C}(\n{B}_\Gamma, \s{D}[N]_\bb{W}^{\Gamma,\beta})$ the subset of time-reversal invariant configurations, and by $\Omega^C(\s{D}[N]_\bb{W}^{\Gamma,\beta}):=\big[ {\rm Sec}^{C}(\n{B}_\Gamma, \s{D}[N]_\bb{W}^{\Gamma,\beta})\big]$ the associated topological phases. The definition above states that these are identified by \virg{Real} sections in the sense of \cite[Section 4.3]{denittis-gomi-14}.
Let ${\rm Vec}_\rr{R}^N(\n{B}_\Gamma)$ be the set of isomorphism classes of  \virg{Real} vector-bundles with respect to the involution on $\n{B}_\Gamma$ given by $\kappa\mapsto -\kappa$. The reader can refer to  \cite{denittis-gomi-14} for more details on this subject.
The effect of the time-reversal symmetry on the classification of topological phases is the content of the last result.

\begin{proposition}\label{theo_main_top-3}
The following bijection of sets holds true
\[
\Omega^C(\s{D}[N]_\bb{W}^{\Gamma,\beta})\;\simeq\;{\rm Vec}_\rr{R}^N(\n{B}_\Gamma)\;.
\]
Moreover, in low dimension $1\leqslant d\leqslant 3$ one gets that
\[
\Omega^C(\s{D}[N]_\bb{W}^{\Gamma,\beta})\;=\;\{\ast\}
\]
is a singleton. 
\end{proposition}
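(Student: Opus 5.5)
Since $\s{D}[N]_\bb{W}^{\Gamma,\beta}$ is a trivial bundle whose fibers are homotopy equivalent to ${^{\rm w}}\bb{G}_N(\rr{h}_\Gamma)$, I would first reduce the statement to an equivariant homotopy problem and then invoke the \virg{Real} classifying space theory of \cite{denittis-gomi-14}. The lemma above and the proof of Lemma \ref{Lemma-sup_top} identify sections $F$ with continuous maps $\kappa\mapsto T(\kappa)\in{^{\rm w}}\bb{L}^1_{+,1,N}(\rr{h}_\Gamma)$. Time-reversal invariance then reads $T(-\kappa)=CT(\kappa)C$. Here one must check that the local description on $Q_{\Gamma'}$ is compatible with the identification $\kappa\mapsto-\kappa$ modulo $\Gamma'$. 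Concretely, I would verify that the trivialization $\s{D}[N]_\bb{W}^{\Gamma,\beta}\simeq\n{B}_\Gamma\times(\cdot)$ intertwines the map $\omega\mapsto\overline{\omega}$ with the involution $(\kappa,T)\mapsto(-\kappa,CTC)$. If the trivialization twists by the $\Gamma'$-action through the unitaries $F_{\gamma'}$, this still works, because $CF_{\gamma'}C=F_{-\gamma'}$.

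Next, I would show that the homotopy equivalence $h_1,h_2$ of the lemma is equivariant for the involution $T\mapsto CTC$, $P\mapsto CPC$. This holds because support projections and functional calculus commute with antiunitary conjugation, and the straight-line homotopy $H$ is likewise equivariant. Consequently, $\Omega^C$ is in bijection with the $\n{Z}_2$-equivariant homotopy classes $[\n{B}_\Gamma,{^{\rm w}}\bb{G}_N(\rr{h}_\Gamma)]_{\n{Z}_2}$, where $\n{B}_\Gamma$ carries $\kappa\mapsto-\kappa$ and the Grassmannian carries $P\mapsto CPC$.

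The main obstacle is to show that $({^{\rm w}}\bb{G}_N(\rr{h}_\Gamma),C\cdot C)$ is a model for the classifying space of \virg{Real} rank-$N$ bundles. I would mimic \cite[Theorem 1.1]{denittis-gomi-rendel-25}. The tautological bundle over ${^{\rm w}}\bb{G}_N$ is given a Real structure by $C$. For the universal property, one must show that the Stiefel-type total space with the weak topology is equivariantly contractible. Equivalently, the fixed points, namely real $N$-frames in $L^2$ with $Cf=f$ (real-valued functions), form a contractible space, together with the whole space. Both follow from the same Kuiper-type or weak-topology contractibility argument applied to the real subspace $\{f\;|\;Cf=f\}$. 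An equivariant version of the Whitehead/classifying theorem for free-enough $\n{Z}_2$-CW spaces such as $\n{T}^d$ with $\kappa\mapsto-\kappa$ then gives $[\n{B}_\Gamma,{^{\rm w}}\bb{G}_N]_{\n{Z}_2}\simeq{\rm Vec}_\rr{R}^N(\n{B}_\Gamma)$, proving the first bijection. Alternatively, one can compare with the norm-topology Real Grassmannian classified in \cite{denittis-gomi-14}.

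Finally, for $1\leqslant d\leqslant3$ I would cite the classification of \cite{denittis-gomi-14}. For low-dimensional involutive spaces with a finite fixed-point set, such as $\n{T}^d$ with $\kappa\mapsto-\kappa$, Real vector bundles are classified by the equivariant first Chern class in $H^2_{\n{Z}_2}(\n{T}^d,\Z(1))$. This group vanishes for the TR-torus when $d\leqslant3$, so every Real bundle is trivial and $\Omega^C$ is a singleton.
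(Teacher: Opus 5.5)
Your proposal is correct and follows the paper's route. You reduce $\Omega^C(\s{D}[N]_\bb{W}^{\Gamma,\beta})$ to $\Z_2$-equivariant homotopy classes of maps $\n{B}_\Gamma\to{^{\rm w}}\bb{G}_N(\rr{h}_\Gamma)$ with the involutions $\kappa\mapsto-\kappa$, $P\mapsto CPC$, identify these with ${\rm Vec}^N_\rr{R}(\n{B}_\Gamma)$ via \cite[Theorem 4.13]{denittis-gomi-14}, and get triviality for $d\leqslant 3$ from \cite[Theorem 1.6]{denittis-gomi-14}. The paper just cites these (writing $\bb{G}_1$ where $\bb{G}_N$ is meant), whereas you also check, as the paper does not, that the trivialization and the homotopy equivalence ${^{\rm w}}\bb{L}^1_{+,1,N}(\rr{h}_\Gamma)\approx{^{\rm w}}\bb{G}_N(\rr{h}_\Gamma)$ are equivariant.

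The step you call the main obstacle is milder than it looks: on rank-$N$ projections the weak operator topology already coincides with the norm topology, since $\|P_i-P\|_2^2=2N-2\,{\rm Tr}(PP_iP)\to0$ in Hilbert--Schmidt norm under weak convergence. So ${^{\rm w}}\bb{G}_N(\rr{h}_\Gamma)$ is the usual Hilbert Grassmannian with $P\mapsto CPC$, and your alternative comparison with the standard Real classifying space applies directly.
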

\proof
By repeating the arguments in 
Lemma \ref{Lemma-sup_top} and Theorem \ref{theo_main_top} one gets that 
\[
\Omega^C(\s{D}[N]_\bb{W}^{\Gamma,\beta})\;\simeq\;[\n{B}_\Gamma, {^{\rm w}\bb{G}_1}(\rr{h}_\Gamma)]_{\rm eq}
\]
where the subscript \virg{eq} means that the homotopy is taken in the set of maps $P:\n{B}_\Gamma\to{^{\rm w}\bb{G}_1}(\rr{h}_\Gamma)$ which are constrained by the equivariance condition $P(-\kappa)=CP(\kappa)C$. Then, \cite[Theorem 4.13]{denittis-gomi-14}
provides the first bijection. The result in low dimension follows from 
the classification \cite[Theorem 1.6]{denittis-gomi-14}. 
\qed

\begin{remark}[The pure case; $N=1$]
It is worth adding that for $N=1$ one gets form \cite[Theorem 1.6]{denittis-gomi-14}
\[
\Omega^C(\s{P}_\bb{W}^{\Gamma,\beta})\;=\;\{[\omega_\ast]\}\;,
\]
independently of $d\in\N$. 
\end{remark}

\appendix

\section{\texorpdfstring{$C^*$}{C*}-bundles and states}\label{sec:sts_Cs_bundles}

Given a locally compact Hausdorff space $X$, a $C^*$-algebra $\bb{A}$ is said to be a $C(X)$-algebra if there exists a non-degenerate $*$-homomorphism $\Xi: C_0(X)\to \bb{Z(M(A))}$, called \emph{structure morphism}. Here, $\bb{M(A)}$ denotes the multiplier algebra of $\bb{A}$, while $\bb{Z(M(A))}$ denotes its center.
As above, we consider $\bb{A}$ to be unital and $X$ to be compact, so $\bb{M(A)} = \bb{A}$, $C_0(X) = C(X)$ and the structure morphism $\Xi:C(X)\to \bb{Z(A)}$ is trivially non-degenerate as long as it is unital. 

\medskip

Given $x\in X$, one may interpret $C_0(X\setminus\{x\})$ as the ideal of functions in $C(X)$ vanishing at $x$. Then, we may define the space $\bb{I}_x := \Xi(C_0(X\setminus\{x\})) \bb{A}$, which is a closed two-sided ideal and coincides with the set of elements $\{\Xi(g) a \in \bb{A} \;|\; g\in C_0(X\setminus\{x\}), a\in \bb{A} \}$ by Cohen's factorization theorem. We call the quotient $C^*$-algebra $\bb{A}_x := \bb{A}/\bb{I}_x$ the fiber of $\bb{A}$ at $x$, and denote the quotient map by $\pi_x : \bb{A} \to \bb{A}_x$. This map has the property that for any $a\in\bb{A}$ and $g\in C_0(X)$, $\pi_x(a\Xi(g)) = g(x)\pi_x(a)$, and allows us to interpret an element $a\in\bb{A}$ as a function $a:X\to\Pi_{x\in X} \bb{A}_x$ via $a(x) := \pi_x(a)$. The map $x\mapsto \|a(x)\|$ is in general upper semi-continuous \cite{dadarlat-09}, and a $C(X)$-algebra $\bb{A}$ for which it is continuous for all $a\in\bb{A}$ is called a \emph{$C^*$-bundle over $X$}. This is equivalent to $\bb{A}$ coinciding with the $C^*$-algebra of continuous sections of a continuous field of $C^*$-algebras in the sense of \cite{dixmier-77}.

\medskip

Let us focus on the case where our $C^*$-algebra of observables $\bb{A}$ is a $C^*$-bundle over the quantum parameter space $X$. This $C^*$-bundle structure can be interpreted as $\bb{A}$ containing the information of this quantum number. For each $x\in X$, the $C^*$-algebra $\bb{A}_x$ may be interpreted as the \emph{internal} observables of the elemental system localized at $x$, and the maps $\pi_x$ can be regarded as \emph{localizing homeomorphisms}. 

\medskip

We consider now the states of $\bb{A}$. 
Note that for each $x\in X$, each state $\omega_x\in \s{S}_{\bb{A}_x}$ defines a state $\omega\in \s{S}_{\bb{A}}$ via $\omega := \omega_x\circ \pi_x$. In fact, the pure states of $\s{S}_\bb{A}$ are exactly those of the aforementioned form where $\omega_x\in\s{P}_{\bb{A}_x}$ \cite[Theorem 10.4.3]{dixmier-77}. Define the \textit{lifting function} $\rr{i}: \bigsqcup_{x\in X} \s{S}_{\bb{A}_x} \to \s{S}_\bb{A}$ given on $\s{S}_{\bb{A}_x}$ by $\rr{i}(\omega_x) = \omega_x \circ \pi_x$. The previous statement on the correspondence of pure states then equates to the restriction $\rr{i}_\s{P}: \bigsqcup_{x\in X} \s{P}_{\bb{A}_x} \to \s{P}_{\bb{A}}$ of the lifting function being bijective. One can also have this type of structure for an appropriate sample state space $\s{Q}_\bb{A}$ by assuming that for each $x\in X$, there exists a subset $\s{Q}_{\bb{A}_x}\subset\s{P}_{\bb{A}_x}$ such that $\rr{i}_\s{P} \left(\bigsqcup_{x\in X} \s{Q}_{\bb{A}_x}\right) = \s{Q}_{\bb{A}}$. 
For later use, we define $\pi:\s{P}_\bb{A}\to X$ to be the map that takes a state $\omega=\omega_x\circ\pi_x$ to $x$.

\medskip

As was mentioned in the introduction, a configuration assigns a state of the system to each value $x\in X$ of a certain quantum number in a continuous manner. We assume the $C^*$-algebra $\bb{A}$ of the system contains the information of these quantum numbers, which translates into the condition of $\bb{A}$ being a $C^*$-bundle over $X$. In this case, one would expect a \virg{physical} configuration $F$ to assign to each value of $x$ a state $F(x)$ of the system in which this quantum number takes the value $x$. 
For example, consider a system comprised of one particle in three dimensions and let $X = \R^3 \cup\{\infty\}$ be the one-point compactification of $\R^3$, interpreted as the momentum space. Then for each value of momentum $p\in\R^3 \cup\{\infty\}$, $F(p)$ should be a state of the system in which it has momentum $p$.
This is made rigorous by the following definition of localizable configuration, which generalizes the one given in the introduction.

\begin{definition}[Localizable configuration]\label{def:loc_config}
    A given configuration $F\in C(X,\s{P}_\bb{A})$ is said to be \emph{localizable} if for each $x\in X$, $F(x) \in \rr{i}(\s{P}_{\bb{A}_x})$. 
\end{definition}

In many physical applications it makes sense to consider the extended system $\bb{A}$ as comprised by copies indexed by a quantum parameter of a given elemental (local) system. The internal observables of this elemental system are represented by a fixed $C^*$-algebra $\bb{O}$. This translates into the condition of the $C^*$-bundle $\bb{A}$ having typical fiber $\bb{A}_x \simeq \bb{O}$. 

\medskip

For a compact set $K\subset X$, we define the ideal of elements of $\bb{A}$ vanishing in $K$ as $\bb{I}_K := \Xi(C_0(X\setminus K)) \bb{A}$, and the observables localized at $K$ as $\bb{A}|_K := \bb{A}/\bb{I}_K$. 
We say the $C^*$-bundle $\bb{A}$ is trivial if $\bb{A}\simeq C(X,\bb{O})$, and locally trivial if there exists an open cover $\s{C}$ of $X$ by compact neighbourhoods (that is, a collection of compact sets whose interiors cover the whole space) such that for each $K\in\s{C}$, $\bb{A}|_K \simeq C(K,\bb{O})$.

\medskip

For our purposes, we need a good description of the localizable configurations and their relation to the space of pure states of $\bb{A}$. 
In \cite{denittis-25} it was shown that if $\bb{A}$ is a trivial $C^*$-bundle, then the space of pure states of $\bb{A}$ is homeomorphic to the trivial fiber bundle $X\times\s{P}_\bb{O}$. In \cite{rendel-25} this result will be extended to show that if $\bb{A}$ is any $C^*$-bundle over $X$ with faithful structure homomorphism and fiber $\bb{O}$, then, equipped with the map $\pi:\s{P}_\bb{A}\to X$ defined above, its space of pure states is a bundle over $X$ with typical fiber $\s{P}_\bb{O}$, which is a locally trivial fiber bundle with structure group $\rm{Aut}(\bb{O})$ if $\bb{A}$ is locally trivial as a $C^*$-bundle. In both cases, localizable configurations amount to sections of $\s{P}_\bb{A}$ viewed as a fiber bundle $\s{P}_\bb{O}\to \s{P}_\bb{A}\to X$. We do not need these results though, since in \cite{denittis-rendel-25} we directly proved the fiber bundle structure of the subsets of pure states (sample state spaces) of interest for this chapter. 
%

\subsection{Localizable states of \texorpdfstring{$\bb{V}_\Gamma$}{\text{V\_Γ}}}

The proof of Proposition \ref{prop:homeo_pure_bundle} can give significative insight on the structure of various families of states of $\bb{V}_\Gamma$ (and consequently of $\Gamma$-invariant states of $\bb{W}$), if one is looking for it. Borrowing its notation, we also obtain the following.

\begin{proposition}\label{prop:homeo_mixed_bundle}
    The map $\widehat{\Phi}$ described in the proof of Proposition \ref{prop:homeo_pure_bundle} is an homeomorphism 
    \[
    \s{F}_{\bb{V}_\Gamma} \;\simeq\; \rr{S}_\s{S} (\bb{E}_\Gamma) \;.
    \]
    In particular, $\s{F}_{\bb{V}_\Gamma}$ is a locally trivial fiber bundle with typical fiber $\s{S}_{\bb{A}_\Gamma}$.
\end{proposition}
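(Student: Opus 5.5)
The plan is to reuse the skeleton of the proof of Proposition \ref{prop:homeo_pure_bundle}, where most of the work was already done for the map $\widehat{\Phi}:\rr{S}_\s{S}(\bb{E}_\Gamma)\to\s{F}_{\bb{V}_\Gamma}$, $\widehat{\Phi}([p,\eta])=\omega_{(p,\eta)}=\eta\circ\rho_p$. There are three things to check, in order: $\widehat{\Phi}$ is well defined and continuous; it is a bijection onto $\s{F}_{\bb{V}_\Gamma}$; and it is a homeomorphism. Local triviality of $\s{F}_{\bb{V}_\Gamma}$ will then be read off from the bundle structure of $\rr{S}_\s{S}(\bb{E}_\Gamma)$.

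\textbf{Well-definedness and continuity.} The covariance identity $\omega_{(p+\gamma',\gamma'\cdot\eta)}=\omega_{(p,\eta)}$ is checked on the generators $u_{\alpha}v_\beta$, exactly as in the pure case. The $\Gamma'$-action on $\s{S}_{\bb{A}_\Gamma}$ is given by the same formula as on $\s{P}_{\bb{A}_\Gamma}$, and it clearly preserves states. Continuity comes from the net argument already given: if $p_i\to p$ and $\eta_i\to\eta$ weak-$\ast$, then convergence holds on the generators $u_{\gamma'}v_\beta$. Since states are uniformly bounded, a density argument upgrades this to weak-$\ast$ convergence. The map then passes continuously to the quotient $\rr{S}_\s{S}(\bb{E}_\Gamma)$.

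\textbf{Bijectivity.} Surjectivity onto $\s{F}_{\bb{V}_\Gamma}$ holds by the definition of $\s{F}_{\bb{V}_\Gamma}$. For injectivity, suppose $\eta\circ\rho_p=\eta'\circ\rho_{p'}$.
\begin{itemize}
\item \emph{Equal quasi-momenta.} Evaluate on the central elements $v_\gamma$, $\gamma\in\Gamma$. Since $\rho_p(v_\gamma)=\expo{-\ii p\cdot\gamma}{\bf 1}$, this gives $\expo{-\ii p\cdot\gamma}=\expo{-\ii p'\cdot\gamma}$ for all $\gamma\in\Gamma$. Hence $p-p'\in\Gamma'$, because states are unital.
\item \emph{Equal fiber states.} Using the $\Gamma'$-action we may replace $(p',\eta')$ by an equivalent pair with $p'=p$. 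Then $\eta\circ\rho_p=\eta'\circ\rho_p$, and surjectivity of $\rho_p$ onto $\bb{C}(\rr{h}_\Gamma)\simeq\bb{A}_\Gamma$ (Proposition \ref{prop:typ-fib}) gives $\eta=\eta'$.
\end{itemize}

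\textbf{Homeomorphism.} The space $\rr{S}_\s{S}(\bb{E}_\Gamma)$ is compact, being the image of $\overline{Q_{\Gamma'}}\times\s{S}_{\bb{A}_\Gamma}$, where $\s{S}_{\bb{A}_\Gamma}$ is weak-$\ast$ compact by Banach–Alaoglu since $\bb{A}_\Gamma$ is unital. The state space $\s{S}_{\bb{V}_\Gamma}$ is Hausdorff. A continuous bijection from a compact space onto a Hausdorff space is a homeomorphism, so $\widehat{\Phi}$ is a homeomorphism onto its image $\s{F}_{\bb{V}_\Gamma}$. As a byproduct, $\s{F}_{\bb{V}_\Gamma}$ is closed in $\s{S}_{\bb{V}_\Gamma}$.

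\textbf{Local triviality.} It remains to show that $\rr{S}_\s{S}(\bb{E}_\Gamma)=(\R^d\times\s{S}_{\bb{A}_\Gamma})/\Gamma'$ is locally trivial over $\n{B}_\Gamma=\R^d/\Gamma'$. The action of $\Gamma'$ on $\R^d$ is free and properly discontinuous. Take a small open cube $U\subset\R^d$ whose translates under $\Gamma'$ are pairwise disjoint, as with $Q_0$ in the proof of Theorem \ref{prop:Vgammabundle}. The quotient map then restricts to a homeomorphism $U\times\s{S}_{\bb{A}_\Gamma}\to\pi^{-1}([U])$, compatible with the projections to the base. This is the standard associated-bundle construction. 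Transporting this structure through $\widehat{\Phi}$, the bundle map is $\omega_{(p,\eta)}\mapsto[p]$, which coincides with $\omega\mapsto\omega\circ\Xi$ restricted to $\s{F}_{\bb{V}_\Gamma}$.

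The step I expect to need the most care is this last verification that the restricted quotient map is open, i.e. a genuine local trivialization. It follows from the fact that quotient maps by group actions are open. Apart from that, the only place needing attention is the injectivity step: there one must use that $p$ is determined only modulo $\Gamma'$, and that the fiber state is determined exactly once a representative $p$ is fixed.
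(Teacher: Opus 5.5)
Your proposal is correct and follows the same route as the paper, which proves this statement inside the proof of Proposition~\ref{prop:homeo_pure_bundle}. The steps match: continuity of $\widehat{\Phi}$ via convergence on the generators $u_{\gamma'}v_\beta$ plus density, bijectivity, compactness of $\rr{S}_\s{S}(\bb{E}_\Gamma)$ as the image of $\overline{Q_{\Gamma'}}\times\s{S}_{\bb{A}_\Gamma}$, and the compact-to-Hausdorff argument. Your injectivity argument via the central elements $v_\gamma$ and your explicit local trivialization over small cubes make precise the two points the paper only asserts as holding \virg{by construction}.
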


\medskip

Note that any subset $\s{K}_{\bb{A}_\Gamma}\subset\s{S}_{\bb{A}_\Gamma}$ induces a subbundle of $\rr{S}_\s{S} (\bb{E}_\Gamma)$, and thus of $\s{F}_{\bb{V}_\Gamma}$. This allows us to treat sections $F:\n{B}_\Gamma \to \s{K}_{\bb{A}_\Gamma}$ of these bundles as \virg{degenerate} localizable configurations, and define an analogous notion of degenerate topological phases.

\medskip

We can also use Proposition \ref{prop:homeo_mixed_bundle} to enhance our description of the $\beta$-regular states of $\bb{V}_\Gamma$. First let us prove a preliminary result. Let $\s{S}_{\bb{A}_\Gamma}^{\beta}$ be the set of states $\nu$ of $\bb{A}_\Gamma$ such that $\beta\mapsto\nu(F_{\gamma'} S_\beta)$ is continuous for all $\gamma'\in\Gamma'$.

\medskip

\begin{lemma}\label{lemma:homeo_trace_class_sts}
    There is an affine homeomorphism 
    \[
    \s{S}_{\bb{A}_\Gamma}^{\beta} \;\simeq\; {^{\rm w}}\bb{L}^1_{+,1}(\rr{h}_\Gamma) 
    \]
    given on $T\in\bb{L}^1_{+,1}(\rr{h}_\Gamma)$ by $T\mapsto \nu_T$, with $\nu_T(a):= {\rm Tr}_{\rr{h}_\Gamma}(Ta)$.
\end{lemma}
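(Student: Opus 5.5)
The plan is to show that the map $T\mapsto\nu_T$ is well defined, affine and injective; that its image is exactly $\s{S}_{\bb{A}_\Gamma}^{\beta}$; and that it is a homeomorphism for the weak topology on ${^{\rm w}}\bb{L}^1_{+,1}(\rr{h}_\Gamma)$ and the weak-$\ast$ topology on states. Throughout we use the concrete form $\bb{A}_\Gamma\simeq\bb{C}(\rr{h}_\Gamma)$ from Lemma \ref{lemma-AG}.

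\emph{Well defined, affine and injective.} For $T\in\bb{L}^1_{+,1}(\rr{h}_\Gamma)$, $\nu_T$ is a normal state, and $\nu_T$ is clearly affine in $T$. Regularity holds because $\beta\mapsto S_\beta$ is strongly continuous on $L^2(\n{T}_\Gamma)$ and $T$ is trace class, so $\beta\mapsto{\rm Tr}(TF_{\gamma'}S_\beta)$ is continuous. For injectivity, the linear span of $\{F_{\gamma'}S_\beta\}$ is weakly dense in $\bb{B}(\rr{h}_\Gamma)$. Indeed, its commutant consists of operators commuting with all multiplications by $\expo{\ii\gamma'\cdot y}$, hence of multiplication operators; these must also commute with all translations, so they are scalars, and irreducibility plus the bicommutant theorem give density. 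Since $T$ is trace class, $\nu_T$ determines ${\rm Tr}(TB)$ for every $B\in\bb{B}(\rr{h}_\Gamma)$, hence determines $T$.

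\emph{Surjectivity; the main obstacle.} Given $\nu\in\s{S}_{\bb{A}_\Gamma}^{\beta}$, we need to show that $\nu$ is normal. The plan is to use the Stone--von Neumann type uniqueness for the pair $(F_{\gamma'},S_\beta)$, i.e.\ a Mackey imprimitivity argument for the system of $\n{T}_\Gamma$-translations and the dual characters $\Gamma'$. Let $(\pi_\nu,\s{H}_\nu,\psi_\nu)$ be the GNS triplet. Continuity of $\beta\mapsto\nu(F_{\gamma'}S_\beta)$, together with $\nu(a^*F_{\gamma'}S_\beta b)$ for generators $a,b$ (a product of generators is again a phase times a generator, up to a continuous phase in $\beta$), makes $\beta\mapsto\pi_\nu(S_\beta)$ weakly, hence strongly, continuous on the cyclic subspace. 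Then $\gamma'\mapsto\pi_\nu(F_{\gamma'})$ defines a projection-valued measure on $\widehat{\Gamma'}\simeq\n{T}_\Gamma$ that transforms covariantly under the continuous unitary representation $\n{T}_\Gamma\ni[\beta]\mapsto\pi_\nu(S_\beta)$. By Mackey's imprimitivity theorem, with transitive action and trivial stabilizer, $\pi_\nu$ is a multiple of the identity representation on $\rr{h}_\Gamma=L^2(\n{T}_\Gamma)$. Hence $\nu$ is a vector state of $\rr{h}_\Gamma\otimes\s{K}$, which is a normal state, so $\nu=\nu_T$ for a density matrix $T$ obtained by a partial trace. This is the step I expect to require the most care, namely checking the hypotheses of the imprimitivity theorem. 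If that proves awkward, a fallback is to show directly that regular states are $\sigma$-weakly continuous on the norm-dense unital subalgebra generated by the $F_{\gamma'}$ and the integrated operators $\int f(\beta)S_\beta\,\dd\beta$, the latter being compact-like (convolution times multiplication on the torus).

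\emph{Topology.} If $T_i\to T$ weakly with all traces equal to $1$, positivity gives convergence in the weak-$\ast$ sense on trace-class operators tested against compacts. To obtain $\nu_{T_i}\to\nu_T$ on the generators $F_{\gamma'}S_\beta$, I would use the standard fact that for density matrices, weak operator convergence to a density matrix implies trace-norm convergence (a Grümm/Simon-type lemma). Conversely, if $\nu_{T_i}\to\nu_T$ weak-$\ast$ on $\bb{A}_\Gamma$, then, since $\bb{A}_\Gamma$ contains the compact operators on $\rr{h}_\Gamma$ (via the integrated $S_\beta$ and the $F_{\gamma'}$), we obtain ${\rm Tr}(T_iK)\to{\rm Tr}(TK)$ for all finite-rank $K$. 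With the uniform bound $\|T_i\|_1=1$, this gives weak operator convergence $T_i\to T$. Together these show that $T\mapsto\nu_T$ is a homeomorphism.
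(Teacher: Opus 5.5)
\textbf{The gap is in the converse topological step ($\nu_{T_i}\to\nu_T$ weak-$\ast$ $\Rightarrow$ $T_i\to T$ weakly), and it cannot be closed: the statement is false in that direction.}

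The rest of your plan is sound. Surjectivity via Stone--von Neumann--Mackey uniqueness for the GNS representation is the paper's route, and your argument that $\beta\mapsto\pi_\nu(S_\beta)$ is strongly continuous fills in a step the paper only asserts. Injectivity via weak density of the span of the $F_{\gamma'}S_\beta$ is correct; the paper leaves it implicit. Continuity of $T\mapsto\nu_T$ via trace-norm convergence of density matrices is as in the paper.

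\textbf{The false containment.} You claim $\bb{A}_\Gamma$ contains the compact operators on $\rr{h}_\Gamma$ through the integrated operators $\int f(\beta)S_\beta\,\dd\beta$. It does not. $\bb{A}_\Gamma$ is the \emph{norm} closure of the span of the unitaries $F_{\gamma'}S_\beta$, and the integrated operators are only strong limits of such combinations. In fact $\bb{A}_\Gamma\cap\bb{K}(\rr{h}_\Gamma)$ is a closed ideal of the simple unital algebra $\bb{A}_\Gamma$ that does not contain ${\bf 1}$, so it is $\{0\}$. The same false containment also undermines your fallback argument for surjectivity.

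\textbf{Counterexample.} Let $e_{\gamma'}(y)=\expo{\ii\gamma'\cdot y}$ and let $T_{\gamma'}$ be the projection onto $e_{\gamma'}$. Then $\nu_{T_{\gamma'}}(F_\alpha S_\beta)=\delta_{\alpha,0}\expo{-\ii\gamma'\cdot\beta}$. Apply Dirichlet's simultaneous approximation theorem to the multiples $n\gamma'_0$ of a fixed $\gamma'_0\neq0$. This shows $0$ is not isolated in the Bohr topology of $\Gamma'$, so there is a net $\gamma'_j\neq0$ with $\expo{-\ii\gamma'_j\cdot\beta}\to1$ for every $\beta\in\R^d$. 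Convergence on the generators, norm density of their span and $\|\nu\|=1$ then give $\nu_{T_{\gamma'_j}}\to\nu_{T_0}$ weak-$\ast$ on $\bb{A}_\Gamma$. All these states are pure, normal and $\beta$-regular. Yet $\langle e_0,T_{\gamma'_j}e_0\rangle=0$ for all $j$, so $T_{\gamma'_j}\not\to T_0$ in the weak operator topology.

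This is the Bohr-topology phenomenon the paper itself records for $\s{P}^{\tau,\beta}_\bb{W}\simeq\R^d_\rr{b}$ in Section~\ref{sec:classif_sts_tras2}. The $\nu_{T_{\gamma'}}$ are exactly the fiber components at $\kappa=0$ of plane-wave states.

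\textbf{The paper's proof has the same hole.} It passes from convergence on $\bb{A}_\Gamma$ to convergence on all of $\bb{B}(\rr{h}_\Gamma)$ by invoking ultraweak density. Ultraweak density only shows that a single normal functional is determined by its restriction to $\bb{A}_\Gamma$, which is what gives injectivity. It does not transfer convergence of nets, as the example shows.

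What your argument, like the paper's, actually establishes is that $T\mapsto\nu_T$ is a continuous affine bijection from ${^{\rm w}}\bb{L}^1_{+,1}(\rr{h}_\Gamma)$ onto $\s{S}^\beta_{\bb{A}_\Gamma}$. It is not a homeomorphism for the weak-$\ast$ topology.
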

\proof
Let us start by proving that $\s{S}_{\bb{A}_\Gamma}^{\beta}$ coincides with the set of normal states of $\bb{A}_\Gamma$, considering it in its concrete form as a subalgebra of $\bb{B}(\rr{h}_\Gamma)$. Since the normal states of $\bb{A}_\Gamma$ are exactly those of the form $\nu_T$ for some $T\in\bb{L}^1_{+,1}(\rr{h}_\Gamma)$, this would then give us the bijectivity of the mapping. 
Firstly, any normal state $\omega\in\s{S}_{\bb{A}_\Gamma}$ is in $\s{S}_{\bb{A}_\Gamma}^{\beta}$, since $\beta\mapsto {\rm Tr}_{\rr{h}_\Gamma}(TF_{\gamma'}S_\beta)$ is continuous for any $T\in\bb{L}^1_{+,1}(\rr{h}_\Gamma)$. For the other inclusion, fix $\nu\in \s{S}_{\bb{A}_\Gamma}^{\beta}$, and let $(\pi_\nu,\rr{h}_\nu,\varphi_\nu)$ be its GNS representation. Since $\nu$ is $\beta$-regular, one has that $\beta\mapsto\pi_\nu(v_\beta)$ is strongly continuous. Then, the Stone-Von Neumann-Mackey theorem \cite[Theorem 1]{mackey-49} implies that $\rr{h}_\nu = \bigoplus_{n\in I} \rr{h}_\Gamma(n)$, where $I$ is at most countable, each $\rr{h}_\Gamma(n)$ is a closed subspace fixed under $\pi_\nu(\bb{A}_\Gamma)$, and there exist unitary operators $U_n:\rr{h}_\Gamma(n) \to \rr{h}_\Gamma$ sending $\pi_\nu(a)|_{\rr{h}_\Gamma(n)}$ to $a\in\bb{A}_\Gamma$. Let $P_n \in\bb{P}(\rr{h}_\nu)$ be the orthogonal projection onto $\rr{h}_\Gamma(n)$, and $\varphi_\nu^n := U_n P_n\varphi_\nu$. Then,
\[
    \nu(a) \;=\; \langle\varphi_\nu,\pi_\nu(a)\varphi_\nu\rangle_{\rr{h}_\nu} \;=\; \sum_{n\in I} \big\langle P_n \varphi_\nu, \pi_\nu(a) P_n \varphi_\nu \big\rangle_{\rr{h}_\Gamma(n)} \;=\; \sum_{n\in I} \big\langle \varphi_\nu^n, a \varphi_\nu^n \big\rangle_{\rr{h}_\Gamma} \;.
\]
Define on $\rr{h}_\Gamma$ the linear operator $T\psi := \sum_{n\in I} \langle \varphi_\nu^n, \psi\rangle_{\rr{h}_\Gamma}\varphi_\nu^n$. Note that since 
\[
\sum_{n\in I}\|\varphi_\nu^n\|_{\rr{h}_\Gamma}^2 \;=\; \sum_{n\in I}\|P_n\varphi_\nu\|_{\rr{h}_\Gamma(n)}^2 \;=\; \|\varphi_\nu\|_{\rr{h}_\nu}^2 \;=\; 1\;,
\]
one has that $T$ is bounded, trace-class and has trace $1$. Moreover, $T$ is clearly positive since $\langle \psi, T\psi\rangle = \sum_{n\in I}|\langle\varphi_\nu^n,\psi\rangle_{\rr{h}_\Gamma}|^2 \geq 0$. Finally, we have that
\begin{align*}
    {\rm Tr}_{\rr{h}_\Gamma}(Ta) \;&=\; \sum_{\gamma'\in \Gamma'} \langle\xi_{\gamma'},Ta\xi_\gamma'\rangle_{\rr{h}_\Gamma} \;=\; \sum_{\gamma'\in \Gamma'}\sum_{n\in I} \langle\varphi_\nu^n,a \xi_{\gamma'}\rangle_{\rr{h}_\Gamma} \langle\xi_{\gamma'}, \varphi_\nu^n \rangle_{\rr{h}_\Gamma} \\
    &=\; \sum_{n\in I}\sum_{\gamma'\in \Gamma'} \langle\varphi_\nu^n,a P_{\xi_{\gamma'}} \varphi_\nu^n \rangle_{\rr{h}_\Gamma} \;=\; \sum_{n\in I} \langle\varphi_\nu^n,a \varphi_\nu^n \rangle_{\rr{h}_\Gamma} \\
    &=\; \nu(a) \;.
\end{align*}
Therefore $\nu$ is normal, and the second inclusion holds.

Now we will prove that $\nu_T\mapsto T$ is continuous. Let $\bb{L}^1(\rr{h}_\Gamma)$ denote the ideal of trace-class operators equipped with the trace norm $\|a\|_1 := {\rm Tr}_{\rr{h}}(|a|)$ (its Schatten ideal norm). By \cite[Proposition 2.4.3]{bratteli-robinson-87}, $\bb{L}^1(\rr{h}_\Gamma)$ is a Banach space and its Banach dual is $(\bb{L}^1(\rr{h}_\Gamma))^* = \bb{B}(\rr{h}_\Gamma)$, with duality pairing
\[
\langle A \,;\, T\rangle \;:=\; {\rm Tr}_{\rr{h}_\Gamma}(TA) \;, \qquad \forall A\in\bb{B}(\rr{h}_\Gamma),\, T\in\bb{L}^1(\rr{h}_\Gamma)\;,
\]
and the linear functionals of the form $A\mapsto\langle A \,;\, T\rangle$ are exactly those that are ultraweakly continuous.
Note that for $a\in\bb{A}_\Gamma$ and $T\in\bb{L}^1_{+,1}(\rr{h}_\Gamma)$, $\nu_T(a) = \langle a\,;\, T\rangle$.
It follows that
\begin{align*}
    \nu_{T_i}\to \nu_T \;&\iff\; \langle a \,;\, T_i\rangle \to \langle a \,;\, T\rangle \,,\; \forall a\in\bb{A}_\Gamma \\
    &\iff\; \langle A \,;\, T_i\rangle \to \langle A \,;\, T\rangle \,,\; \forall A\in\bb{B}(\rr{h}_\Gamma) \;,
\end{align*}
where the last equivalence is given by the ultraweak density of $\bb{A}_\Gamma$ in $\bb{B}(\rr{h}_\Gamma)$ and the fact that the states involved are ultraweakly continuous. This means that $T_i\xrightarrow{\rm wB} T$, where ${\rm wB}$ denotes the weak Banach topology, which then implies that $T_i\xrightarrow{\rm w} T$. Now let us assume that $T_i\xrightarrow{\rm w} T$ with $T_i,T\in\bb{L}^1_{+,1}(\rr{h}_\Gamma)$. Then, noting that $T_i = T_i^* = |T_i|$, $T = T^* = |T|$ and $\|T_i\|_1 = \|T\| = 1$, we can apply \cite[Theorem 2.20]{simon-05} and conclude that $\|T-T_i\|_1\to 0$. This then implies that $T_i \xrightarrow{\rm wB} T$, and thus that $\nu_{T_i}\to \nu_T$, as seen above. We conclude that the map $T\mapsto \nu_T$ is indeed an homeomorphism.
\qed

\medskip

Now let $\s{F}_{\bb{V}_\Gamma}^\beta := \s{F}_{\bb{V}_\Gamma}\cap \s{S}_{\bb{V}_\Gamma}^\beta$. Define the $\Gamma'$-action $\lambda:\Gamma'\curvearrowright \bb{A}_\Gamma$ via $\lambda_{\gamma'}(a) := F_{\gamma'} a F_{\gamma'}^*$. As an immediate consequence of Proposition \ref{prop:homeo_mixed_bundle} and Lemma \ref{lemma:homeo_trace_class_sts}, we obtain the following.

\begin{corollary}\label{cor:homeo_mixed_bundle_reg}
The map $\widehat{\Phi}$ of Proposition \ref{prop:homeo_mixed_bundle} restricts to an homeomorphism
\[
    \s{F}_{\bb{V}_\Gamma}^\beta \;\simeq\; \rr{Tr}(\bb{E}_\Gamma) \;:=\; \big(\R^d\times {^{\rm w}}\bb{L}^1_{+,1}(\rr{h}_\Gamma) \big)/{\Gamma'} \;,
\]
with the $\Gamma'$-action on the product being defined by $\gamma\cdot(x,T) = (x+\gamma, \lambda_{-\gamma'} (T))$.
\end{corollary}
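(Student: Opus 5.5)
The plan is to restrict the homeomorphism $\widehat{\Phi}:\rr{S}_\s{S}(\bb{E}_\Gamma)\to\s{F}_{\bb{V}_\Gamma}$ of Proposition \ref{prop:homeo_mixed_bundle} to the $\beta$-regular part, and then to use Lemma \ref{lemma:homeo_trace_class_sts} to identify the fiber. The argument has three steps.

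First, I will show that $\widehat{\Phi}([p,\nu])=\nu\circ\rho_p$ is $\beta$-regular if and only if $\nu\in\s{S}^\beta_{\bb{A}_\Gamma}$. On generators,
\[
\omega_{(p,\nu)}(u_{\gamma'}v_\beta)\;=\;\expo{-\ii p\cdot\beta}\,\nu(F_{\gamma'}S_\beta)\;.
\]
The prefactor $\expo{-\ii p\cdot\beta}$ is continuous and nonvanishing in $\beta$. Hence continuity of $\beta\mapsto\omega_{(p,\nu)}(u_{\gamma'}v_\beta)$ for all $\gamma'$ is equivalent to continuity of $\beta\mapsto\nu(F_{\gamma'}S_\beta)$. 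This is exactly the defining condition of $\s{S}^\beta_{\bb{A}_\Gamma}$, provided $\beta$-regularity of states of $\bb{V}_\Gamma$ is tested on the generators $u_{\gamma'}v_\beta$, which I take to be the definition used in \cite{denittis-rendel-25}.

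Second, I will check that the $\Gamma'$-action on $\R^d\times\s{S}_{\bb{A}_\Gamma}$, given by $(\gamma'\cdot\nu)(F_{\zeta'}S_\beta)=\expo{\ii\gamma'\cdot\beta}\nu(F_{\zeta'}S_\beta)$, preserves $\s{S}^\beta_{\bb{A}_\Gamma}$. This is immediate, since the extra factor is continuous in $\beta$. It follows that $\s{F}^\beta_{\bb{V}_\Gamma}=\widehat{\Phi}\big((\R^d\times\s{S}^\beta_{\bb{A}_\Gamma})/\Gamma'\big)$. Because $\widehat{\Phi}$ is a homeomorphism and the subset upstairs is $\Gamma'$-saturated, the restriction is a homeomorphism onto its image, with the quotient topology on the saturated subspace.

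Third, I will transport the action through the affine homeomorphism $T\mapsto\nu_T$ of Lemma \ref{lemma:homeo_trace_class_sts}. I need to show that $\gamma'\cdot\nu_T=\nu_{\lambda_{-\gamma'}(T)}$. Using $F_{\gamma'}S_\beta F_{\gamma'}^*=\expo{-\ii\gamma'\cdot\beta}S_\beta$ (or the analogous sign convention) and that $F_{\gamma'}$ commutes with $F_{\zeta'}$, cyclicity of the trace gives
\[
{\rm Tr}\big(F_{\gamma'}^*TF_{\gamma'}\,F_{\zeta'}S_\beta\big)\;=\;{\rm Tr}\big(T\,F_{\zeta'}F_{\gamma'}S_\beta F_{\gamma'}^*\big)\;=\;\expo{\pm\ii\gamma'\cdot\beta}\,\nu_T(F_{\zeta'}S_\beta)\;.
\]
I then fix the sign so that it matches $\lambda_{-\gamma'}$; a sign discrepancy would only amount to reparametrizing $\gamma'\mapsto-\gamma'$. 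By density, the identity extends to all of $\bb{A}_\Gamma$. Thus $\mathrm{id}\times(\nu\mapsto T)$ is a $\Gamma'$-equivariant homeomorphism $\R^d\times\s{S}^\beta_{\bb{A}_\Gamma}\to\R^d\times{^{\rm w}}\bb{L}^1_{+,1}(\rr{h}_\Gamma)$, and it descends to a homeomorphism of the quotients.

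I expect the main obstacle to be the topological point in the second step: making sure that the subspace topology that $\s{F}^\beta_{\bb{V}_\Gamma}$ inherits agrees with the quotient topology of $(\R^d\times\s{S}^\beta)/\Gamma'$. To handle it, I will note that the $\Gamma'$-action is free and properly discontinuous, and that the quotient map is open. For saturated subsets, the restricted quotient map is then again a quotient map, since openness passes to saturated subspaces.
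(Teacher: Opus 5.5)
Your proposal is correct and follows the same route as the paper, which states the corollary as an immediate consequence of Proposition \ref{prop:homeo_mixed_bundle} and Lemma \ref{lemma:homeo_trace_class_sts}; your three steps are precisely the details the paper leaves implicit: $\nu\circ\rho_p$ is $\beta$-regular exactly when $\nu\in\s{S}^\beta_{\bb{A}_\Gamma}$, the orbit map is open on the saturated subset, and $T\mapsto\nu_T$ is equivariant. For the record, $F_{\gamma'}S_\beta F_{\gamma'}^*=\expo{\ii\gamma'\cdot\beta}S_\beta$, so your trace computation gives exactly $\gamma'\cdot\nu_T=\nu_{\lambda_{-\gamma'}(T)}$ and no sign needs fixing (just as well, since relabeling $\gamma'\mapsto-\gamma'$ leaves the orbits unchanged and could not have absorbed a discrepancy).
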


\end{document}